\theoremstyle{definition}
\newtheorem{example}{Example}
\newtheorem{definition}{Definition}
\theoremstyle{plain}
\newtheorem{theorem}{Theorem}
\newtheorem{lemma}{Lemma}
\newtheorem{proposition}{Proposition}
\theoremstyle{remark}
\newtheorem{remark}{Remark}
\def\blfootnote{\xdef\@thefnmark{}\@footnotetext}
\theoremstyle{definition}
\def\N{\mathbb{N}}
\def\p{\mathbb{P}}
\def\E{\mathbb{E}}
\def\R{\mathbb{R}}
\def\d{\mathrm{d}}
\def\id{\mathds{1}}
\title{Choquet rank-dependent utility with an exogenous unambiguous source}
\author{
Zachary Van Oosten\thanks{Department of Statistics and Actuarial Science, University of Waterloo,  Canada. \Letter~{\scriptsize\url{zjvanoos@uwaterloo.ca}}} \and Ruodu Wang\thanks{Department of Statistics and Actuarial Science, University of Waterloo,  Canada. \Letter~{\scriptsize\url{wang@uwaterloo.ca}}}}
\date{\today}
\begin{document}
	\maketitle
	\begin{abstract}
We axiomatize the Choquet rank–dependent utility model within a Savage framework with an exogenous source of pure risk. This model is a decision model under ambiguity, serving as a conceptual generalization of the Choquet expected utility model.
The model unifies risk and ambiguity and reduces to the rank-dependent utility for pure risks. Our axiomatization uses two main axioms for biseparable preferences, along with some regularity axioms. A benefit of this axiomatization is that the fairly weak regularity axioms guarantee the existence of matching probabilities. Further, we discuss ambiguity attitudes for the CRDU model. We characterize these attitudes by properties of the associated matching probabilities and show that the supermodularity of the matching probability provides a robust representation. Finally, we show that under an additional Property, this model has a different representation using act-dependent distortion functions.

\textbf{Keywords}: Ambiguity, probabilistic sophistication, ambiguity attitudes.
	\end{abstract}

\noindent\rule{\textwidth}{0.5pt}

\section{Introduction}\label{sec:intro}

The expected utility (EU) model, axiomatized by \cite{vNM47}, serves as the theoretical foundation for much of today's theory of decision-making. The EU model has two well-known limitations. First, the EU model is formulated within the pure-risk setting, where acts are given by lotteries over outcomes. This setting is unrealistic in many situations, as a decision maker (DM) is often not fully aware of the distributions of the acts they are comparing.
Second, even in the pure-risk setting, the EU model fails to account for both conceptual and empirical considerations. For example, the EU model cannot account for the empirically observed Allais paradox \citep{A53}.  

To address concerns associated with the pure-risk setting, the seminal work of \cite{S54} redefined acts as functions from states of the world to outcomes, rather than as lotteries over outcomes. This setting, commonly referred to as the Savage setting, does not assume that the likelihoods of events are well-specified a priori, and thus, there is no objective way to view the acts as lotteries. By imposing a set of axioms on preferences over these acts, \cite{S54} was able to elicit a subjective probability, thereby converting from the Savage setting into the pure-risk setting. Furthermore, the Savage axioms yield the EU model in the elicited pure-risk setting. This model is now known as the subjective expected utility (SEU) model.

{The SEU model remained the prevailing decision model in the Savage setting until the critique of \cite{E61}, which showed that if the DM has asymmetric information regarding the likelihoods of the events, the sure-thing principle (an axiom used in \cite{S54} to derive the SEU model) can fail to be behaviorally feasible. In fact, this critique called into question the behavioral feasibility of probabilistically sophisticated \citep{MS92} decision models:  models that start by representing uncertainty with a subjective probability and then evaluate acts solely based on the lotteries they induce over outcomes.%\footnote{Probabilistic sophistication in \cite{MS92} also has a consideration of stochastic dominance, which we omit here.} 
This foundational critique led to the development of what are now known as ambiguity models.} Some prominent ambiguity models include the Choquet expected utility (CEU) model of \cite{S89}, the maxmin expected utility (MEU) model of \cite{GS89}, the smooth ambiguity model of \cite{KMM05}, and the variational preferences model of \cite{MMR06}.  {All of these models share a common theme: in the absence of ambiguity, i.e., when a probability can describe uncertainty and the DM can operate in the pure-risk setting, they return to the SEU model.}

To address the empirical violations of the EU model within the pure-risk setting, the work of \cite{Q82} introduced the rank-dependent utility (RDU) model. This model was a generalization of the EU model that could account for the Allais paradox. Further advancements came with the prospect theory \citep{KT79} and the cumulative prospect theory \citep{TK92}, a generalization of the RDU model. This model was more behaviorally realistic than the RDU model, as it accounted for framing effects and asymmetries in the valuation of gains and losses. For a comprehensive treatment of the prospect theory, see \cite{W10}.

 We are now at the intersection of two research streams: ambiguity models, which reduce to the SEU model in the case of pure risk, and rank-dependent models, which are in the pure-risk setting and do not account for ambiguity. However, some work has been done to bridge this gap. For example, the framework of distributionally robust optimization extends the MEU model by allowing for the reduction to general pure-risk models in the case of no ambiguity; see \cite{DKW19} and \cite{FLW24} for approaches along this line through the lens of risk measures. A different generalization, and the main focus of this paper, is a conceptual extension of the CEU model that reduces to the RDU model under pure risk. This framework, originally introduced in \cite{TF95} as the two-stage approach, retains the functional form of the CEU model but incorporates an additional distortion outside the capacity to capture how the DM distorts probabilities in the case of pure risk. In this paper, we will refer to this as the Choquet rank-dependent utility (CRDU) model. 

As the CRDU model is an instance of the biseperable model (\citealp{GM01}), we provide a biseperable axiomatization of the CRDU model. Afterwards, we present a novel analysis of the ambiguity attitudes associated with this model. Recently, \cite{B25} proposed an axiomatization of the CRDU model. Our approach begins from a fundamentally different modeling assumption. The motivation for this choice comes from the empirical literature validating the behavioral feasibility of the CRDU model; see, for example, \cite{AVW05}, \cite{DKW16}, and \cite{ABKL21}. The empirical literature frequently employs the concept of a matching probability (\citealp{BHSW18}) to determine the value of the capacity under the distortion for different events. To properly define matching probabilities, we need to have an objective notion of pure risk. Furthermore, our axiomatization relies on regularity axioms that allow for the independent verification of the existence of matching probabilities.

In this paper, we adopt the Savage setting, where acts are viewed as mappings from states of the world to real-valued outcomes.\footnote{The setting of \cite{S54} is, in general, not limited to real-valued acts; see \cite{AW20}. We take the perspective of real-valued acts in this paper.} {To define matching probabilities, we will assume the DM has an exogenous source of unambiguous events. This source will be equipped with a probability measure identifying the likelihoods of these events.  Our main regularity axiom is that the decision maker is probabilistically sophisticated regarding acts that depend on events from the exogenous source. Models following this axiom has been studied recently by \cite{BBLW23} in general decision models and by \cite{SVW24} in risk assessment.}

We begin in Section \ref{sec:def} by defining notation and reviewing the mathematical tools that will be used throughout the paper. In Section \ref{sec:set}, we will define the Savage setting with pure risk and discuss how the CEU and CRDU models fit into this setting. In Section \ref{sec:ax1}, we present the axioms for our axiomatization of the CRDU model and derive a representation in Theorem \ref{theo:good}. To properly discuss how the ambiguity component of the preferences is incorporated into this representation, we relate an aspect of the representation to the concept of matching probabilities. Section \ref{sec:prop} focuses on analyzing various ambiguity attitudes. We start in Section \ref{sec:ambiguityaversion} by defining comparative ambiguity attitudes and deriving their representation through matching probabilities. Afterwards, we compare our definition to the comparative ambiguity attitudes of \cite{E04}. We finish the section by defining absolute ambiguity attitudes once a normalization for ambiguity neutrality is fixed. In Section \ref{sec:uncertaintyaversion}, we discuss the propensity for diversification. We compare the consequences of this preference in the CRDU model with those in the restricted CEU model. Finally, in Section \ref{sec:robust}, we show how some of the ideas from Section \ref{sec:uncertaintyaversion} give rise to a robust representation, analogous to the way that propensity for diversification connects the CEU and MEU models. In Section \ref{sec:alt},  {we adopt the setting of \cite{SVW24} and assume that the DM has a reference probability measure on the whole space whose restriction agrees with the probability measure on the unambiguous source. This allows us to define and characterize a natural form of ambiguity aversion that we call reference ambiguity aversion.} We finish the section by providing an alternative representation for the CRDU model using act-dependent distortion functions in Theorem \ref{theo:2}, offering a different view on decision-making under ambiguity: both pure risks and ambiguous acts are distorted. We finish by characterizing an ambiguity model that reduces to the dual model of \cite{Y87} for pure risks,
which has a concise axiomatic formulation. 
We conclude the paper in Section \ref{sec:con}. For completeness, the proofs of all results, along with related additional results, are provided in the appendices.

\section{Preliminaries}

\label{sec:Prelim}

\subsection{Definitions and notation}

\label{sec:def}

Throughout the paper, we consider a measurable space $(\Omega, \mathcal{F})$ representing the future states of the world. A \emph{capacity}  is a function $\nu:\mathcal F\mapsto \R$ that satisfies  $\nu(\varnothing)=0$, $\nu(\Omega)=1$, and  $\nu(A)\leq \nu(B)$ for all $A,B\in \mathcal{F}$ with $A\subseteq B.$ In some literature (e.g., \citealp{MM04}), a capacity is not assumed to satisfy $\nu(\Omega)=1$, but this difference is not essential. A capacity $\nu$ is said to be: \emph{upward continuous} if for all increasing sequences $(A_n)_{n\in\N}\subseteq\mathcal{F}$ it holds that $\lim_{n\to\infty}\nu(A_n)=\nu\left(\bigcup_{n=1}^\infty A\right)$, \emph{downward continuous} if for all decreasing sequences $(A_n)_{n\in \N}\subseteq\mathcal{F}$ it holds that $\lim_{n\to\infty}\nu(A_n)=\nu\left(\bigcap_{n=1}^\infty A_n\right)$, and \emph{continuous} if it is both upward continuous and downward continuous.
A capacity is called a \emph{probability measure} if it is countably additive. 
%for all pairwise disjoint sequences $(A_n)_{n\in \N}\subseteq\mathcal{F}$, $\nu(\bigcup_{n=1}^\infty A_n)=\sum_{k=1}^\infty \nu(A_n)$. 
We denote the set of probability measures  on $(\Omega,\mathcal F)$ by $\mathcal{M}_1$. %It is common in decision theory literature to consider probability charges, that is, capacities $\nu$ satisfying the weaker condition that $\nu(A\cup B)=\nu(A)+\nu(B)$ for all disjoint $A,B\in \mathcal{F}$.
% A 
% \emph{probability charge} is a finitely additive capacity. 
% A  probability charge $\nu$ is a probability measure if and only if $\nu$ is continuous. As most of our results deal with continuous capacities, we have no need for probability charges.

 {As in the Savage setting,} an \emph{act} is defined as a bounded real-valued measurable function. We will denote the space of acts by $\mathcal{X}$. Constant acts in $\mathcal X$ are identified with constants in $\R$. 
Given $X\in \mathcal{X}$, we write $\|X\|=\sup_{\omega\in \Omega}|X(\omega)|$. A set $\mathcal{S}\subseteq \mathcal{X}$ is said to be \emph{bounded} if $\sup_{X\in \mathcal{S}}\|X\|<\infty$ and \emph{B-closed} if, for any bounded $(X_n)_{n\in N}\subseteq\mathcal{S}$ that pointwise converges to some  $X\in \mathcal{X}$, it holds that $X\in \mathcal{S}$.

For $X,Y\in \mathcal{X}$,  $X\geq Y$ means $X(\omega)\geq Y(\omega)$ for all $\omega \in \Omega$. Given $\mu\in \mathcal{M}_1$ and $X,Y\in \mathcal{X}$, we say that $X\geq_{\mu}^{\mathrm{as}} Y$ if $\mu(\{X\geq Y\})=1$, and $X\geq^{\mathrm{sd}}_{\mu}Y$ if $\mu(X>x)\geq \mu(Y>x)$ for all $x\in \mathbb{R}$. Clearly,
$
X\ge Y ~\Longrightarrow~ X \ge ^{\mathrm{as}}_{\mu} Y ~\Longrightarrow~ X \ge ^{\mathrm{sd}}_{\mu}  Y.$ 

For a continuous capacity $\nu$, the \emph{core of $\nu$}, denoted by $\mathfrak{C}(\nu)$, is defined as 
$$\mathfrak{C}(\nu)=\left\{\mu\in \mathcal{M}_{1}: \mu(A)\geq \nu(A)\text{ for all }A\in \mathcal{F}\right\}.$$   
The following are common properties of a continuous capacity $\nu$.
\begin{itemize}\setlength{\itemsep}{1pt}
    \item[] Supermodularity: For all $A,B\in \mathcal{F}$, $\nu(A)+\nu(B)\leq \nu(A\cup B)+\nu(A\cap B)$.
    \item[] Submodularity: For all $A,B\in \mathcal{F}$, $\nu(A)+\nu(B)\geq \nu(A\cup B)+\nu(A\cap B)$.
    \item[] Balanced:  $\mathfrak{C}(\nu)\neq \varnothing$.
    \item[] Exactness: For all $A\in \mathcal{F}$, $\nu(A)=\min_{\mu\in \mathfrak{C}(\nu)}\mu(A)$.
\end{itemize}
For more on the core and the previous properties of capacities, see \cite{D94} and \cite{MM04}. 

Given a capacity $\nu$, the \emph{Choquet integral} with respect to $\nu$ is given by
$$\int X\d\nu=\int_0^\infty \nu(X>x)\d x+\int_{-\infty}^0\left(\nu(X>x)-1\right)\d x,~~X\in \mathcal{X}.$$ 
We say that $X,Y\in \mathcal{X}$ are \emph{comonotonic} if for all $\omega,\tilde{\omega}\in \Omega,$
$(X(\omega)-X(\tilde\omega))(Y(\omega)-Y(\tilde{\omega}))\geq 0.$ 
As shown by \cite{S86},   a function $V:\mathcal{X}\to\mathbb{R}$ can be represented as $V(X)=\int_{\Omega}X \d\nu$ for some capacity $\nu$ if and only if $V$ is monotone ($V(X)\geq V(Y)$ whenever $X\geq Y$), comonotonic additive ($V(X+Y)=V(X)+V(Y)$ for comonotonic $X,Y$), and normalized ($V(1)=1$).  An increasing (always in the non-strict sense) function $g:[0,1]\to[0,1]$ with $g(0)=0$ and $g(1)=1$ is called a \emph{distortion function}. {A \emph{von Neumann--Morgenstern (vNM) utility function} is a non-constant increasing function $u:\mathbb{R}\to\mathbb{R}$.}

\subsection{The Savage setting with pure risk and Choquet rank-dependent utility}

\label{sec:set}

We will work in
a refined Savage setting, as described below. Fix a sub-$\sigma$-algebra $\mathcal{G}$, {which is to be interpreted as an exogenous source of unambiguous events. As the events in $\mathcal{G}$ are unambiguous, the DM is able to assign likelihoods to them consistent with a probability measure. Let $\mathbb{P}:\mathcal{G}\to[0,1]$ denote the probability measure assigning these likelihoods. As common in risk management, we will assume that $\mathbb{P}$ is an atomless; that is, $\{\mathbb{P}(B) : B \in \mathcal{G},~B \subseteq A\}=[0, \mathbb{P}(A)] $ for every $A \in \mathcal{G}$.} We refer to this setting as the \emph{Savage setting with pure risk}. We call $\mathbb{P}$ the \emph{exogenous probability measure}.

A preference relation $\succsim$ representing the DM's preferences is a total preorder on $\mathcal{X}$; that is, (i) for all $X,Y\in \mathcal{X}$, $X\succsim Y$ or $Y\succsim X$; (ii) for all $X,Y,Z\in \mathcal{X}$, if $X\succsim Y$ and $Y\succsim Z$ then $X\succsim Z$.
We write $\mathcal{X}(\mathcal{G})$ for the subspace of $\mathcal{X}$ consisting of $\mathcal{G}$-measurable acts. Under our setting, acts $X\in \mathcal{X}(\mathcal{G})$ exhibit no distributional ambiguity, and the preference relation $\succsim$ restricted to $\mathcal{X}(\mathcal{G})$ should be determined solely by the distributions of these acts under the exogenous probability measure $\mathbb{P}$. This brings us to our first axiom of risk conformity.\footnote{Risk conformity was called differently  by \cite{SVW24} as  partial law invariance.} 
\begin{enumerate}[(SRM)]\setlength{\itemsep}{1pt}
    \item[(RC)] Risk conformity: For all $X,Y\in \mathcal{X}(\mathcal{G})$ satisfying $X=_{\mathbb{P}}^{\mathrm{sd}}Y$, $X\simeq Y$.
\end{enumerate}
If the preference relation $\succsim$ satisfies Axiom (RC), then the restriction of $\succsim$ to $\mathcal{X}(\mathcal{G})$, called the \emph{risk preferences} of $\succsim$, corresponds to a preference relation over lotteries. To evaluate general acts, i.e., acts not necessarily in $\mathcal{X}(\mathcal{G})$, the DM must consider both their risk preferences and their perception and attitudes regarding ambiguity.  
In the discussions later, the \emph{risk component} of $\succsim$   refers to the influence of the risk preferences on the evaluation of general acts, and  the \emph{ambiguity component} of $\succsim$   refers to the influence of ambiguity on the evaluation of general acts.

A capacity $\nu$
is \emph{risk conforming} if  $\nu|_{\mathcal{G}}=\mathbb{P}$.
Risk conformity means that, for unambiguous events, the capacity behaves like the exogenous probability measure $\p$. 
When $\nu$  {describes the DM's subjective evaluation regarding the likelihoods of events, risk conformity is essential, as the DM believes that $\mathbb{P}$ correctly specifies the likelihood of unambiguous events.}

The CEU model (\citealp{S89}) was originally formulated in the setting of \cite{AA63}. We now explain the {relevant}  formulation of the CEU model in the Savage setting with pure risk.\footnote{The choice below to take $u$ to be continuous and strictly increasing, as well as $\nu$ to be continuous, is not required in general, but this choice is adopted for reasons that will become apparent in the next section.} The DM has a continuous strictly increasing vNM utility $u$ which describes their evaluation of outcomes. To describe the ambiguity component of the preferences, the DM has a continuous risk-confirming capacity $\nu$ that can be interpreted as their  {subjective assessment of the likelihood of events}. The preference relation $\succsim$ of the DM in the CEU model is given by
\begin{equation}
    \label{eq:sw}
    X\succsim Y\iff \int u(X)\d\nu\geq \int u(Y)\d\nu,~~X,Y\in \mathcal{X}.
\end{equation}
It is well known that, given a preference relation $\succsim$ satisfying \eqref{eq:sw}, $u$ is unique up to positive affine transformations and $\nu$ is unique. 
{Since $\nu$ in \eqref{eq:sw} is risk conforming}, it is straightforward to show that a CEU preference relation  $\succsim$ satisfies Axiom (RC) and the risk preferences for $\succsim$ are given by the EU model. As noted in the introduction, the majority of ambiguity models yield the EU model for risk preferences. A key conceptual assumption of the CEU model is that if the DM believes a probability measure $\mu\in\mathcal{M}_1$ uniquely describes their {subjective assessment of the likelihood of events}, i.e., they perceive no ambiguity, then they take $\nu=\mu$ in \eqref{eq:sw} and their preference relation $\succsim$ is given by the SEU model. The work of \cite{SW92} provides an axiomatization of CEU preference relations, but rather than assuming that 
$\mathbb{P}$ is given exogenously they incorporate its derivation into the axiomatization. 

 {A conceptual generalization of the CEU model is the CRDU model, first considered by \cite{TF95}. We now explain the relevant formulation of the CRDU model in the Savage setting with pure risk. 
Once again, the DM has a continuous strictly increasing vNM utility $u$ and a continuous risk-confirming capacity $\nu$, which serve the same purpose as the CEU model. However, unlike the CEU model, the DM has an additional element describing the risk component of her preferences. This element is given by a continuous strictly increasing distortion function $g$, which describes how the DM distorts probabilities. The preference relation $\succsim$ of the DM in the CRDU model is given by
\begin{equation}
    \label{eq:rep}
    X\succsim Y\iff \int u(X)\d(g\circ\nu)\geq \int u(Y)\d(g\circ\nu),~~X,Y\in \mathcal{X}.
\end{equation}
Since any preference relation satisfying \eqref{eq:rep} is biseparable (\citealp{GM01}), and $\nu$ must be risk conforming, it is straightforward to show that for any such preference relation, the utility function $u$ is unique up to positive affine transformations, $g$ is unique, and $\nu$ is unique.}

 {Given a CRDU preference relation $\succsim$ in \eqref{eq:rep}, since $\nu$ is risk conforming, we have
 $$X\succsim Y\iff \int_\Omega u(X)\d(g\circ \mathbb{P})\geq \int_\Omega u(Y)\d(g\circ \mathbb{P}), ~~X,Y\in \mathcal{X}(\mathcal{G}).$$ 
Therefore, $\succsim$ satisfies Axiom (RC) and the risk preferences are given by the RDU model of \cite{Q82}. This explains the terminology ``Choquet rank-dependent utility'', as the model has a similar form as the CEU model but allows for risk preferences given by the RDU model. Furthermore, if there exists $x\in [0,1]$ such that $g(x)\neq x$, then the risk preferences are not an instance of the EU model.}  

An axiomatization for the CRDU model was recently proposed by \cite{B25} in a setting similar to ours; however, they do not assume that $\mathbb{P}$ is given exogenously. In the next section, we will provide an axiomatization of the CRDU model fundamentally different from \cite{B25}. This axiomatization will consist of the regularity Axiom (RC), three additional regularity axioms that are also natural in the Savage setting with pure risk, and two additional axioms for biseparable preferences. The primary motivation for this axiomatization is that the regularity axioms enable us to show the existence of matching probabilities (\cite{B25} has this as an axiom) and define the necessary tools for establishing the two axioms for biseparable preferences. This highlights the suitability of the Savage setting with pure risk for deriving axiomatizations of non-traditional models of ambiguity within a Savage-like setting. The proofs of the main results, along with several supplementary results, are provided in the Appendix.

\section{Axioms and representation}

\label{sec:ax1}

We now state and discuss several additional axioms for preference relations $\succsim$. Afterwards, in Theorem \ref{theo:good}, we will present the main representation.  
\begin{enumerate}[(SRM)]
\setlength{\itemsep}{1pt}
    \item[(M)] Monotonicity: For all $X,Y\in \mathcal{X}$ satisfying $X\geq Y$, $X\succsim Y$.
\end{enumerate}
Axiom (M) is standard for preference relations in the Savage setting. If the act $X$ pays more than the act $Y$ for every state of the world, then it is natural that the DM would prefer $X$ to $Y$.
\begin{enumerate}[(SRM)]\setlength{\itemsep}{1pt}
    \item[(SRM)] Strict risk monotonicity: For all $X,Y\in \mathcal{X}(\mathcal{G})$ satisfying $X>_{\mathbb{P}}^{\mathrm{as}} Y$, $X\succ Y$.
\end{enumerate}
Axiom (SRM) reflects the fact that the DM believes that the probability measure $\mathbb{P}$ is well-specified on $\mathcal{G}$. Thus, a stronger form of monotonicity, using distributions, holds for acts in $\mathcal{X}(\mathcal{G})$. 

\begin{enumerate}[(SRM)]\setlength{\itemsep}{1pt}
    \item[(C)] Continuity:  For all $X\in \mathcal{X}$, $\{Y\in \mathcal{X}:X\succsim Y\}$ and $\{Y\in \mathcal{X}:Y\succsim X\}$ are B-closed.
\end{enumerate}
Axiom (C) appeared in \cite{CM95} and \cite{CL07} in the pure-risk setting. The use of this axiom in a Savage setting is common in the literature on risk measures, where it is referred to as the Lebesgue property; see \cite{FS16}.

% \begin{remark}
%     The axioms (C) and (RCo) are technical properties needed for the representation. These properties are related to the Lebesgue property used in the risk measure literature; see \cite{FS16}. The use of the term \textit{Lebesgue} is motivated by the fact that the level-sets of the Lebesgue integral are closed under bounded limits, which is a consequence of Lebesgue’s dominated convergence theorem.
% \end{remark}

The axioms discussed thus far concern only certain notions of regularity and essentially impose no specific functional forms on the preference relations. The axioms governing specific forms of the risk preferences will be discussed later in the section. {The following proposition confirms that a CRDU preference relation satisfies the above regularity axioms.} 

\begin{proposition}\label{prop:consis}
    Let $\succsim$ be a CRDU preference relation. Then $\succsim$ satisfies Axioms (RC), (M), (SRM), and (C).
\end{proposition}

{An additional confirmation of the suitability of our regularity axioms is that they ensure the existence of matching probabilities, an object we now recall.} For notational simplicity, we will write $A\succsim B$ for $A,B\in \mathcal{F}$ to mean $\id_A\succsim\id_B$, where $\id_A$ denotes the binary act which yields $1$ if $\omega\in A$ and $0$ otherwise.

\begin{definition}
Given a preference relation $\succsim$, a function $\nu:\mathcal{F}\to [0,1]$ is a \emph{$\succsim$-matching probability} if for all $A\in \mathcal{F}$, there exists $R_A\in \mathcal{G}$ such that $\nu(A)=\mathbb{P}(R_A)$ and $A\simeq R_A$.    
\end{definition}

Intuitively, a $\succsim$-matching probability uses the restriction of the preference relation $\succsim$ on $\mathcal{F}$ to produce a description of ambiguity on the level of events. The definition formalizes the idea that, for each event $A \in \mathcal{F}$, the preference relation $\succsim$ is used to identify a unambiguous probability $\nu(A)$ such that a bet paying \$1 if $A$ occurs (and \$0 otherwise) and a unambiguous bet that pays \$1 with probability $\nu(A)$ (and \$0 otherwise) are equally preferred. Our definition of a $\succsim$-matching probability only works in the Savage setting with pure risk, as $\mathcal{G}$ needs to contain unambiguous events to have the aforementioned interpretation. 

\begin{proposition}
    \label{proposition:match}
    For every preference relation $\succsim$ satisfying Axioms (RC), (M), (SRM), and (C), there exists a unique $\succsim$-matching probability $\nu$. Furthermore, $\nu$ is a continuous risk-conforming capacity, and 
    \begin{equation}
    \label{eq:setrep}
        A\succsim B\iff \nu(A)\geq \nu(B),~~A,B\in \mathcal{F}.
    \end{equation}
\end{proposition}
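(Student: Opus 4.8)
The goal is to construct a $\succsim$-matching probability $\nu$, show it is unique, show it is a continuous risk-conforming capacity, and establish the ordinal representation \eqref{eq:setrep} of $\succsim$ on events. The natural first step is to analyze the restriction of $\succsim$ to binary acts $\id_A$, $A \in \mathcal{F}$. For a fixed $A$, I would look at the family of non-ambiguous "benchmark" bets $\{\id_R : R \in \mathcal{G}\}$. Since $(\Omega, \mathcal{G}, \mathbb{P}|_{\mathcal{G}})$ is atomless, for any $p \in [0,1]$ there is some $R \in \mathcal{G}$ with $\mathbb{P}(R) = p$; by Axiom (RC) the preference among such $\id_R$ depends only on $p$, and by Axiom (SRM) it is strictly increasing in $p$. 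Thus the map $p \mapsto [\id_R]_{\succsim}$ (for $\mathbb{P}(R)=p$) gives a strictly monotone scale on $[0,1]$, and I would define $\nu(A)$ to be the unique $p \in [0,1]$ such that $\id_A \simeq \id_R$ for any $R \in \mathcal{G}$ with $\mathbb{P}(R) = p$, provided such a $p$ exists.

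The crux is therefore an \emph{existence} argument: for each $A \in \mathcal{F}$ there is a $p$ with $\id_A \simeq \id_R$, $\mathbb{P}(R)=p$. This is where Axiom (C) (B-closedness / Lebesgue property) does the work. By Axiom (M), $\id_\Omega = 1 \succsim \id_A \succsim 0 = \id_\varnothing$, so $\id_A$ lies between the top and bottom benchmark bets. I would consider the sets $P^+ = \{p \in [0,1] : \id_R \succsim \id_A \text{ for } \mathbb{P}(R)=p\}$ and $P^- = \{p : \id_A \succsim \id_R\}$; by totality $P^+ \cup P^- = [0,1]$, and by (SRM) they are an up-set and a down-set respectively. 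To get a common point I would pick a sequence $R_n \in \mathcal{G}$ with $\mathbb{P}(R_n) \to p^* := \inf P^+$ and $R_n$ chosen nested (possible by atomlessness) so that $\id_{R_n}$ converges pointwise; then B-closedness of $\{Y : Y \succsim \id_A\}$ and $\{Y : \id_A \succsim Y\}$ forces the limit act, which has distribution Bernoulli$(p^*)$ up to (RC), to be indifferent to $\id_A$. This gives $\nu(A) = p^*$ and simultaneously the point $R_A$ in the definition. Uniqueness of $\nu(A)$ is then immediate from the strictness in (SRM): two distinct values $p \ne q$ would give $\id_{R_p} \succ \id_{R_q}$ yet both $\simeq \id_A$, a contradiction.

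Once $\nu$ is well-defined, the representation \eqref{eq:setrep} follows directly: $A \succsim B$ iff $\id_{R_A} \succsim \id_{R_B}$ (by transitivity through $\id_A \simeq \id_{R_A}$, $\id_B \simeq \id_{R_B}$) iff $\mathbb{P}(R_A) \ge \mathbb{P}(R_B)$ (by (RC) and (SRM)) iff $\nu(A) \ge \nu(B)$. The capacity properties are then read off: $\nu(\varnothing) = 0$ and $\nu(\Omega) = 1$ from $\id_\varnothing = 0$, $\id_\Omega = 1$ and the normalization that constants match themselves; monotonicity $\nu(A) \le \nu(B)$ for $A \subseteq B$ from Axiom (M) applied to $\id_A \le \id_B$ plus \eqref{eq:setrep}. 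Risk conformity $\nu|_{\mathcal{G}} = \mathbb{P}|_{\mathcal{G}}$ holds because for $A \in \mathcal{G}$ we may take $R_A = A$. For continuity, given $A_n \uparrow A$ I would use monotonicity of $\nu$ to get that $\nu(A_n)$ increases to some limit $\ell \le \nu(A)$, then apply Axiom (C) to the pointwise-convergent sequence $\id_{A_n} \to \id_A$ together with the matching benchmarks to squeeze $\ell = \nu(A)$; the downward case is symmetric. I expect the main obstacle to be the careful handling of the existence/limiting argument — in particular arranging the approximating $\mathcal{G}$-events $R_n$ to be nested so that pointwise convergence holds and Axiom (C) can be invoked, and checking that the limiting act genuinely has the Bernoulli$(p^*)$ distribution so that (RC) applies to identify it with $\id_{R_{A}}$ for an appropriate $R_A \in \mathcal{G}$.
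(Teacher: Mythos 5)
Your proposal is correct and follows essentially the same route as the paper: the paper defines $\mathcal{L}(A)=\{\mathbb{P}(R):A\succsim R,\,R\in\mathcal{G}\}$ and $\mathcal{U}(A)=\{\mathbb{P}(R):R\succsim A,\,R\in\mathcal{G}\}$ (your $P^-$ and $P^+$), shows the sup of the former equals the inf of the latter and is attained by exactly the nested-approximation argument you describe (realized concretely via a $\mathcal{G}$-measurable uniform $U$ and the events $\{U\le \mathbb{P}(R_n)\}$, so that pointwise convergence lets Axiom (C) apply), and derives uniqueness, \eqref{eq:setrep}, risk conformity, and continuity just as you outline. The technical obstacle you flag at the end is precisely the step the paper handles with this uniform-variable construction, so your plan is sound as stated.
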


Let $\succsim$ satisfy Axioms (RC), (M), (SRM), and (C). By Proposition \ref{proposition:match}, the $\succsim$-matching probability is risk conforming. Therefore, the $\succsim$-matching probability {correctly identifies the likelihood of unambiguous events specified by  $\mathbb{P}$}. The following shows that the $\succsim$-matching probability can be used to isolate the ambiguity component of a CRDU preference relation, which is conceptually similar to \citet[Theorem 3.1]{DKW16}.

\begin{proposition}\label{prop:matchCheck}
    Let $\succsim$ be a CRDU preference relation in \eqref{eq:rep}, then $\nu$ is the unique $\succsim$-matching probability.
\end{proposition}

% \begin{example}
%     Let $\succsim$ be a CEU preference relation. We claim that the unique $\succsim$-matching probability (existence given by Proposition \ref{proposition:match}) is given by the unique $\nu$ in \eqref{eq:sw}. To see this, let $A\in \mathcal{F}$ and $R_A\in \mathcal{G}$ satisfy $A\simeq R_A$. Therefore,
%     $$u(1)\nu(A)+u(0)(1-\nu(A))=\int_\Omega u(\id_A)\d\nu=\int_\Omega u(\id_{R_A})\d\nu=u(1)\mathbb{P}(R_A)+u(0)(1-\mathbb{P}(R_A)).$$
%     Since $u$ is strictly increasing, $u(1)>u(0)$ and  $\nu(A)=\mathbb{P}(R_A)$. Therefore $\nu$ is the unique $\succsim$-matching probability. This provides evidence that the $\succsim$-matching probability can be used as a tool to isolate the ambiguity component of the preferences $\succsim$, as the capacity $\nu$ in \eqref{eq:sw} traditionally has this interpretation.
% \end{example}

{The remaining two axioms are motivated by the fact that a CRDU preference relation is biseperable. These axioms make reference to certain constructions guaranteed to exist from our regularity axioms. For example, both make reference to certainty equivalents.} To show existence, as well as a continuity property, we have the following proposition. 

\begin{proposition}
    \label{prop:cert}
    Given a preference relation $\succsim$ satisfying Axioms (RC), (M), (SRM), and (C), for each $X\in \mathcal{X}$, there exists a unique $c_X\in \mathbb{R}$ such that $X\simeq c_X$. Moreover, if $(X_n)_{n\in \N}\subseteq \mathcal{X}$ is a bounded sequence converging pointwise to $X\in \mathcal{X}$, $\lim_{n\to\infty}c_{X_n}=c_X$.
\end{proposition}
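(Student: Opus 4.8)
The plan is to establish existence and uniqueness of the certainty equivalent $c_X$ first, and then the continuity statement. For existence, fix $X\in\mathcal X$ and let $m=\inf_{\omega}X(\omega)$ and $M=\sup_\omega X(\omega)$, both finite since $X$ is bounded. By Axiom (M), $M\succsim X\succsim m$ (identifying constants with constant acts). Consider the set $L=\{t\in[m,M]:X\succsim t\}$; it contains $m$ and, since $\succsim$ is a total preorder, its complement in $[m,M]$ is $U=\{t\in[m,M]:t\succsim X\}$ with $M\in U$ (when $X\succsim M$ as well we may take $c_X=M$, and symmetrically for $m$). The key point is that constant acts are totally ordered by Axiom (SRM): for constants $s<t$ in $\R$ we have $t>^{\mathrm{as}}_{\mathbb P}s$ as $\mathcal G$-measurable acts, hence $t\succ s$; so $\succsim$ is strictly increasing along the constants. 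Therefore $L$ is a down-set and $U$ an up-set in $[m,M]$. Now I would use Axiom (C): the sequence of constants is bounded and converges pointwise, so $L=\{t:X\succsim t\}\cap[m,M]$ and $U=\{t:t\succsim X\}\cap[m,M]$ are both B-closed, hence closed in $[m,M]$. Two nonempty closed sets whose union is the connected interval $[m,M]$ must intersect, giving $c_X\in L\cap U$, i.e. $X\simeq c_X$. Uniqueness is immediate from strict monotonicity of $\succsim$ along constants.

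For the continuity claim, let $(X_n)$ be bounded by some $K$ and converge pointwise to $X$; then $|c_{X_n}|\le K$ for all $n$ by Axiom (M) applied to $-K\le X_n\le K$, so $(c_{X_n})$ is a bounded real sequence. It suffices to show every convergent subsequence has limit $c_X$; so pass to a subsequence with $c_{X_n}\to c$ for some $c\in\R$. The strategy is to show $c\simeq X$, which forces $c=c_X$ by uniqueness. To do this I want to upgrade pointwise convergence of $X_n$ together with convergence of the scalars $c_{X_n}$ into a statement I can feed to Axiom (C). Concretely, for any $\varepsilon>0$, eventually $c_{X_n}\ge c-\varepsilon$, so the constant act $c-\varepsilon$ satisfies $X_n\simeq c_{X_n}\succsim c-\varepsilon$ for large $n$; since $X_n\to X$ pointwise and the $X_n$ are bounded, the set $\{Y:Y\succsim c-\varepsilon\}$ being B-closed gives $X\succsim c-\varepsilon$. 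Letting $\varepsilon\downarrow 0$ along a sequence and using that $\{Y:Y\succsim X\}$ is B-closed (applied to the constant sequence $c-\varepsilon\to c$, noting $c-\varepsilon\preceq X$ would contradict... ) — more cleanly: from $X\succsim c-\varepsilon$ for all $\varepsilon>0$ and B-closedness of $\{Y:Y\succsim c\}$... wait, I need the constants on the correct side. The clean version: $X\succsim c-\varepsilon$ for every $\varepsilon>0$; since the constants $c-\varepsilon$ converge to $c$ as $\varepsilon\downarrow 0$ and $\{Z : X\succsim Z\}$ is B-closed is the wrong direction, instead use that $\{Z:Z\succsim c'\}$ is B-closed together with monotonicity of $\succsim$ along constants to conclude $X\succsim c$. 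Symmetrically, using $c_{X_n}\le c+\varepsilon$ eventually, one gets $c+\varepsilon\succsim X$ for all $\varepsilon>0$, hence $c\succsim X$. Thus $X\simeq c$ and $c=c_X$.

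The main obstacle I anticipate is the second part: carefully converting ``$c_{X_n}\to c$ and $X_n\to X$ pointwise'' into the hypotheses of Axiom (C), since (C) is stated for B-closedness of upper and lower contour sets of a \emph{fixed} act, whereas here both the act and the comparison constant vary. The fix, as sketched, is to interpose fixed constant acts $c\mp\varepsilon$: for fixed $\varepsilon$ the tail of $(X_n)$ all lie in $\{Y:Y\succsim c-\varepsilon\}$ (resp.\ $\{Y:c+\varepsilon\succsim Y\}$), which is B-closed, so the pointwise limit $X$ lies there too; then let $\varepsilon\to 0$ using monotonicity and one more application of B-closedness with the constant sequence $c\mp\varepsilon\to c$. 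One should also double-check at the start that $\mathcal X(\mathcal G)$ contains all constant acts (it does, as $\mathcal G$ is a $\sigma$-algebra) so that (SRM) legitimately applies to strictly ordered constants, and that Proposition \ref{proposition:match} is not actually needed here — only Axioms (M), (SRM), (C) are used, with (RC) implicit in the ambient setting.
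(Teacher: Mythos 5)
Your proof is correct and follows essentially the same route as the paper's: existence via the nonempty, B-closed (by Axiom (C)) lower and upper contour sets along the constants together with the strict ordering of constants supplied by (SRM), and continuity by trapping $c_{X_n}$ between fixed constant acts and passing to the limit with B-closedness (the paper uses the envelopes $Z_n=\sup_{m\ge n}X_m$ where you use subsequences and $\varepsilon$-shifted constants, but the idea is the same). Your hesitation about the ``direction'' in the final $\varepsilon\downarrow 0$ step is unfounded: $\{Z\in\mathcal X: X\succsim Z\}$ is precisely one of the two sets that Axiom (C) declares B-closed, so applying it to the constants $c-1/n\to c$ directly yields $X\succsim c$ (or, even more simply, note $c_X\succsim c-\varepsilon$ gives $c_X\ge c-\varepsilon$ for every $\varepsilon>0$ by strict monotonicity along constants).
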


The following axiom, or variations of it, has often been used for the axiomatization of the RDU model. In what follows, we use $\mathcal{G}_{\mathrm{cf}}$ to denote the set $\{R\in \mathcal{G}:\mathbb{P}(R)=1/2\}$ of ``coin flip" events. Given $A\in \mathcal{F}$ and $x,y\in \mathbb{R}$, $xAy$ denotes the binary act which yields $x$ if $\omega\in A$ and $y$ otherwise.

\begin{enumerate}[(SRM)]\setlength{\itemsep}{1pt}
    \item[(RS)] Risk symmetry: For all $R\in\mathcal{G}_{\mathrm{cf}}$ and $x,y,z,z'\in \mathbb{R}$ satisfying $x\geq y$ and $z,z'\in [y,x]$ $$ c_{xRz}Rc_{z'Ry}\simeq c_{xRz'}Rc_{zRy}.$$
\end{enumerate}
It is not hard to show that Axiom (RS) is a weaker version of the weak independence axiom from \cite{Q82}. The formulation of Axiom (RS) is similar to Axiom (A6) in \cite{GMMS01}, where various interpretations of axioms of this type and their historical development are discussed.

To define the final axiom, we need to define a binary operation on $\mathcal{X}$ that can be viewed as a subjective mixture operation for acts. As a preliminary step, following the approach of \cite{GMMS03}, we first introduce a subjective mixture for outcomes. The following proposition demonstrates that, once a non-trivial risky event is fixed, our regularity axioms guarantee both the existence and uniqueness of a subjective mixture for outcomes associated with that event.

\begin{proposition}
    \label{prop:existance}
    Let $\succsim$ satisfy Axioms (RC), (M), (SRM), and (C). Given $R\in \mathcal{G}$ satisfying $\mathbb{P}(A)\in (0,1)$ and $x,y\in \mathbb{R}$ with $x\geq y$, there exists a unique $z\in [y,x]$ such that
    $xRy\simeq c_{xRz}Rc_{zRy}.$
\end{proposition}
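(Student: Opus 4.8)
The plan is to fix $R \in \mathcal{G}$ with $\mathbb{P}(R) \in (0,1)$ and $x \geq y$, and to study the map $\varphi : [y,x] \to \mathbb{R}$ defined by $\varphi(z) = c_{c_{xRz}Rc_{zRy}}$, the certainty equivalent of the two-stage act $c_{xRz}Rc_{zRy}$. By Proposition \ref{prop:cert} each certainty equivalent is well-defined and unique, so $\varphi$ is a genuine function; the statement to prove is equivalent to showing that $\varphi(z) = c_{xRy}$ has a unique solution $z \in [y,x]$, or more directly that there is a unique $z$ with $c_{xRz}Rc_{zRy} \simeq xRy$. The strategy is the standard one for such fixed-point/equalizer arguments: establish (i) continuity of $\varphi$, (ii) monotonicity of $\varphi$ in $z$ (for strict uniqueness, strict monotonicity, which will use Axiom (SRM)), and (iii) that the target value $c_{xRy}$ lies between $\varphi(y)$ and $\varphi(x)$; then apply the intermediate value theorem and strict monotonicity.

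For (iii), the boundary cases should be easy: when $z = y$ we have $c_{xRz} = c_{xRy}$ and $c_{zRy} = c_{yRy} = y$, so $c_{xRz}Rc_{zRy} = c_{xRy}Ry$; when $z = x$ we get $xRc_{xRy}$. One then needs $c_{xRy}Ry \precsim xRy \precsim xRc_{xRy}$, which follows from Axiom (M) once we observe $c_{xRy} \le x$ and $c_{xRy} \ge y$ (these come from monotonicity: $x = xRx \succsim xRy \succsim yRy = y$, hence the certainty equivalent lies in $[y,x]$). For continuity of $\varphi$, the idea is to use the sequential continuity of certainty equivalents from Proposition \ref{prop:cert} twice: if $z_n \to z$, then $c_{xRz_n} \to c_{xRz}$ and $c_{z_nRy} \to c_{z_nRy}$ (inner certainty equivalents, applied to the bounded pointwise-convergent sequences of binary acts $xRz_n$ and $z_nRy$), and then $c_{xRz_n}Rc_{z_nRy} \to c_{xRz}Rc_{zRy}$ pointwise and boundedly, so a third application of Proposition \ref{prop:cert} gives $\varphi(z_n) \to \varphi(z)$.

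The substantive step is (ii), strict monotonicity of $\varphi$. Here is where Axiom (SRM) enters: if $y \le z < z' \le x$, then $xRz' \geq xRz$ and $z'Ry \geq zRy$ pointwise (with at least one of these inequalities strict on $R$ or $R^c$ respectively, which has positive $\mathbb{P}$-measure since $\mathbb{P}(R) \in (0,1)$), so by Axiom (M) and the uniqueness of certainty equivalents, $c_{xRz'} \geq c_{xRz}$ and $c_{z'Ry} \geq c_{zRy}$, with at least one strict. Consequently $c_{xRz'}Rc_{z'Ry} \geq c_{xRz}Rc_{zRy}$ pointwise, and the inequality is $\mathbb{P}$-almost-surely strict on a set of positive probability (either $R$ or $R^c$). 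Since these are $\mathcal{G}$-measurable acts and the strict inequality holds $\mathbb{P}$-almost-surely in the stronger sense, Axiom (SRM) gives $c_{xRz'}Rc_{z'Ry} \succ c_{xRz}Rc_{zRy}$, hence $\varphi(z') > \varphi(z)$. Combining (i)–(iii): $\varphi$ is continuous and strictly increasing on $[y,x]$ with $\varphi(y) \le c_{xRy} \le \varphi(x)$, so there is exactly one $z$ with $\varphi(z) = c_{xRy}$, equivalently $xRy \simeq c_{xRz}Rc_{zRy}$.

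I expect the main obstacle to be the careful handling of the strict-monotonicity step: one must verify that the pointwise inequality between the two compound binary acts is strict precisely on an event in $\mathcal{G}$ of positive $\mathbb{P}$-measure (distinguishing whether it is $c_{xRz}$ versus $c_{xRz'}$ that moves, or $c_{zRy}$ versus $c_{z'Ry}$, or both), so that Axiom (SRM) — which requires the strict $\geq^{\mathrm{as}}_{\mathbb{P}}$ relation — actually applies. A minor subtlety to record is the degenerate possibility that $x = y$, in which case the interval $[y,x]$ is a single point and the claim is trivial; and one should note the typo that the hypothesis "$\mathbb{P}(A) \in (0,1)$" should read "$\mathbb{P}(R) \in (0,1)$."
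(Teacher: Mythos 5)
Your proposal is correct and follows essentially the same route as the paper: define $\varphi(z)=c_{c_{xRz}Rc_{zRy}}$, get continuity from Proposition \ref{prop:cert}, strict monotonicity from Axiom (SRM) (noting, as you carefully do, that the compound acts differ strictly exactly on $R$ or $R^c$, each of positive $\mathbb{P}$-measure), bracket $c_{xRy}$ between the endpoint values, and conclude by the intermediate value theorem with uniqueness from strict monotonicity. The only cosmetic difference is that the paper obtains strict endpoint inequalities via (SRM) where you use weak ones via (M), which suffices.
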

If the preference relation $\succsim$ satisfies Axioms (RC), (M), (SRM), and (C), then, 
given $x,y\in \mathbb{R}$, as $x\geq y$ or $y\geq x$, we define $(1/2)x\oplus(1/2)y$ as the unique $z$ given in Proposition \ref{prop:existance} with $R\in \mathcal{G}_{\mathrm{cf}}$, following \cite{GMMS03}. Clearly, by Axiom (RC), this binary operation is independent of which specific set $R\in \mathcal{G}_{\mathrm{cf}}$ we choose. For two acts $X,Y\in \mathcal{X}$, we define the subjective mixture of acts $(1/2)X\oplus(1/2)Y$ by
$$\left[(1/2)X\oplus(1/2)Y\right](\omega)=(1/2)X(\omega)\oplus (1/2)Y(\omega).$$
Proposition \ref{prop:meas} in Appendix \ref{ap:ax1} confirms that $(1/2)X\oplus (1/2)Y\in \mathcal{X}$; that is, $(1/2)X\oplus (1/2)Y$ is bounded and measurable. 

We are now ready to present the final axiom, which is the comonotonic independence axiom of \cite{S89}, but for the subjective mixture of acts. The following axiom assumes that the preference relation $\succsim$ already satisfies Axioms (RC), (M), (SRM), and (C), since the existence of the subjective mixture operation for acts relies on these axioms.

\begin{enumerate}[(SRM)]\setlength{\itemsep}{1pt}
    \item[(SCI)] Subjective comonotonic independence: For all  pairwise comonotonic $X,Y,Z\in \mathcal{X}$,
    $$X\simeq Y\Longrightarrow (1/2)X \oplus (1/2) Z\simeq (1/2) Y \oplus (1/2)Z.$$
\end{enumerate} 
{The following result shows that the above axioms fully characterize CRDU preference relations.}

\begin{theorem}
    \label{theo:good}
    The preference relation $\succsim$ satisfies Axioms (RC), (M), (SRM), (C), (RS), and (SCI) if and only if  $\succsim$ is a CRDU preference relation, that is,
    there exist a continuous strictly increasing vNM utility function $u$, a strictly increasing continuous distortion function $g$, and a continuous risk-conforming capacity $\nu$ such that
 $$    X\succsim Y\iff \int u(X)\d(g\circ\nu)\geq \int u(Y)\d(g\circ\nu),~~X,Y\in \mathcal{X}.$$
\end{theorem}

Given a CRDU preference relation $\succsim$, as the strictly increasing continuous distortion function $g$ in \eqref{eq:rep} is uniquely determined, we will henceforth refer to $g$ as the $\succsim$-distortion. Likewise,  the strictly increasing vNM utility function $u$ in \eqref{eq:rep} is unique up to positive affine transformations, and we will henceforth refer to the version of $u$ satisfying $u(0)=0$ and $u(1)=1$ as the $\succsim$-utility.  

In the Savage setting with pure risk, Axiom (RC) ensures that the $\succsim$-distortion, which represents an aspect of the risk component of $\succsim$, can be identified directly from the preference relation $\succsim$. This, in turn, separates it from the $\succsim$-matching probability, which represents the ambiguity component of $\succsim$. {In the general Savage setting, i.e., the setting where $\mathcal{G}$ is not assumed to be an exogenously given unambiguous source, such a separation of the risk and ambiguity components may not be possible.} The next example illustrates this point with a variational preference model. A similar example can be constructed for RDU and omitted here. 

\begin{example}\label{ex:HS}
  Consider the general Savage setting without specifying $\mathcal G$. Suppose that the DM's preference relation $\succsim$ satisfies 
    \begin{equation}
        \label{eq:example}X\succsim Y\iff \inf_{\mu\in \mathcal M_1}\left\{   \int_{\Omega}X \d\mu  + \beta H(\mu|\mathbb{Q})\right\}\ge \inf_{\mu\in \mathcal M_1}\left\{ \int_{\Omega}Y \d\mu   + \beta H(\mu|\mathbb{Q}) \right\},
    \end{equation}
    where {$\mathbb{Q}\in \mathcal{M}_1$ is a reference probability measure fixed by the DM},  $\beta>0$,  
    and  $H(\mu|\mathbb{Q})$ is the Kullback--Leibler divergence (relative entropy) of $\mu$ from $\mathbb{Q}$.\footnote{If $\mu$ is absolutely continuous with respect to $\mathbb{Q}$, then $H(\mu|\mathbb{Q})=\E^\mu[\log (\d \mu/\d \mathbb{Q})] $, where $\d \mu/\d \mathbb{Q}$ is the Radon–Nikodym derivative; otherwise $H(\mu|\mathbb{Q})=\infty$.}
   The model \eqref{eq:example} is a special case of both the multiplier preferences of \cite{HS01} and the variational preferences of \cite{MMR06}. 
Via the Donsker--Varadhan variational formula,  \eqref{eq:example} can be alternatively expressed by 
  \begin{equation}
        \label{eq:example1-2}X\succsim Y\iff
        \int_{\Omega}-\exp(- X /\beta)\d\mathbb{Q}
        \ge  \int_{\Omega}-\exp(- Y /\beta)\d\mathbb{Q}.
    \end{equation}
The model \eqref{eq:example}--\eqref{eq:example1-2} can be explained by two distinct behavioral considerations.
\begin{enumerate}[(a)]
    \item The DM 
    has the expected-value risk preferences, but she is unsure about the accuracy of the reference probability measure $\mathbb{Q}$. As alternative probability measures $\mu\in \mathcal M_1$ are plausible, the DM  
    uses the relative entropy to penalize the deviation of $\mu$ from $\mathbb{Q}$ and then aggregates these evaluations.  
    \item  The DM is confident about the accuracy of the reference probability measure $\mathbb{Q}$, and she has EU risk preferences with an exponential utility function $  u: x\mapsto -\exp(-x /\beta)$.
\end{enumerate}   
Since both cases lead to the same preference relation, the risk and ambiguity components of $\succsim$ cannot be uniquely identified from the DM's choices over $\mathcal X$.  In case (a), $\beta $ is a parameter of ambiguity aversion and in case (b), it is a parameter of risk aversion. 
With the help of the set $\mathcal X(\mathcal G)$  {in our Savage setting with pure risk}, we are able to correctly separate these two components. 
\end{example}

\section{Ambiguity attitudes}

 {As discussed in the previous section, given a CRDU preference relation $\succsim$, the ambiguity component of $\succsim$ is described by the $\succsim$-matching probability.} Therefore, in this section, we study ambiguity attitudes of CRDU preferences through properties of the matching probabilities. 

\label{sec:prop}

\subsection{Comparative and absolute ambiguity attitudes}

\label{sec:ambiguityaversion}

We will first define comparative ambiguity attitudes for CRDU preference relations. This will then allow us to define absolute ambiguity attitudes once a normalization for ambiguity neutrality is provided. Many of the notions presented here will be similar to those in \cite{E04}, but one key difference exists: \citeauthor{E04}'s (\citeyear{E04}) definition of comparative ambiguity attitudes between two preference relations $\succsim_1$ and $\succsim_2$\footnote{\cite{E04} used the term ``uncertainty" in place of our term ``ambiguity".} implies the equivalence of risk preferences between $\succsim_1$ and $\succsim_2$, whereas our notion is independent of risk preferences. 

Given the CRDU preference relations $\succsim_1$ and $\succsim_2$, we say that $\succsim_2$ is \emph{more ambiguity averse than} $\succsim_1$ if 
$$R\succsim_1 A \implies R\succsim_2 A,~~~~~~R\in \mathcal{G}~\text{and}~A\in \mathcal{F}.$$
The interpretation of this definition is quite natural: 
As events $R\in \mathcal G$ have no ambiguity, they can serve as the benchmark for comparing ambiguous events. 
This is similar to the classic notion of comparative risk aversion (\citealp{P64} and \citealp{Y69}), where constant acts serve as the benchmark. 

We have the following proposition characterizing comparative ambiguity attitudes. 
\begin{theorem}
    \label{prop:comAm}
    Let $\succsim_1$ and $\succsim_2$ be   CRDU preference relations with matching probabilities $\nu_1$ and $\nu_2$ respectively. Then, $\succsim_2$ is more ambiguity averse than $\succsim_1$ if and only if 
    $\nu_1(A)\geq \nu_2(A)$ for all $A\in \mathcal{F}$.
\end{theorem}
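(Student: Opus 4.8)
The plan is to exploit the fact that, by definition of the $\succsim_i$-matching probability $\nu_i$, the preference $\succsim_i$ restricted to events is completely captured by $\nu_i$ via \eqref{eq:setrep}; that is, $A\succsim_i B\iff\nu_i(A)\geq\nu_i(B)$ for all $A,B\in\mathcal F$. In particular, for $R\in\mathcal G$ we have $\nu_i(R)=\mathbb P(R)$ by risk conformity (Proposition \ref{proposition:match}), so the comparison $R\succsim_i A$ is equivalent to $\mathbb P(R)\geq\nu_i(A)$. The statement ``$\succsim_2$ is more ambiguity averse than $\succsim_1$'' reads: for all $R\in\mathcal G$ and $A\in\mathcal F$, $\mathbb P(R)\geq\nu_1(A)\implies\mathbb P(R)\geq\nu_2(A)$. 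So the whole proof reduces to showing that this implication, quantified over all $R\in\mathcal G$, is equivalent to $\nu_1(A)\geq\nu_2(A)$ for all $A\in\mathcal F$.

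For the ``if'' direction, assume $\nu_1(A)\geq\nu_2(A)$ for every $A$. If $R\succsim_1 A$, then $\mathbb P(R)=\nu_1(R)\geq\nu_1(A)\geq\nu_2(A)=\nu_2(R')$ where... more cleanly: $\mathbb P(R)\geq\nu_1(A)\geq\nu_2(A)$, and since $\nu_2(R)=\mathbb P(R)$, this says $\nu_2(R)\geq\nu_2(A)$, i.e. $R\succsim_2 A$ by \eqref{eq:setrep}. This direction is immediate.

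For the ``only if'' direction, fix $A\in\mathcal F$ and suppose for contradiction that $\nu_2(A)>\nu_1(A)$. The idea is to choose, using the atomlessness of $(\Omega,\mathcal G,\mathbb P|_{\mathcal G})$, an event $R\in\mathcal G$ with $\mathbb P(R)$ strictly between $\nu_1(A)$ and $\nu_2(A)$ — say $\mathbb P(R)=\tfrac12(\nu_1(A)+\nu_2(A))$, which lies in $[0,1]$ and hence is attained by some $R\in\mathcal G$ since $\{\mathbb P(B):B\in\mathcal G\}=[0,1]$. Then $\mathbb P(R)\geq\nu_1(A)$ gives $R\succsim_1 A$, so by hypothesis $R\succsim_2 A$, which forces $\mathbb P(R)=\nu_2(R)\geq\nu_2(A)$, contradicting $\mathbb P(R)<\nu_2(A)$. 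Hence $\nu_1(A)\geq\nu_2(A)$ for all $A$.

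I do not anticipate a serious obstacle here; the argument is essentially a separation-by-a-rational-probability trick enabled by atomlessness. The one point requiring a little care is making sure \eqref{eq:setrep} and the risk-conformity identity $\nu_i|_{\mathcal G}=\mathbb P|_{\mathcal G}$ are legitimately available — they are, since every CRDU preference relation satisfies Axioms (RC), (M), (SRM), (C), so Proposition \ref{proposition:match} applies and the matching probability $\nu_i$ (which by Theorem \ref{theo:good} is exactly the capacity in the representation \eqref{eq:rep}) is a continuous risk-conforming capacity representing $\succsim_i$ on events. A minor subtlety is that the definition of ``more ambiguity averse'' quantifies $R$ over all of $\mathcal G$, not just over $\mathcal G_{\mathrm{cf}}$, so in the ``only if'' direction we are free to pick $R$ with any prescribed probability level, which is exactly what the atomlessness provides.
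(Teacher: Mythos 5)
Your proposal is correct and rests on the same foundation as the paper's proof, namely the characterization of $\succsim_i$ on events by the matching probability (Proposition \ref{proposition:match} and \eqref{eq:setrep}) together with risk conformity $\nu_i|_{\mathcal G}=\mathbb P|_{\mathcal G}$; the paper's argument works directly with the matching events $R_i\simeq_i A$ while you run the same comparison as a contradiction with an event of intermediate probability, which is only a cosmetic difference.
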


In Theorem \ref{prop:comAm}, the $\succsim_1$-distortion and the $\succsim_2$-distortion may not be the same, and similarly, the two utility functions may not be the same.
This is consistent with the fact that the ambiguity component in the CRDU model is separated from the risk component. 
To get a full comparative implication on all acts, instead of only binary ones, we need to force the risk components of both preference relations to be identical, as shown in the next result. 

\begin{proposition}
    \label{prop:attitudes}
    Let $\succsim_1$ and $\succsim_2$ be CRDU preference relations. Then, the comparative implication  \begin{equation}
        \label{eq:ep}
        X\succsim_1 Y~(X\succ_1 Y) \implies X\succsim_2 Y~(X\succ_2 Y),~~\mbox{ for all  }X\in \mathcal{X}(\mathcal{G})~\text{and}~Y\in \mathcal{X}
    \end{equation} holds if and only if the $\succsim_1$-utility is equal to the $\succsim_2$-utility, the $\succsim_1$-distortion is equal to the $\succsim_2$-distortion, and $\succsim_2$ is more ambiguity averse than $\succsim_1$.
\end{proposition}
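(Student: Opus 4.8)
The plan is to prove both directions using the CRDU representation \eqref{eq:rep}. Write $V_i(X) = \int_\Omega u_i(X)\,\d(g_i\circ\nu_i)$ for the functional representing $\succsim_i$, where $u_i$ is the $\succsim_i$-utility (normalized by $u_i(0)=0$, $u_i(1)=1$), $g_i$ the $\succsim_i$-distortion, and $\nu_i$ the $\succsim_i$-matching probability.

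For the ``if'' direction, assume $u_1=u_2=:u$, $g_1=g_2=:g$, and $\nu_1(A)\ge\nu_2(A)$ for all $A\in\mathcal F$. First I would observe that for $Y\in\mathcal X$, monotonicity of the Choquet integral in the integrating capacity gives $V_2(Y) = \int_\Omega u(Y)\,\d(g\circ\nu_2) \le \int_\Omega u(Y)\,\d(g\circ\nu_1) = V_1(Y)$; here one uses that $g$ is increasing, so $g\circ\nu_2 \le g\circ\nu_1$ pointwise on $\mathcal F$, and that $\mu \le \mu'$ (as set functions) implies $\int Z\,\d\mu \le \int Z\,\d\mu'$ via the distributional formula for the Choquet integral. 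Next, for $X\in\mathcal X(\mathcal G)$, risk conformity of $\nu_1$ and $\nu_2$ gives $\nu_1|_{\mathcal G}=\nu_2|_{\mathcal G}=\mathbb P|_{\mathcal G}$, so $V_1(X) = \int_\Omega u(X)\,\d(g\circ\mathbb P) = V_2(X)$; that is, the two functionals agree on $\mathcal X(\mathcal G)$. Combining: if $X\in\mathcal X(\mathcal G)$, $Y\in\mathcal X$ and $V_1(X)\ge V_1(Y)$, then $V_2(X) = V_1(X) \ge V_1(Y) \ge V_2(Y)$, which is $X\succsim_2 Y$. For the strict statement, if $V_1(X) > V_1(Y)$ then $V_2(X) = V_1(X) > V_1(Y) \ge V_2(Y)$, giving $X\succ_2 Y$.

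For the ``only if'' direction, assume \eqref{eq:ep} holds. Restricting \eqref{eq:ep} to $X,Y\in\mathcal X(\mathcal G)$ shows that the risk preferences of $\succsim_1$ are a subrelation of those of $\succsim_2$; since both are total preorders on $\mathcal X(\mathcal G)$, they must coincide. As the risk preferences are RDU models with data $(u_i,g_i)$ and RDU representations are unique (in the normalization $u_i(0)=0$, $u_i(1)=1$, and $g_i$ a distortion function — uniqueness being part of Theorem \ref{theo:good}), we get $u_1=u_2$ and $g_1=g_2$. Finally, specialize \eqref{eq:ep} to $X=\id_R$ with $R\in\mathcal G$ and $Y=\id_A$ with $A\in\mathcal F$: since $u$ is strictly increasing, $R\succsim_1 A \iff \nu_1(R)\ge\nu_1(A) \iff \mathbb P(R)\ge\nu_1(A)$, and likewise for $\succsim_2$, so $R\succsim_1 A\implies R\succsim_2 A$, which by Theorem \ref{prop:comAm} is exactly ``$\succsim_2$ is more ambiguity averse than $\succsim_1$.'' (Equivalently, once $u_1=u_2$ and $g_1=g_2$ are known, one can reduce \eqref{eq:ep} directly to $\nu_1\ge\nu_2$ by testing on binary acts and invoking strict monotonicity of $g$.)

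The main obstacle is the monotonicity-in-capacity step for the Choquet integral together with the clean bookkeeping of the chain $g\circ\nu_2\le g\circ\nu_1$; this is routine but must be stated carefully since $u(X)$ can take negative values, so one works with the full representation $\int_0^\infty \mu(u(X)>x)\,\d x + \int_{-\infty}^0(\mu(u(X)>x)-1)\,\d x$ and notes each term is monotone in $\mu$. A secondary point requiring care is the appeal to uniqueness of the RDU representation on $\mathcal X(\mathcal G)$: this follows because $(\Omega,\mathcal G,\mathbb P|_{\mathcal G})$ is atomless, so $\mathcal X(\mathcal G)$ is rich enough to pin down $u$ up to positive affine transformation and $g$ uniquely, exactly as asserted in Theorem \ref{theo:good}.
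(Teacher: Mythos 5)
Your proof is correct and follows essentially the same route as the paper's: the identical chain of Choquet-integral inequalities (using risk conformity and monotonicity in the capacity) for the ``if'' direction, and for the ``only if'' direction the same reduction to coincidence of the risk preferences on $\mathcal{X}(\mathcal{G})$ followed by uniqueness of the RDU representation on an atomless space (the paper obtains that uniqueness from Ghirardato--Marinacci (2001, Theorem 11) on biseparable preferences rather than from Theorem \ref{theo:good}, whose uniqueness statement concerns a single preference on all of $\mathcal{X}$). One small point of precision: a total preorder that is a subrelation of another total preorder need not coincide with it (total indifference contains every total preorder), so to conclude that the risk preferences agree you must also invoke the strict part of \eqref{eq:ep} --- if $Y\succ_1 X$ then $Y\succ_2 X$, ruling out $X\simeq_2 Y$ --- which is available to you and is exactly how the paper argues.
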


As  Proposition \ref{prop:attitudes}  demonstrates, our notion of comparative ambiguity attitudes for CRDU preference relations generalizes the comparative ambiguity attitudes of \cite{E04} given by \eqref{eq:ep}.

Our next goal is to define a normalization for ambiguity neutrality.  {Similar to the definition of comparative ambiguity attitudes, we define ambiguity neutrality as a property of the preference relation restricted to events.}

\begin{enumerate}[(SRM)]\setlength{\itemsep}{1pt}
    \item[(AN)] Ambiguity neutrality: For all $A,B,C\in \mathcal{F}$ satisfying $(A\cup B)\cap C=\varnothing$,
    \begin{equation} 
        A\succsim B\iff A\cup C\succsim A\cup C.
    \end{equation}
\end{enumerate}

 {We have the following theorem characterizing Property (AN). However, this characterization only requires the four regularity axioms, which, by Proposition \ref{proposition:match}, guarantee the existence and uniqueness of a matching probability.}

\begin{theorem}\label{th:AN}
    Let $\succsim$ satisfy Axioms (RC), (M), (SRM), and (C). Then $\succsim$ satisfies Property (AN) if and only if the $\succsim$-matching probability $\nu$ is a probability measure.
\end{theorem}

In \cite{E04}, the normalization for ambiguity neutrality was given by preference relations consistent with respect to first-order stochastic dominance for some $\mu\in\mathcal{M}_1$. That is, there exists $\mu\in \mathcal{M}_1$ such that 
\begin{equation}
    \label{eq:FSD}X\geq_{\mu}^{\mathrm{sd}}Y\implies X\succsim Y,~~X,Y\in \mathcal{X}.
\end{equation}
If $\succsim$ satisfies \eqref{eq:FSD}, then $\succsim$ is probabilistically sophisticated with respect to $\mu$, i.e., for all $X,Y\in \mathcal{X}$ satisfying $X=_{\mu}^{\mathrm{sd}}Y$, we have $X\simeq Y$; see \cite{MS92}. The following proposition demonstrates that our normalization for ambiguity neutrality aligns with the one presented in \cite{E04}.

\begin{proposition}\label{prop:FSD}
    Let $\succsim$ be a CRDU preference relation. Then $\succsim$ satisfies Property (AN) if and only if there exists $\mu\in \mathcal{M}_1$ such that \eqref{eq:FSD} holds.
\end{proposition}

Consistency with respect to first-order stochastic dominance, i.e., \eqref{eq:FSD}, does not represent ambiguity neutrality in the general Savage setting. For instance, in the MEU model of \cite{GS89}, there are preference relations that are consistent with respect to first-order stochastic dominance, but do not have a prior set given by a singleton.
Consider, for example, when $Q_{\epsilon}=\{\mu\in \mathcal{M}_1:H(\mu|\mathbb{Q})\leq\epsilon\},$ where $\mathbb{Q}\in \mathcal{M}_1$ is a reference probability measure and $\epsilon>0$. The preference relation $\succsim$ given by
$$X\succsim Y\iff \min_{\mu \in \mathcal{Q}_{\epsilon}}\int_{\Omega} u(X)\d\mu\geq \min_{\mu \in \mathcal{Q}_{\epsilon}}\int_{\Omega} u(Y)\d\mu,~~X,Y\in \mathcal{X}$$
 satisfies \eqref{eq:FSD} with $\mu=\mathbb{Q}$; see \citet[Corollary 1]{S13}. As sets of multiple priors in the MEU model represent both the perception of ambiguity and a strong form of ambiguity aversion, as discussed by \cite{GMM04}, consistency with respect to first-order stochastic dominance is not sufficient to claim ambiguity neutrality.
 Example \ref{ex:HS} gives a similar example in the framework of variational preferences.
 
In our Savage setting with pure risk, Axiom (RC) enables us to pin down the DM's risk preferences. Furthermore, by Theorem \ref{th:AN}, Property (AN) implies that the DM transforms reduces all acts to lotteries {via the matching probability in $\mathcal{M}_1$}, and then uses their risk preferences on these lotteries to make decisions. This is why the example for the MEU model presented above does not capture ambiguity neutrality. In the MEU model, the risk preferences are given by the EU model. Under the situation where the set of priors is not a singleton, but the preferences are consistent with respect to first-order stochastic dominance, the preference relation induced on lotteries is not EU. In the example above, the induced preference relation on lotteries coincides with the so-called coherent entropic risk measure (\citealp{FK11} and \citealp{AJ12}) composed with the utility transformation. Therefore, ambiguity neutrality, in general, should be equivalent to consistency with respect to first-order stochastic dominance and the induced preference relation on lotteries coinciding with the risk preferences. This definition is consistent with \cite{GM02}, where the authors' normalization for ambiguity neutrality corresponds to preference relations from the SEU model.

% if the $\succsim$-matching probability $\nu$ satisfies $\nu\in \mathcal{M}_{1}$.\footnote{Ambiguity neutrality is typically defined by $\nu \in \mathcal{M}_{1,f}$. However, in our setting, as the $\succsim$-matching probability is continuous, this is equivalent.} 

Finally, we say that a CRDU preference relation $\succsim$ is \emph{ambiguity averse (AA)} if it is more ambiguity averse than a CRDU preference relation satisfying Property (AN). A straightforward consequence of Theorem \ref{prop:comAm} and Theorem \ref{th:AN} is that given a CRDU preference relation $\succsim$, Property (AA) is equivalent to the $\succsim$-matching probability being balanced. {The idea of using a balanced matching probability in the CEU model to capture a form of ambiguity aversion was discussed in \cite{CT02}. For alternative forms of ambiguity aversion in the CEU model and their connection to matching probabilities, see \cite{HK25}.}

\subsection{Propensity for diversification}

\label{sec:uncertaintyaversion}

The idea of propensity for diversification in finance has appeared in the setting of risk since the foundational work of \cite{M52}. For decision models that incorporate ambiguity, such a property has been formalized in the classic papers by \cite{GS89} and \cite{S89}.\footnote{In \cite{GS89} and \cite{S89}, diversification seeking is called uncertainty aversion.} We present the definition of this property below, which is also sometimes called the convexity of the preferences.
\begin{enumerate}[(SRM)]\setlength{\itemsep}{1pt}
     \item[(DS)] Diversification seeking: For all $X,Y\in \mathcal{X}$ satisfying $X\simeq Y$ and $\lambda\in [0,1]$,
    $$\lambda X+(1-\lambda)Y\succsim X.$$
\end{enumerate}
If the preference relation of the DM satisfies Property (DS), then the DM prefers mixtures of equally preferred acts. This aligns with the intuitive idea of diversification. We immediately get the following proposition based on the work of \cite{WY19}.

\begin{proposition}
    \label{prop:UA}
    Let $\succsim$ be a CRDU preference relation in \eqref{eq:rep}. Then $\succsim$ satisfies Property (DS) if and only if the  $u$ is concave and the capacity $g\circ \nu$ is supermodular.
\end{proposition}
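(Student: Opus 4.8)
The plan is to reduce Property (DS) for a CRDU preference relation to a statement about the functional $V(X) = \int_\Omega u(X)\,\d(g\circ\nu)$ that represents $\succsim$, and then invoke the characterization of quasi-concavity of Choquet-type functionals from \cite{WY19}. First I would observe that, since $u$ is a strictly increasing continuous utility function and $V(X)\ge V(Y)\iff X\succsim Y$, Property (DS) says exactly that $V$ is quasi-concave: whenever $V(X)=V(Y)$ and $\lambda\in[0,1]$, we have $V(\lambda X+(1-\lambda)Y)\ge V(X)$. A standard argument (using that $V$ is continuous along the segment, monotone, and that certainty equivalents exist by Proposition \ref{prop:cert}) upgrades ``quasi-concave on equal-value pairs'' to full quasi-concavity, i.e.\ $V(\lambda X+(1-\lambda)Y)\ge\min\{V(X),V(Y)\}$ for all $X,Y$; indeed one replaces the act with the smaller value by its certainty equivalent, which does not increase $V$, and then uses monotonicity. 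So Property (DS) is equivalent to quasi-concavity of $X\mapsto\int_\Omega u(X)\,\d(g\circ\nu)$.

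Next I would peel off the utility transformation. Write $W(Z)=\int_\Omega Z\,\d(g\circ\nu)$ for the Choquet integral against the capacity $g\circ\nu$, so that $V(X)=W(u(X))$. The key point is that $u$ is a strictly increasing bijection from $\R$ onto its (open, by Axiom (SRM)-type strict monotonicity and continuity) range, and it acts pointwise; thus $u(\lambda X+(1-\lambda)Y)$ relates to $\lambda u(X)+(1-\lambda)u(Y)$ via Jensen's inequality when $u$ is concave. The claim is that $V$ is quasi-concave if and only if $u$ is concave and $W$ is quasi-concave (equivalently, since $W$ is positively homogeneous, concave). The ``if'' direction: concavity of $u$ gives $u(\lambda X+(1-\lambda)Y)\ge\lambda u(X)+(1-\lambda)u(Y)$ pointwise, monotonicity of the Choquet integral then gives $V(\lambda X+(1-\lambda)Y)\ge W(\lambda u(X)+(1-\lambda)u(Y))\ge\min\{W(u(X)),W(u(Y))\}=\min\{V(X),V(Y)\}$ by concavity of $W$. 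For the ``only if'' direction, one restricts attention to acts in $\X(\G)$ to isolate $u$ (there the functional is RDU under $\mathbb{P}$ with an atomless restriction, and quasi-concavity of RDU forces $u$ concave by a classical argument — take binary acts and vary outcomes), and restricts to comonotonic acts or uses simple acts to isolate that $W$ must be superadditive/quasi-concave. Finally, by the result of \cite{WY19} (or the classical Choquet theory, e.g.\ \cite{S86,D94}), the Choquet integral $W(Z)=\int Z\,\d(g\circ\nu)$ is concave (equivalently quasi-concave, given positive homogeneity) if and only if the capacity $g\circ\nu$ is supermodular. Combining the two equivalences yields the proposition.

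The main obstacle I expect is the ``only if'' direction — disentangling the contributions of $u$ and of $g\circ\nu$ to quasi-concavity of the composite $V$. The concavity of $W$ and concavity of $u$ are not independently visible from $V$ on general acts; one needs to choose test acts cleverly. For $u$: fix a non-ambiguous event $R\in\G$ with $\mathbb{P}(R)=1/2$ and work with binary acts $xRy\in\X(\G)$, where $V(xRy)=g(1/2)u(x)+(1-g(1/2))u(y)$ (or the comonotonic Choquet formula); quasi-concavity of $V$ along segments of such acts, together with $g(1/2)\in(0,1)$ fixed, forces $u$ to be midpoint-concave, hence concave by continuity. Having established $u$ concave, one then needs that $W$ itself is quasi-concave; here the subtlety is that $u$ being strictly concave could in principle ``help'' quasi-concavity even when $W$ is not supermodular, so one must choose acts on which $u$ is locally affine (e.g.\ by taking outcomes in a tiny window and rescaling, using that $W$ is positively homogeneous and translation-equivariant) to transfer quasi-concavity of $V$ back to $W$. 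This localization/rescaling argument, relying on the homogeneity and comonotone-additivity of the Choquet integral together with continuity of $u$, is the technically delicate step; once it is in place, the cited result of \cite{WY19} closes the argument. I would also double-check that ``quasi-concave + positively homogeneous'' indeed upgrades to ``concave'' for $W$, which is standard but worth stating.
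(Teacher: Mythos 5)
Your overall route coincides with the paper's: both arguments reduce Property (DS) to convexity of the preference relation in the sense of \cite{WY19} and then invoke that paper's characterization (the paper cites Corollary 7 there, which already delivers ``$u$ concave and $g\circ\nu$ supermodular'' in a single step for representations of the form $\int u(X)\,\d(g\circ\nu)$ with $u$ strictly increasing; your sketched re-derivation of the disentangling of $u$ from the Choquet functional via binary acts and localization is therefore not needed, and is in any case exactly the content of the cited result).

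There is, however, one step that fails as written: your upgrade from ``quasi-concave on equal-value pairs'' to full quasi-concavity. Replacing the act with the smaller value, say $Y$, by its certainty equivalent $c_Y$ accomplishes nothing, because $\lambda X+(1-\lambda)c_Y$ bears no pointwise relation to $\lambda X+(1-\lambda)Y$, so monotonicity cannot be applied to compare them; replacing $X$ by $c_Y$ fails for the same reason. The correct move, which the paper makes, is to lower the \emph{larger} act by a constant: the map $c\mapsto\int_\Omega u(X-c)\,\d(g\circ\nu)$ is continuous, so the intermediate value theorem yields $c_0\ge 0$ with $X-c_0\simeq Y$, and then $\lambda X+(1-\lambda)Y\ge\lambda(X-c_0)+(1-\lambda)Y\succsim Y$ by Axiom (M) and Property (DS). With that repair, your argument matches the paper's.
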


By Proposition \ref{prop:UA}, for a CRDU preference relation $\succsim$, Property (DS) depends on both the risk component of $\succsim$, i.e., the $\succsim$-utility and the $\succsim$-distortion, and the ambiguity component of $\succsim$, i.e., the $\succsim$-matching probability. Furthermore, since the $\succsim$-matching probability $\nu$ is risk conforming and $(\Omega,\mathcal{G},\mathbb{P})$ is atomless, it follows from the supermodularity of $g\circ\nu$, where $g$ is the $\succsim$-distortion, that $g$ is convex, see \citet[Proposition 4.75]{FS16}. Consequently, the Property (DS) implies a well-known risk attitude, which we now recall.

Given $\mu\in \mathcal{M}_1$ and the acts $X,Y\in \mathcal{X}$, we say that
\emph{$X$ dominates $Y$ in second-order stochastic dominance (SSD) under $\mu$}, written $X\geq^{\mathrm{ssd}}_{\mu}Y$,  if for all increasing concave functions $f:\mathbb{R}\to\mathbb{R}$, it holds that
$\int_{\Omega}f(X)\d\mu\geq \int_{\Omega}f(Y)\d\mu.$ SSD is closely related to strong risk aversion as studied by \cite{RS70}. Note that the following property is defined only for pure-risk acts, that is, acts in $\mathcal{X}(\mathcal{G})$. 
\begin{enumerate}[(SRM)]\setlength{\itemsep}{1pt}
    \item[(SRA)] Strong risk aversion:
    For all $X,Y\in \mathcal{X}(\mathcal{G})$ satisfying $X\geq^{\mathrm{ssd}}_{\mathbb{P}} Y$, $X\succsim Y$.
\end{enumerate}
A direct consequence of \cite{SZ08} is that a CRDU preference relation $\succsim$ satisfies Property (SRA) if and only if the $\succsim$-utility is concave and the $\succsim$-distortion is convex.  Therefore, Property (DS) implies Property (SRA). 

 Property (SRA) is strictly weaker than Property (DS), as Property (DS) is a consequence of both the risk and ambiguity components of the preferences, but Property (SRA) is only a consequence of the risk component of the preferences. Ideally, one would like to identify an ambiguity attitude that, together with Property (SRA), implies Property (DS). The following example shows that this can be done for the specific class of CEU preference relations.
\begin{example}
\label{ex:SAA}
    Let $\succsim$ be a CEU preference relation; that is,
    $$X\succsim Y\iff \int_{\Omega}u(X)\d\nu\geq \int_{\Omega}u(Y)\d\nu,~~X,Y\in \mathcal{X},$$
    where $u$ is the $\succsim$-utility and $\nu$ is the $\succsim$-matching probability. By Proposition \ref{prop:UA}, we know that $\succsim$ satisfies Property (DS) if and only if $u$ is concave and $\nu$ is supermodular. Since the supermodularity of $\nu$ implies that $\nu$ is balanced, we have $\succsim$ satisfies Property (AA). Therefore, the supermodularity of the $\succsim$-matching probability is a stronger form of ambiguity aversion. Furthermore, this stronger form of ambiguity aversion, combined with Property (SRA)  guaranteeing the concavity of $u$,  yields Property (DS).
\end{example}

For general CRDU preferences, the forward direction of the equivalence established in Example \ref{ex:SAA} for CEU preferences still holds, as shown in the following theorem.

\begin{theorem}
\label{prop:main}
    Let $\succsim$ be a CRDU preference relation. If $\succsim$ satisfies Property (SRA) and the $\succsim$-matching probability is supermodular, then $\succsim$ satisfies Property (DS).
\end{theorem}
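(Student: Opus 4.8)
The plan is to reduce Property (DS) to the two conditions in Proposition \ref{prop:UA}: the concavity of the $\succsim$-utility $u$ and the supermodularity of the capacity $g\circ\nu$, where $g$ is the $\succsim$-distortion and $\nu$ is the $\succsim$-matching probability. The concavity of $u$ is immediate: by the discussion following Proposition \ref{prop:UA} (the $\succsim$-version of \citet{SZ08}), Property (SRA) for a CRDU preference relation is equivalent to $u$ being concave and $g$ being convex. So both $u$ concave and $g$ convex are available from the hypothesis, and the supermodularity of $\nu$ is given. It remains only to show that these three facts --- $g$ convex, $g$ increasing with $g(0)=0,g(1)=1$, and $\nu$ supermodular --- imply that $g\circ\nu$ is supermodular.

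First I would recall that supermodularity of $\nu$ means $\nu(A\cup B)+\nu(A\cap B)\ge \nu(A)+\nu(B)$ for all $A,B\in\mathcal F$. Fix $A,B\in\mathcal F$ and write $p=\nu(A\cap B)$, $q=\nu(A)$, $r=\nu(B)$, $s=\nu(A\cup B)$. Monotonicity of $\nu$ gives $p\le q\le s$ and $p\le r\le s$, and supermodularity gives $s+p\ge q+r$, i.e. $s-q\ge r-p$ and $s-r\ge q-p$. The goal is $g(s)+g(p)\ge g(q)+g(r)$. This is a purely one-dimensional inequality about a convex increasing $g$ applied to four reals $p\le q,r\le s$ with $s+p\ge q+r$; it is exactly the statement that a convex function is ``superadditive on nested increments,'' which follows since the secant slope of a convex function is nondecreasing. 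Concretely, writing $q = p + a$, $r = p + b$ with $a,b\ge 0$ and $s \ge p + a + b$: by convexity $g(p+a+b)-g(p+b)\ge g(p+a)-g(p)$ (slopes increase as the base shifts right), and then monotonicity of $g$ upgrades $g(s)\ge g(p+a+b)$, giving $g(s)-g(r)\ge g(q)-g(p)$, which rearranges to the claim. I would present this as a short lemma or inline computation.

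Having established that $g\circ\nu$ is supermodular and $u$ is concave, Proposition \ref{prop:UA} immediately yields Property (DS), completing the proof.

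The main obstacle --- really the only non-bookkeeping point --- is the one-dimensional step showing convexity plus monotonicity of $g$ propagates supermodularity through composition. It is elementary, but one must be careful to use the monotonicity of $g$ (not just convexity) to handle the slack in $s\ge p+a+b$, and to confirm that $p,q,r,s$ all lie in $[0,1]$ so that $g$ is defined and the convexity hypothesis applies; this last point is automatic since $\nu$ is a capacity taking values in $[0,1]$. One should also double-check that the cited equivalences --- Proposition \ref{prop:UA} and the $\succsim$-analogue of \citet{SZ08} giving the concavity of $u$ from Property (SRA) --- are invoked in the correct direction, which they are, since we only need the forward (sufficiency) implications here.
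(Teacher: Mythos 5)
Your proposal is correct and follows essentially the same route as the paper: Property (SRA) yields concavity of $u$ and convexity of $g$ via the Schmidt--Zank characterization, the supermodularity of $g\circ\nu$ is deduced from the convexity and monotonicity of $g$ together with the supermodularity of $\nu$, and Proposition \ref{prop:UA} then delivers Property (DS). The only cosmetic difference is that the paper packages the composition step as a general lattice lemma (Proposition \ref{prop:latt}, proved via a convex-combination inequality) whereas you prove the needed instance inline via the increasing-increments form of convexity; both arguments are valid and equivalent in substance.
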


Therefore, a combination of both strong risk aversion (Property (SRA)) and a stronger version of ambiguity aversion (Property (AA)) will ensure that the DM prefers diversification (Property (DS)). However, the next example demonstrates that the reverse direction of the equivalence established in Example \ref{ex:SAA} for CEU preferences does not generally extend to CRDU preferences. 
In particular, we show the existence of a CRDU preference relation $\succsim$ that satisfies Property (DS) but not Property (AA), which implies that the $\succsim$-matching probability is not supermodular. 

\begin{example}
    \label{ex:counter}
 {Assume there exists a risk-conforming $\mathbb{Q}\in \mathcal{M}_1$} and a sub-$\sigma$-algebra $\mathcal{H}$, independent of $\mathcal{G}$ under $\mathbb{Q}$,\footnote{Independence between $\mathcal{H}$ and $\mathcal{G}$ means $\mathbb{Q}(A\cap B)=\mathbb{Q}(A)\mathbb{Q}(B)$  for all $A\in \mathcal{G}$ and $B\in \mathcal{H}$.} such that there exists $A_0\in \mathcal{H}$ satisfying $\mathbb{Q}(A_0)\in (0,1)$. Given a strictly concave continuous distortion function $h$, we show in Appendix \ref{App:prop} the existence of a continuous risk-conforming capacity $\nu$ for which
    $$h(\mathbb{Q}(A))\geq\nu(A)\geq \mathbb{Q}(A)~\text{for all}~A\in \mathcal{F}, ~~~~\nu(A)=h(\mathbb{Q}(A))~\text{for all}~A\in \mathcal{H},$$
    and $g\circ\nu$ is supermodular, where $g$ is the inverse distortion function of $h$. Fix the strictly increasing concave vNM utility $u$ and define the CRDU preference relation $\succsim$ by
    $$X\succsim Y\iff \int_{\Omega}u(X)\d(g\circ\nu)\geq \int_{\Omega}u(Y)\d(g\circ\nu),~~X,Y\in \mathcal{X}.$$
    As $g\circ\nu$ is supermodular and $u$ is concave, by Proposition \ref{prop:UA}, $\succsim$ satisfies Property (DS). By the strict concavity of $h$,
    $\nu(A_0)=h(\mathbb{Q}(A_0))>\mathbb{Q}(A_0)h(1)+(1-\mathbb{Q}(A_0))h(0)=\mathbb{Q}(A_0).$
    Similarly, as $\mathbb{Q}(A_0^c)\in (0,1)$, $\nu(A_0^c)>\mathbb{Q}(A_0^c)$. Thus, $\nu(A)+\nu(A^c)>1$ and $\succsim$ does not satisfy Property (AA).
\end{example}

The preference relation $\succsim$ constructed in Example \ref{ex:counter} further illustrates the interplay between the risk and ambiguity components of CRDU preferences. Suppose that we fix the generalized probabilistic beliefs of $\succsim$ from Example \ref{ex:counter} (we fix the $\succsim$-matching probability $\nu$), but replace the risk preferences with the expected-value risk preferences, namely,
$$X\succsim' Y\iff \int_{\Omega}X\d\nu\geq \int_{\Omega}Y\d\nu,~~X,Y\in \mathcal{X}.$$
Therefore, it holds that
    $$X\succsim'Y\iff \int_{\Omega}X\d(h\circ\mathbb{Q})\geq \int_{\Omega}Y\d(h\circ\mathbb{Q}),~~X,Y\in \mathcal{X}(\mathcal{H}).$$
Since $h\circ \mathbb{P}$ is submodular (\citealp[Proposition 4.75]{FS16}), for all $X,Y\in \mathcal{X}(\mathcal{H})$, if $X\simeq' Y$, then for all $\lambda\in (0,1)$, $X\succsim'\lambda X+(1-\lambda)Y$. Therefore, for the preference relation $\succsim'$, there is an aversion to diversification locally on $\mathcal X(\mathcal H)$, contrasting the global propensity for diversification of $\succsim$, although both preference relations satisfy Property (SRA). 

%For the original risk preferences given by $\succsim$, there is a global propensity for diversification, i.e., $\succsim$ satisfies Property (DS). Both the preference relations $\succsim$ and $\succsim'$ satisfy Property (SRA) and have the same generalized probabilistic beliefs.

\subsection{The robust interpretation}

\label{sec:robust}

A preference relation $\succsim$ is called a \emph{maxmin expected utility (MEU)} preference relation (\citealp{GS89}) if there exists a collection of risk-conforming probability measures $\mathcal{S}$\footnote{Here, we suppose each $\mu\in \mathcal{S}$ is risk conforming to ensure we remain in the Savage setting with pure risk. That is, since $\mathbb{P}$ is well specified on $\mathcal{G}$, it should hold that $\mu|_{\mathcal{G}}=\mathbb{P}$ for all $\mu\in\mathcal{S}$.}   and a vNM utility $u$ such that
\begin{equation}
    X\succsim Y\iff \min_{\mu\in \mathcal{S}}\int_{\Omega}u(X)\d\mu\geq \min_{\mu\in \mathcal{S}}\int_{\Omega}u(Y)\d\mu,~~X,Y\in \mathcal{X}.
\end{equation}
Intuitively, a DM with an MEU preference relation chooses the set $\mathcal{S}$ to represent their ambiguity perception and then aggregates their risk preferences, given by the EU model, over this set. This form of aggregation is known as the robust aggregation of risk preferences, and it corresponds to a very strong form of ambiguity aversion; see \cite{GMM04}. 

For the CRDU preference relation $\succsim$ in \eqref{eq:rep}, the above idea of robust aggregation of risk preferences means the following property: 
\begin{equation}
        \label{eq:MM}
        X\succsim Y\iff \min_{\mu\in \mathfrak{C}(\nu)}\int_{\Omega}u(X)\d(g\circ\mu)\geq \min_{\mu\in \mathfrak{C}(\nu)}\int_{\Omega}u(Y)\d(g\circ\mu),~~X,Y\in \mathcal{X},
    \end{equation} 
where the set of probability measures $\mathfrak{C}(\nu)$ is understood as reflecting the DM’s ambiguity perception. As the following theorem demonstrates, the supermodularity of the $\succsim$-matching probability enables the robust aggregation of risk preferences in the case of CRDU preference relations.

\begin{theorem}
    \label{prop:maxmin}
    Let $\succsim$ be a CRDU preference relation in \eqref{eq:rep}. Then the robust aggregation \eqref{eq:MM} holds if and only if $\nu$ is supermodular.
\end{theorem}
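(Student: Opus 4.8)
The plan is to reduce the claim to a known fact about Choquet integrals versus maxmin representations over the core, applied pointwise at the level of the utility-transformed acts. Set $W(X) = \int_\Omega u(X)\,\d(g\circ\nu)$, which represents $\succsim$, and let $W'(X) = \min_{\mu\in\mathfrak{C}(\nu)}\int_\Omega u(X)\,\d(g\circ\mu)$. The robust aggregation \eqref{eq:MM} says precisely that $W'$ represents $\succsim$ as well. Since $u$ is a fixed strictly increasing continuous bijection onto its range and both functionals factor through $Z := u(X)$ ranging over $u(\mathcal{X})$, which is an interval of bounded measurable functions, \eqref{eq:MM} holds if and only if $Z \mapsto \int_\Omega Z\,\d(g\circ\nu)$ and $Z \mapsto \min_{\mu\in\mathfrak{C}(\nu)}\int_\Omega Z\,\d(g\circ\mu)$ induce the same ordering on this set of $Z$'s. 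So the essential question is whether the Choquet functional of the capacity $g\circ\nu$ agrees (as an order, hence as a functional up to the usual normalization, since both are normalized and monotone) with $\min_{\mu\in\mathfrak{C}(\nu)}$ of the Choquet functionals of $g\circ\mu$.

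For the ``if'' direction, assume $\nu$ is supermodular. First I would check that $\mathfrak{C}(g\circ\nu) = \{g\circ\mu : \mu\in\mathfrak{C}(\nu)\}$ is not the right object — rather, the correct route is: supermodularity of $\nu$ together with convexity of $g$ (which follows from supermodularity of $g\circ\nu$ via \citet[Proposition 4.75]{FS16}, exactly as argued in Section \ref{sec:uncertaintyaversion}) should give supermodularity of $g\circ\nu$, and a supermodular (convex) capacity is exact, so $\int_\Omega Z\,\d(g\circ\nu) = \min_{\lambda\in\mathfrak{C}(g\circ\nu)}\int_\Omega Z\,\d\lambda$ by the classical Schmeidler/Delbaen result. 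It then remains to show $\min_{\lambda\in\mathfrak{C}(g\circ\nu)}\int Z\,\d\lambda = \min_{\mu\in\mathfrak{C}(\nu)}\int Z\,\d(g\circ\mu)$; the inclusion giving $\leq$ for one side needs $g\circ\mu \in \mathfrak{C}(g\circ\nu)$ for $\mu\in\mathfrak{C}(\nu)$, which holds because $g$ is increasing, and the reverse inequality uses exactness of $g\circ\nu$ to realize the minimizing $\lambda$. Actually, the cleanest path is to invoke directly the result of \cite{WY19} (already cited for Proposition \ref{prop:UA}) or of \cite{SZ08}/\cite{SVW24} characterizing when a distorted-capacity Choquet functional admits a robust representation over the core of the underlying capacity; I would cite whichever of these is stated in a form that yields \eqref{eq:MM} from supermodularity of $\nu$ directly, and fill the short gap.

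For the ``only if'' direction, suppose \eqref{eq:MM} holds; I want to conclude $\nu$ supermodular. The right-hand side of \eqref{eq:MM} defines a functional $W'$ that is a minimum of Choquet functionals, hence is concave and superadditive on comonotone... no — more usefully, $W'$ is concave (as an infimum of affine-in-$Z$... they are not affine, but each $\mu\mapsto\int Z\,\d(g\circ\mu)$ is not linear in $Z$ either). Instead: $W'$ as a function of $X$ is superadditive under the subjective mixture, or — restricting to binary acts $\id_A$ — evaluate both sides on indicators to extract set-function information. On $Z = u(\id_A) = u(1)\id_A + u(0)\id_{A^c}$, the left side is an affine function of $g(\nu(A))$, and the right side is an affine function of $\min_{\mu\in\mathfrak{C}(\nu)} g(\mu(A)) = g(\min_{\mu}\mu(A)) = g(\nu(A))$ using monotonicity of $g$ and exactness — wait, exactness of $\nu$ is not assumed. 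So on indicators both sides may already agree trivially once $\mathfrak{C}(\nu)\neq\varnothing$, meaning the real content lives in non-binary acts. The honest approach: use that $W$ and $W'$ represent the same preference and both satisfy the normalization $W(\text{const})=W'(\text{const})$, so $W = W'$ as functionals; then $W$ inherits from $W' = \min_\mu(\cdots)$ the property of being ``concave'' with respect to the subjective mixture $\oplus$ on comonotone acts in a way that forces $g\circ\nu$ supermodular via Proposition \ref{prop:UA}-type reasoning, and finally peel off $g$ (convex, hence a distorted supermodular capacity that is itself a composition) to get $\nu$ supermodular. The main obstacle is precisely this last algebraic step — showing that supermodularity of $g\circ\nu$ plus the robust structure forces supermodularity of $\nu$ itself, not merely of the composition — since in general $g\circ\nu$ can be supermodular without $\nu$ being so (cf. Example \ref{ex:counter}); the robust representation \eqref{eq:MM} must be used to rule exactly that out, presumably by testing it on acts supported on an independent complement as in Example \ref{ex:counter} where the core structure of $\nu$ and of $g\circ\nu$ decouple.
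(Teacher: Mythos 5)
Both directions of your proposal have genuine gaps. For the ``if'' direction, your route leans on convexity of $g$: you want supermodularity of $g\circ\nu$ and you derive convexity of $g$ ``from supermodularity of $g\circ\nu$,'' which is circular --- and in any case the theorem assumes nothing about $g$ beyond strict monotonicity and continuity, so $g$ need not be convex and $g\circ\nu$ need not be supermodular. Moreover $\mathfrak{C}(g\circ\nu)$ is the wrong object: the minimum in \eqref{eq:MM} runs over the distorted set functions $g\circ\mu$ with $\mu\in\mathfrak{C}(\nu)$, which are not probability measures and hence not elements of any core, so exactness of $g\circ\nu$ does not produce the required minimizer; your closing ``invoke whichever of these is stated in the right form'' defers the entire step to an unspecified citation. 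The paper's argument avoids all of this: by \citet[Theorem 4.7]{MM04}, supermodularity of $\nu$ guarantees that for the \emph{chain} of upper level sets $\{u(X)>x\}_{x\in\R}$ there is a single $\tilde\mu\in\mathfrak{C}(\nu)$ with $\tilde\mu(u(X)>x)=\nu(u(X)>x)$ for all $x$ simultaneously; since $g$ is increasing, $\int_\Omega u(X)\,\d(g\circ\tilde\mu)=\int_\Omega u(X)\,\d(g\circ\nu)\le\int_\Omega u(X)\,\d(g\circ\mu)$ for every $\mu\in\mathfrak{C}(\nu)$, which is exactly \eqref{eq:MM}. Only monotonicity of $g$ is used; no convexity and no core of $g\circ\nu$.

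For the ``only if'' direction you correctly diagnose the obstacle --- supermodularity of $g\circ\nu$ does not yield supermodularity of $\nu$ --- but you never supply the argument that overcomes it; ``presumably by testing it on acts supported on an independent complement'' describes the difficulty rather than resolving it. (Also, \eqref{eq:MM} on indicators is not trivially satisfied: it forces $g(\nu(A))=\min_{\mu\in\mathfrak{C}(\nu)}g(\mu(A))$, i.e.\ exactness of $\nu$, a nontrivial consequence, though still weaker than supermodularity.) The missing idea is the right family of test acts: for $A\subseteq B$ take $c$ with $u(c)=2$ and apply \eqref{eq:MM} to $c\id_A+\id_{B\setminus A}$, whose utility transform is $\id_A+\id_B$; the left side evaluates to $g(\nu(A))+g(\nu(B))$ while the minimizing $\tilde\mu\in\mathfrak{C}(\nu)$ gives $g(\tilde\mu(A))+g(\tilde\mu(B))$. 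Since $g(\tilde\mu(A))\ge g(\nu(A))$ and $g(\tilde\mu(B))\ge g(\nu(B))$ termwise, equality of the sums forces $\tilde\mu(A)=\nu(A)$ and $\tilde\mu(B)=\nu(B)$ by injectivity of $g$, so every two-element chain admits a core element agreeing with $\nu$ on it, and \citet[Theorem 4.7]{MM04} then gives supermodularity of $\nu$. This step again uses only monotonicity and injectivity of $g$; no concavity or convexity argument in the style of Proposition \ref{prop:UA} is needed.
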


It is well-known that Property (DS) connects CEU preference relations to MEU preference relations  (\citealp{S89}). In the case of CRDU preference relations, Property (DS) is not sufficient to guarantee the robust aggregation of risk preferences. To see this, let $\succsim$ be a CRDU preference relation in \eqref{eq:rep} satisfying Property (DS). Then, by Proposition \ref{prop:UA} and \citet[Theorem 4.6]{MM04}, it holds that
\begin{equation}
    \label{eq:OGMM}
    X\succsim Y\iff \min_{\mu\in \mathfrak{C}(g\circ\nu)}\int_{\Omega}u(X)\d\mu\geq \min_{\mu\in \mathfrak{C}(g\circ \nu)}\int_{\Omega}u(Y)\d \mu,~~X,Y\in \mathcal{X}.
\end{equation}
The expression in \eqref{eq:OGMM} represents an aggregation, but it does not correspond to the robust aggregation of the DM's risk preferences. This follows since \eqref{eq:OGMM} corresponds to an aggregation of the EU risk preferences. However, as $\succsim$ is a CRDU preference relation, the risk preferences should be given by the RDU model with distortion $g$. Furthermore, the set $\mathfrak{C}(g\circ \nu)$ in \eqref{eq:OGMM} should not be interpreted as the DM's ambiguity set as the capacity $g\circ \nu$ does not represent the ambiguity component of $\succsim$; rather, the $\succsim$-matching probability $\nu$ represents the ambiguity component of $\succsim$. 

\section{An exogenous global probability measure}

\label{sec:alt}

{In this section, we will assume that the DM has, through some statistical procedure, estimated a reference probability measure $\mathbb{Q}\in \mathcal{M}_1$ to initially describe their uncertainty. We will assume that the reference probability measure agrees with the likelihood evaluations on the unambiguous events; that is, $\mathbb{Q}$ is risk conforming. There are two ways to interpret the relationship between $\mathbb{Q}$ and $\mathbb{P}$. The first is that the DM begins with the exogenous unambiguous events and the probability measure $\mathbb{P}$. From here, when estimating the model $\mathbb{Q}$, the DM ensures that the statistical procedure produces a risk-conforming $\mathbb{Q}$. Example \ref{ex:product} below illustrates this interpretation in a setup used by \cite{KMM05}.}  {For the second way of interpreting this assumption,  the DM can first estimate $\mathbb{Q}$ and then, through introspection or some quantitative statistical procedure, designate $\mathcal{G}$ as being unambiguous. Then, the DM takes $\mathbb{P}=\mathbb{Q}|_{\mathcal{G}}$. This modeling approach was discussed in \cite{SVW24}. 

Regardless of the interpretation, the key motivating takeaway is that the DM has a global probability measure $\mathbb{Q}$, but a priori, only fully trusts the model on $\mathcal{G}$.}

\begin{example}
\label{ex:product}  Let $(\Omega_1,\mathcal{F}_1)$ denote a measurable space where the DM lacks any prior probabilistic information. Let $(\Omega_2,\mathcal{F}_2,\mathbb{P}_2)$ denote a probability space where the DM knows that observations from this space happen according to $\mathbb{P}_2$. For example, the probability space $(\Omega_2,\mathcal{F}_2,\mathbb{P}_2)$ could represent a lottery generating device. Define $(\Omega,\mathcal{F})=(\Omega_1\times \Omega_2,\mathcal{F}_1\otimes\mathcal{F}_2)$. Define $\mathcal{G}=\{\Omega_1\times A:A\in \mathcal{F}_2 \}$ and $\mathbb{P}:\mathcal{G}\to[0,1]$ by $$\mathbb{P}(\Omega_1\times A)=\mathbb{P}_2(A),~~A\in \mathcal{F}_2.$$
    Assume that the DM, through some statistical estimation procedure, has estimated the probability measure $\mathbb{P}_1$ for $(\Omega_1,\mathcal{F}_1)$. Additionally, assume the DM believes that these two sources of randomness are independent of each other, which is consistent with the interpretation that $(\Omega_2,\mathcal{F}_2,\mathbb{P}_2)$ represents a lottery generating device. Therefore, the DM takes as their reference probability measure $\mathbb{Q}\in \mathcal{M}_1$ the unique probability measure satisfying
    $$\mathbb{Q}(A_1\times A_2)=\mathbb{P}_1(A_1)\mathbb{P}_2(A_2),~~A_1\in \mathcal{F}_1~\text{and}~A_2\in \mathcal{F}_2.$$
    It is straightforward to verify that $\mathbb{Q}$ is risk conforming.
\end{example}

Consider the case where the DM estimated the model $\mathbb{Q}$ to inform investment decisions for their financial portfolio, which they plan to liquidate within a week. Under the reference model $\mathbb{Q}$, assume that the weekly losses of a well-established equity $X$ and the weekly losses of a recently launched cryptocurrency $Y$ share the same distribution. Because the equity has been observed for an extended period, the DM perceives the events related to $X$ as unambiguous, i.e., $X\in \mathcal{X}(\mathcal{G})$. When deciding between an investment in $X$ versus an investment in $Y$, the DM is automatically drawn to $X$ because it has the same estimated distribution as $Y$, without any ambiguity. Therefore, the assumption $X\succsim Y$ is reasonable. This discussion motivates the following property.
\begin{enumerate}[(SRM)]\setlength{\itemsep}{1pt}
    \item[(RAA)] Reference ambiguity aversion: For all $X\in \mathcal{X}(\mathcal{G})$ and $Y\in \mathcal{X}$ satisfying $X=_{\mathbb{Q}}^{\mathrm{sd}}Y$, $X\succsim Y$.
\end{enumerate}
{As $\mathbb{Q}$ is risk conforming,} it is straightforward to verify that Property (RAA) is a stronger condition than Axiom (RC). The terminology ``reference ambiguity aversion" comes from the following proposition.

\begin{proposition}
    \label{prop:RAA}
    Let $\succsim$ be a CRDU preference. Then $\succsim$ satisfies Property (RAA) if and only if the $\succsim$-matching probability $\nu$ is set-wise dominated by the reference probability measure $\mathbb{Q}$, i.e., $\mathbb{Q}\in \mathfrak{C}(\nu)$.
\end{proposition}
If the CRDU preference relation $\succsim$ satisfies Property (RAA), then, by Proposition \ref{prop:RAA}, the $\succsim$-matching probability must be balanced.  In general, we have 
$$
\mbox{Property (RAA)} \implies \mbox{Property (AA) and Axiom (RC)} 
$$
To see that the converse implication does not hold,  we can take $\nu$ such that $\mathfrak C (\nu)$ does not contain $\mathbb{Q}$ but contains some risk conforming $\mu\in \mathcal{M}_1$ that differs from $\mathbb{Q}$ outside $\mathcal G$.

We now derive an alternative representation for CRDU preference relations when the following additional property is satisfied.
\begin{enumerate}[(SRM)]\setlength{\itemsep}{1pt}
    \item[(NSC)] Null set consistency: For all $X,Y\in \mathcal{X}$ satisfying $X=^{\mathrm{as}}_{\mathbb{Q}}Y$, $X\simeq Y$.
\end{enumerate}
Property (NSC) encodes the idea that the DM is confident that the reference probability measure $\mathbb{Q}$ correctly identifies sets of probability zero and is indifferent to acts that only differ on a set of probability zero.  Property (NSC) is not implied by Axiom (RC), which does not restrict $\succsim$ outside $\mathcal G$. For $A,B\in \mathcal{F}$, denote by $A\triangle B$ the symmetric set difference, that is, $A\triangle B=(A\setminus B)\cup (B\setminus A)$. For a capacity $\nu$, we say that $\nu$ is \emph{$\mathbb{Q}$-consistent} if for all $A,B\in \mathcal{F}$ with $\mathbb{Q}(A\triangle B)=0$, $\nu(A)=\nu(B)$. 
\begin{proposition}
    \label{prop:nullity}
    Let $\succsim$ be a CRDU preference relation. Then $\succsim$ satisfies Property (NSC) if and only if the $\succsim$-matching probability is $\mathbb{Q}$-consistent.
\end{proposition}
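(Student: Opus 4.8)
The plan is to prove both directions using the characterization of the $\succsim$-matching probability from Proposition \ref{proposition:match} together with the representation \eqref{eq:rep} from Theorem \ref{theo:good}. Throughout, let $u$ be the $\succsim$-utility, $g$ the $\succsim$-distortion, and $\nu$ the $\succsim$-matching probability, so that \eqref{eq:rep} holds.

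For the ``only if'' direction, assume $\succsim$ satisfies Property (NSC). Fix $A, B \in \mathcal{F}$ with $\mathbb{P}(A \triangle B) = 0$; I want to show $\nu(A) = \nu(B)$. Since $\mathbb{P}(A \triangle B) = 0$, the binary acts $\id_A$ and $\id_B$ agree $\mathbb{P}$-almost surely, i.e. $\id_A =^{\mathrm{as}}_{\mathbb{P}} \id_B$. By Property (NSC), $\id_A \simeq \id_B$, i.e. $A \simeq B$. By the representation \eqref{eq:setrep} in Proposition \ref{proposition:match} (equivalently, by evaluating \eqref{eq:rep} on binary acts and using that $u(1) > u(0)$ and $g$ is strictly increasing), $A \simeq B$ is equivalent to $\nu(A) = \nu(B)$. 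Hence $\nu$ is $\mathbb{P}$-consistent.

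For the ``if'' direction, assume the $\succsim$-matching probability $\nu$ is $\mathbb{P}$-consistent. Fix $X, Y \in \mathcal{X}$ with $X =^{\mathrm{as}}_{\mathbb{P}} Y$; I want $X \simeq Y$, and by \eqref{eq:rep} it suffices to show $\int_\Omega u(X) \,\d(g \circ \nu) = \int_\Omega u(Y) \,\d(g \circ \nu)$. Writing out the Choquet integral, this reduces to showing that $(g \circ \nu)(u(X) > t) = (g \circ \nu)(u(Y) > t)$ for (Lebesgue-almost) every $t \in \mathbb{R}$, which in turn follows if $\nu(\{u(X) > t\}) = \nu(\{u(Y) > t\})$ for all $t$. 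Since $u$ is strictly increasing, $\{u(X) > t\} = \{X > u^{-1}(t)\}$ on the range of $u$ (and the sets are $\varnothing$ or $\Omega$ outside it), so it is enough to show $\nu(\{X > s\}) = \nu(\{Y > s\})$ for all $s \in \mathbb{R}$. But $X =^{\mathrm{as}}_{\mathbb{P}} Y$ means $\mathbb{P}(\{X > s\} \triangle \{Y > s\}) = 0$ for every $s$ (the symmetric difference is contained in $\{X \ne Y\}$, a $\mathbb{P}$-null set), and $\mathbb{P}$-consistency of $\nu$ then gives $\nu(\{X > s\}) = \nu(\{Y > s\})$, as desired.

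I do not anticipate a serious obstacle here; the argument is essentially a bookkeeping exercise translating the event-level condition (Property (NSC) applied to binary acts, $\mathbb{P}$-consistency of $\nu$) into the act-level condition via the layer-cake / Choquet-integral decomposition. The only point requiring a little care is the ``if'' direction's handling of the endpoints and constant levels of $u$ when rewriting $\{u(X) > t\}$ as a super-level set of $X$ — one should note that for $t$ outside the closure of the range of $u$ this set is either $\Omega$ or $\varnothing$, on which $\nu$ trivially agrees for $X$ and $Y$, and that the set of problematic $t$ (if $u$ has flat pieces) has measure zero and does not affect the integral. One should also invoke that the relevant maps $t \mapsto \nu(\{X > \cdot\})$ are measurable so the integrals are well defined, which is immediate from monotonicity.
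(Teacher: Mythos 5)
Your proposal is correct and follows essentially the same route as the paper's proof: the forward direction applies (NSC) to the binary acts $\id_A,\id_B$ and reads off $\nu(A)=\nu(B)$ from the representation, and the converse uses $\{X>s\}\triangle\{Y>s\}\subseteq\{X\neq Y\}$ together with $\mathbb{P}$-consistency to equate the Choquet integrals. The only difference is cosmetic: your detour through $u^{-1}$ and the range of $u$ can be avoided by noting directly that $\{u(X)>t\}\triangle\{u(Y)>t\}\subseteq\{X\neq Y\}$, which is how the paper proceeds.
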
 

We now provide an alternative representation of CRDU when Property (NSC) is also assumed. For this purpose, we need the following definitions of distortion families.  In the sequel, we write $\mathcal{D}(X)=\{\mathbb{Q}(X>x):x\in \mathbb{R}\}\subseteq [0,1]$.

\begin{definition} 
\begin{enumerate}
    \item[(i)]
An indexed family $(g_X)_{X\in \mathcal{X}}$ is called a \emph{distortion family} if $g_X$ is distortion function for all $X\in \mathcal{X}$ and the following three properties are satisfied: (a) For all $A,B\in \mathcal{F}$ satisfying $A\subseteq B$, $g_{\id_A}(\alpha)\leq g_{\id_B}(\alpha)$ for all $\alpha \in [0,1];$  (b) For all $X\in \mathcal{X}$, $g_X$ is continuous when restricted to $\mathcal{D}(X)$; (c) For all comonotonic $X,Y\in \mathcal{X}$, $g_X(\alpha)=g_Y(\alpha)$ for all $\alpha\in \mathcal{D}(X)\cap \mathcal{D}(Y).$ 
    \item[(ii)] For  a distortion function  $g$, 
    a distortion family $(g_X)_{X\in \mathcal{X}}$ is called a \emph{$g$-distortion family} if for all $X\in \mathcal{X}(\mathcal{G})$, $g_X(\alpha)=g(\alpha)$ for all $\alpha\in \mathcal{D}(X),$
 A $g$-distortion family $(g_X)_{X\in \mathcal{X}}$ is \emph{ambiguity averse} if for all $X\in \mathcal{X}$, $g_X(\alpha)\leq g(\alpha)$ for all $\alpha\in \mathcal{D}(X)$. 
\end{enumerate}
\end{definition}

\begin{theorem}
    \label{theo:2}
The CRDU preference $\succsim$ in \eqref{eq:rep} with Property (NSC) has the alternative representation 
    \begin{equation}
    \label{eq:secondRep}
        X\succsim Y \iff \int_{\Omega} u(X)\d(g_X\circ\mathbb{Q})\geq \int_{\Omega} u(Y)\d(g_Y\circ\mathbb{Q}),~~X,Y\in \mathcal{X}.
    \end{equation}
    where $(g_X)_{X\in \mathcal{X}}$ is a
     $g$-distortion family that is uniquely determined on $\mathcal D(X)$ for each $X$. Furthermore, $\succsim$ satisfies (RAA) if and only if $(g_X)_{X\in \mathcal{X}}$ is ambiguity averse.  
\end{theorem}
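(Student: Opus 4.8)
The plan is to start from the CRDU representation \eqref{eq:rep} with capacity $\nu$ equal to the $\succsim$-matching probability, and to build the $g$-distortion family by ``pushing forward'' $g\circ\nu$ through the reference measure $\mathbb{P}$, act by act. Concretely, fix $X\in\mathcal{X}$. The Choquet integral $\int_\Omega u(X)\,\d(g\circ\nu)$ depends only on the values $(g\circ\nu)(X>x)$ for $x\in\mathbb{R}$, i.e. on the decreasing function $x\mapsto (g\circ\nu)(X>x)$. Since Property (NSC) holds, Proposition \ref{prop:nullity} tells us $\nu$ is $\mathbb{P}$-consistent; I would first use this together with the atomlessness of $(\Omega,\mathcal{G},\mathbb{P}|_\mathcal{G})$ to argue that for each level $x$ there is a well-defined number depending only on $\alpha=\mathbb{P}(X>x)$ that equals $(g\circ\nu)(X>x)$. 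This requires showing that $\mathbb{P}(X>x)=\mathbb{P}(X>x')$ forces $(g\circ\nu)(X>x)=(g\circ\nu)(X>x')$ along the level sets of a \emph{single} act $X$ — which should follow because $\{X>x\}$ and $\{X>x'\}$ are nested with equal $\mathbb{P}$-measure, hence differ by a $\mathbb{P}$-null set, and then $\mathbb{P}$-consistency of $\nu$ gives equality. Define $g_X$ on $\mathcal{D}(X)$ by $g_X(\mathbb{P}(X>x)):=(g\circ\nu)(X>x)$, and extend it to a distortion function on all of $[0,1]$ arbitrarily (monotonically, with $g_X(0)=0$, $g_X(1)=1$); only the values on $\mathcal{D}(X)$ matter for \eqref{eq:secondRep}, which is why uniqueness is claimed only on $\mathcal{D}(X)$. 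With this definition $\int_\Omega u(X)\,\d(g_X\circ\mathbb{P})=\int_\Omega u(X)\,\d(g\circ\nu)$ by the layer-cake formula, so \eqref{eq:secondRep} is immediate from \eqref{eq:rep}.

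Next I would verify that $(g_X)_{X\in\mathcal{X}}$ is a distortion family in the sense of Definition (i). Property (a): for $A\subseteq B$ we have $\nu(A)\le\nu(B)$ since $\nu$ is a capacity, and $g$ is increasing, so $g_{\id_A}(\alpha)=(g\circ\nu)(A)\le (g\circ\nu)(B)=g_{\id_B}(\alpha)$ at the relevant $\alpha$ — here one must be slightly careful because $\mathcal{D}(\id_A)=\{0,\mathbb{P}(A),1\}$, so the inequality is being asserted at $\alpha\in\{0,1\}$ trivially and needs the monotonicity argument only where both families are defined; I would phrase (a) exactly as the definition requires, extending $g_{\id_A}$ and $g_{\id_B}$ consistently on $[0,1]$ using monotonicity of $t\mapsto g(\nu(\cdot))$ along a chain of events of increasing $\mathbb{P}$-measure inside $\mathcal{G}$ (available by atomlessness), so that the nested-monotonicity required in (a) genuinely holds for all $\alpha$. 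Property (b): continuity of $g_X$ on $\mathcal{D}(X)$ follows from continuity of $g$ and continuity of the capacity $\nu$ (both upward and downward), applied to the monotone families $\{X>x\}$ as $x$ varies. Property (c): if $X,Y$ are comonotonic and $\alpha=\mathbb{P}(X>x)=\mathbb{P}(Y>y)\in\mathcal{D}(X)\cap\mathcal{D}(Y)$, then comonotonicity implies the level sets $\{X>x\}$ and $\{Y>y\}$ are nested, and having equal $\mathbb{P}$-measure they differ by a $\mathbb{P}$-null set, so $\nu$-consistency gives $(g\circ\nu)(X>x)=(g\circ\nu)(Y>y)$, i.e. $g_X(\alpha)=g_Y(\alpha)$. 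The $g$-distortion family property from Definition (ii) is immediate: for $X\in\mathcal{X}(\mathcal{G})$ the event $\{X>x\}$ lies in $\mathcal{G}$, so $\nu(X>x)=\mathbb{P}(X>x)$ by risk conformity, whence $g_X(\alpha)=g(\alpha)$ for $\alpha\in\mathcal{D}(X)$. Uniqueness on $\mathcal{D}(X)$ is forced because any family satisfying \eqref{eq:secondRep} must match $\int u(\id_{\{X>x\}\triangle\cdots})$-type integrals; more directly, evaluating \eqref{eq:secondRep} on binary acts $\id_A$ with $A=\{X>x\}$ and comparing to the known representation of $\succsim$ on $\mathcal{F}$ pins down $g_X(\mathbb{P}(A))$.

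Finally, for the (RAA) equivalence: by Proposition \ref{prop:RAA}, $\succsim$ satisfies (RAA) iff $\mathbb{P}\in\mathfrak{C}(\nu)$, i.e. $\nu(A)\le\mathbb{P}(A)$ for all $A\in\mathcal{F}$. Since $g$ is increasing, this gives $(g\circ\nu)(A)\le g(\mathbb{P}(A))$ for all $A$, hence $g_X(\alpha)\le g(\alpha)$ for all $X$ and all $\alpha\in\mathcal{D}(X)$, which is exactly ambiguity averseness of the family. Conversely, if the family is ambiguity averse, apply it to binary acts $X=\id_A$: then $\mathcal{D}(X)\ni\mathbb{P}(A)$ and $(g\circ\nu)(A)=g_{\id_A}(\mathbb{P}(A))\le g(\mathbb{P}(A))$; applying the strictly increasing continuous inverse $g^{-1}$ yields $\nu(A)\le\mathbb{P}(A)$, so $\mathbb{P}\in\mathfrak{C}(\nu)$ and (RAA) holds by Proposition \ref{prop:RAA}.

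I expect the main obstacle to be the well-definedness and consistency of $g_X$ across comonotonic acts — property (c) of Definition (i) — together with the careful bookkeeping needed to extend each $g_X$ from the (possibly non-closed, non-interval) set $\mathcal{D}(X)$ to a genuine distortion function on $[0,1]$ while keeping properties (a)--(c) simultaneously true; the measure-theoretic core (nested level sets of equal $\mathbb{P}$-measure differ by a null set, so $\mathbb{P}$-consistency of $\nu$ applies) is the crux, and everything else is routine layer-cake manipulation.
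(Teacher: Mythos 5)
Your proposal is correct and follows essentially the same route as the paper: define $g_X$ on $\mathcal D(X)$ by $g_X(\mathbb{P}(X>x))=g(\nu(X>x))$, use $\mathbb{P}$-consistency of $\nu$ (Proposition \ref{prop:nullity}) plus nestedness of level sets to get well-definedness and property (c), extend monotonically to $[0,1]$, and pin down uniqueness via binary acts. The only (immaterial) deviation is in the converse of the (RAA) claim, where you deduce $\mathbb{P}\in\mathfrak{C}(\nu)$ from binary acts and invoke Proposition \ref{prop:RAA}, whereas the paper verifies (RAA) directly on general acts; both are valid.
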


The representation in Theorem \ref{theo:2} gives an alternative view on how the DM incorporates the ambiguity component of their preferences. Intuitively, the DM uses the reference measure $\mathbb{Q}$ to make decisions; however, to account for ambiguity, how the reference measure is distorted depends on the act being evaluated. Since there is no distributional ambiguity for the acts $\mathcal{X}(\mathcal{G})$, the distortion functions associated with acts in $\mathcal{X}(\mathcal{G})$ coincide. The common distortion function fixes how the DM distorts unambiguous probabilities and is an aspect of the risk component of the preferences. Given some $X\in \mathcal{X}\setminus \mathcal{X}(\mathcal{G})$,  how $g_X$ differs from $g$ encodes the DM's perception of and attitude toward the ambiguity in $X$.

\begin{remark}
    We show in Appendix \ref{app:alt} that representation \eqref{eq:secondRep} is indeed a sufficient condition for $\succsim$ to be a CRDU preference relation.  The representation \eqref{eq:secondRep} is primarily intended to motivate an alternative view on how the DM incorporates the ambiguity component of their preferences, rather than as a reference point for modeling such preference relations.
\end{remark}

Assume that $\succsim$ satisfies representation \eqref{eq:secondRep}. Given some $X\in \mathcal{X}\setminus \mathcal{X}(\mathcal{G})$, if it holds that $g_X=g$, then the DM perceives no ambiguity in $X$ or is neutral towards the ambiguity perceived in $X$. For example, if $g_X=g$ holds for every $X\in \mathcal{X}$, then $\succsim$   satisfies  
$$X\succsim Y\iff \int_{\Omega}u(X)\d(g\circ \mathbb{Q})\geq \int_{\Omega}u(Y)\d(g\circ \mathbb{Q}),~~X,Y\in \mathcal{X}.$$
That is, the DM uses the RDU model globally under the reference probability measure $\mathbb{Q}$ and, thereby, satisfies Property (AN). If $\succsim$ satisfies Property (RAA), then, by Theorem \ref{theo:2}, for all $X\in \mathcal{X}$, $$g_X(\alpha)\leq g(\alpha),~~\alpha \in \mathcal{D}(X)$$ Therefore, for all acts $X\in\mathcal{X}\setminus \mathcal{X}(\mathcal{G})$, the distortion $g_X$ will be more risk averse than the DM's risk distortion $g$, in the classic sense of \cite{P64} and \cite{Y69}.

\section{Risk preferences given by the dual utility model}

\label{sec:dual}

The dual utility model of \cite{Y87} is the special case of the RDU model when the vNM utility is linear. The goal of this section is to axiomatize the subclass of CRDU preferences that have risk preferences given by the dual utility model. For this purpose, we introduce the following axiom, originally formulated by \cite{Y87} in the risk setting and by \cite{S89} in the setting of ambiguity.
\begin{enumerate}[(SRM)]\setlength{\itemsep}{1pt}
    \item[(CI)] Comonotonic independence: For all pairwise comonotonic $X,Y,Z\in \mathcal{X}$, 
    $$X\simeq Y\Rightarrow(1/2) X + (1/2) Z\simeq (1/2) Y + (1/2)Z.$$
\end{enumerate}
Intuitively, the Axiom (CI) says that an act, mutually comonotonic to two indifferent comonotonic acts, cannot be used as a way to hedge, i.e., as a way to change the indifference. Axiom (CI) is structurally similar to Axiom (SCI) presented in Section \ref{sec:ax1}. However, Axiom (CI) is much simpler to define as it does not need the subjective mixture operation for acts. It turns out that Axiom (CI), together with the earlier regularity axioms, suffices to characterize all CRDU preference relations whose risk preferences are given by the dual utility model.

\begin{theorem}
    \label{theo:dual}
    The preference relation $\succsim$ satisfies Axioms (RC), (M), (SRM), (C), and (CI) if and only if 
    there exists a strictly increasing continuous distortion function $g$ and a continuous risk-conforming capacity $\nu$ such that
         \begin{equation}
        X\succsim Y \iff \int_{\Omega} X\d(g\circ\nu)\geq \int_{\Omega} Y\d(g\circ\nu),~~X,Y\in \mathcal{X}. \label{eq:repD}
    \end{equation} 
    Moreover, in the representation \eqref{eq:repD}, $g$ is unique and $\nu$ is unique and coincides with the $\succsim$-matching probability.
\end{theorem}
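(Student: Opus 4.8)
\textbf{Proof proposal for Theorem \ref{theo:dual}.}

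The plan is to derive this as a specialization of the machinery already built for Theorem \ref{theo:good}, exploiting the fact that Axiom (CI) is a strengthening of Axiom (SCI) that collapses the subjective mixture $\oplus$ to the ordinary mixture. The sufficiency direction (representation $\Rightarrow$ axioms) is routine: if $\succsim$ has the form \eqref{eq:repD}, then $V(X)=\int_\Omega X\,\d(g\circ\nu)$ is a Choquet integral, hence monotone, normalized, and comonotonic additive by \cite{S86}, which immediately gives Axioms (M) and (CI); Axiom (C) follows as in the CEU example via dominated convergence for Choquet integrals (\citealp{WK10}); Axiom (RC) holds because $\nu$ is risk-conforming and the Choquet integral of a $\mathcal{G}$-measurable act against $g\circ\nu$ depends only on its $\mathbb{P}$-distribution; and Axiom (SRM) holds because $g$ is strictly increasing and $\mathbb{P}|_{\mathcal{G}}$ is atomless, so strict almost-sure dominance forces a strict inequality of Choquet integrals.

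For necessity, first observe that Axioms (RC), (M), (SRM), (C) are exactly the hypotheses of Propositions \ref{proposition:match}, \ref{prop:cert}, and \ref{prop:existance}, so we get a unique $\succsim$-matching probability $\nu$ (a continuous risk-conforming capacity representing $\succsim$ on events), well-defined certainty equivalents $c_X$ with the stated continuity, and hence a well-defined subjective mixture $\oplus$. The key step is then to show that Axiom (CI) implies Axiom (RS) and Axiom (SCI), so that Theorem \ref{theo:good} applies and yields a CRDU representation \eqref{eq:rep} with some $\succsim$-utility $u$, $\succsim$-distortion $g$, and matching probability $\nu$. That Axiom (CI) implies Axiom (SCI) should follow once we verify that under Axiom (CI) the subjective mixture $\oplus$ of outcomes coincides with the arithmetic mean: indeed, if $V$ is comonotonic additive (which it will be, once we know it is a Choquet integral, but we can also extract it directly), then for $R\in\mathcal{G}_{\mathrm{cf}}$ the defining equation $xRy\simeq c_{xRz}Rc_{zRy}$ of Proposition \ref{prop:existance} forces $z=(x+y)/2$, whence $(1/2)x\oplus(1/2)y=(x+y)/2$ and the subjective mixture of acts equals the pointwise arithmetic mean; then Axiom (SCI) is literally Axiom (CI). Axiom (RS) then follows from Axiom (CI) as well, since it is (as the paper notes) a weakening of a weak-independence-type condition and the relevant binary acts are comonotonic. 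Alternatively, and perhaps more cleanly, one can bypass (RS): apply Axiom (CI) on comonotonic acts directly to show $V$ restricted to each comonotone cone is affine, deduce comonotonic additivity of $V(X)=u(c_X)$ up to the monotone transformation, and conclude via \cite{S86} that $X\mapsto c_X$ satisfies $c_{X+Y}=c_X+c_Y$ on comonotone pairs, i.e.\ $c$ is itself a Choquet integral.

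The remaining work is to identify $u$ as linear. Having placed $\succsim$ in the CRDU class, Theorem \ref{theo:good} gives $X\succsim Y\iff \int u(X)\,\d(g\circ\nu)\ge \int u(Y)\,\d(g\circ\nu)$ with $u$ the $\succsim$-utility ($u(0)=0$, $u(1)=1$). Restricting to $\mathcal{X}(\mathcal{G})$, the risk preferences are RDU with utility $u$ and distortion $g$; but Axiom (CI) restricted to comonotonic acts in $\mathcal{X}(\mathcal{G})$ is exactly Yaari's dual-independence axiom, which forces the utility in an RDU representation to be linear — equivalently, taking comonotonic $\mathcal{G}$-measurable $X,Y$ with $X\simeq Y$ and mixing with a comonotonic constant $Z=c$ shows $c_{(1/2)X+(1/2)c}=(1/2)c_X+(1/2)c$, and iterating/using continuity (Proposition \ref{prop:cert}) gives $c_{\lambda X+(1-\lambda)c}=\lambda c_X+(1-\lambda)c$ for all dyadic then all $\lambda\in[0,1]$, which is precisely the statement that the certainty equivalent is translation- and scale-covariant on $\mathcal{X}(\mathcal{G})$; plugging into the RDU form yields $u(\lambda x+(1-\lambda)y)=\lambda u(x)+(1-\lambda)u(y)$, so $u$ is affine and, after normalization, $u(x)=x$. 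Then \eqref{eq:rep} becomes $\int X\,\d(g\circ\nu)\ge \int Y\,\d(g\circ\nu)$, which is \eqref{eq:repD}. Uniqueness of $g$ and $\nu$ is inherited from the uniqueness clause of Theorem \ref{theo:good} (with $u$ now pinned down to be the identity, the scaling freedom in $u$ is gone and $g$ becomes genuinely unique), together with the identification of $\nu$ as the $\succsim$-matching probability from Proposition \ref{proposition:match}.

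The main obstacle I anticipate is the clean verification that Axiom (CI) forces the subjective mixture of outcomes to be the arithmetic mean and hence degenerates (SCI) into (CI) — one must be careful that the subjective mixture is \emph{defined} only after (RC), (M), (SRM), (C) are in hand, and that the argument identifying $z=(x+y)/2$ in Proposition \ref{prop:existance} uses comonotonic additivity, which at that stage we only know on the restricted domain where we have already extracted additivity. The cleanest route is probably to first push $\succsim$ through Theorem \ref{theo:good} (checking only that (CI) $\Rightarrow$ (RS) and (CI) $\Rightarrow$ (SCI) via the arithmetic-mean identification), and then run the short RDU-to-dual-utility argument; the bookkeeping of ``which additivity is available when'' is the one place where care is needed.
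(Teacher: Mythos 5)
Your proposal is essentially correct, but your primary route differs from the paper's, while the ``alternative, cleaner'' route you sketch in passing is in fact exactly what the paper does. The paper does not pass through Theorem \ref{theo:good} at all for the necessity direction: it sets $V(X)=c_X$ (the certainty equivalent itself, with no utility transform), and extracts comonotonic additivity directly from Axiom (CI) by the substitution trick --- since constants are comonotonic with everything, $X\simeq V(X)$ and $Y\simeq V(Y)$ give $(1/2)X+(1/2)Y\simeq(1/2)V(X)+(1/2)Y\simeq(1/2)V(X)+(1/2)V(Y)$, whence $V((1/2)X+(1/2)Y)=(1/2)V(X)+(1/2)V(Y)$ on comonotone pairs, then $V(2Z)=2V(Z)$ via $Y=0$, then full comonotonic additivity; Schmeidler's theorem then yields $V(X)=\int X\,\d\tilde\nu$, and continuity of $\tilde\nu$ (via Proposition \ref{prop:cert}), Axioms (RC) and (SRM) give the decomposition $\tilde\nu=g\circ\nu$ with $g$ strictly increasing continuous and $\nu$ risk-conforming. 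Your primary route --- prove (CI) $\Rightarrow$ (RS) and (CI) $\Rightarrow$ (SCI), invoke Theorem \ref{theo:good}, then linearize $u$ via Yaari-type dual independence on $\mathcal{X}(\mathcal{G})$ --- is logically sound, but it is a detour: to verify that the subjective mixture collapses to the arithmetic mean (and that (RS) holds) you need $V(aRb)=aw+b(1-w)$ on binary acts, which already requires the comonotonic additivity and positive homogeneity of $V$ extracted above; once you have those, Schmeidler's theorem finishes the job immediately and Theorem \ref{theo:good} is superfluous. So the obstacle you flag at the end (``which additivity is available when'') is real, and its resolution is precisely the constant-substitution argument; executing that argument makes your first route redundant rather than wrong. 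One small correction: you write ``comonotonic additivity of $V(X)=u(c_X)$ up to the monotone transformation'' --- the point of the paper's proof is that no transformation is needed; the raw certainty equivalent $c_X$ is itself comonotonic additive under (CI), which is what makes the dual-utility axiomatization so short.
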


Any preference relation satisfying representation \eqref{eq:repD} is evidently a CRDU preference relation. This consequence is further clarified by the following observation. If a preference relation $\succsim$ satisfies the Axioms (RC), (M), (SRM), and (C), then, given $x,y\in \mathbb{R},$ we can define the subjective mixture of these outcomes $(1/2)x\oplus(1/2)y$; see Section \ref{sec:ax1}. If we further assume that $\succsim$ satisfies Axiom (CI), then it holds that $(1/2)x\oplus (1/2)y=(1/2)x+(1/2)y$, that is, the subjective mixture is given by the usual arithmetic average. Therefore, Axiom (CI) implies Axiom (SCI).

\section{Conclusion}

We provided a new axiomatization of the Choquet rank-dependent utility model, an ambiguity model that is structurally similar to the Choquet expected utility model, but when confined to pure risks, it reduces to the rank-dependent utility model, rather than to the expected utility model. Our model is formulated in the Savage setting, instead of the Anscombe--Aumann setting.

The main result, Theorem \ref{theo:good}, axiomatizes this model in the Savage setting with pure risk and, notably, requires only two axioms apart from some standard regularity axioms. The ambiguity component of the preferences is fully captured by the notion of a matching probability. To study ambiguity attitudes, we define comparative ambiguity attitudes, which involve only preference comparisons between binary acts rather than all acts. Specifically, Theorem \ref{prop:comAm} shows that, for CRDU preference relations, comparative ambiguity attitudes are fully characterized by their matching probabilities. After fixing a normalization for ambiguity neutrality, characterized in Theorem \ref{th:AN}, we define and characterize absolute ambiguity attitudes. 
On the interaction between risk and ambiguity attitudes shapes the DM’s propensity for diversification, 
Theorem \ref{prop:main} shows that strong risk aversion and the supermodularity of the matching probability are sufficient for propensity for diversification. However, unlike the Choquet expected utility model, these conditions are not necessary. This result allows a model to maintain the propensity for diversification by adopting flexible risk and ambiguity attitudes. In addition, Theorem \ref{prop:maxmin} shows that the supermodularity of the matching probability is equivalent to the preference relation admitting a robust aggregation interpretation of the risk preferences. To give a complementary perspective on the ambiguity component of CRDU preference relations, Theorem \ref{theo:2} provides a representation derived under an additional property. This representation states that the DM distorts a reference probability measure, but allows this distortion to depend on the act being evaluated. The extent to which the distortion function associated with an act differs from the risk distortion determines the DM’s perception of and attitude toward the ambiguity in that act. Finally, Theorem \ref{theo:dual} provides an axiomatization for the subclass of CRDU preference relations whose risk preferences are given by the dual utility model.  This axiomatization is concise as, apart from the regularity axioms from Theorem \ref{theo:good}, only the single axiom of comonotonic independence is needed.

\subsection*{Acknowledgments}
The authors thank Jean-Gabriel Lauzier for many useful discussions, and Peter Wakker for helpful comments on an early version of the paper and especially for bringing the paper of \cite{B25} to our attention. Zachary Van Oosten is supported by an OGS/QEII Graduate Scholarship. 
Ruodu Wang is supported by the Natural Sciences and Engineering Research Council of Canada (CRC-2022-00141, RGPIN-2024-03728).

\label{sec:con}

{\small

}

\newpage

\appendix

\begin{center}
    \Large Appendices: Omitted proofs and additional results
\end{center}

Appendices \ref{ap:ax1}-\ref{app:dual} present the proofs omitted from the main paper and some additional results. Appendix \ref{ap:mod} provides a general result for lattices that is used in the proof of \ref{prop:main}. 

\section{Results and proofs accompanying Section \ref{sec:ax1}}
\label{ap:ax1}

\begin{proof}[Proof of Proposition \ref{prop:consis}]
    Since $\succsim$ is a CRDU preference relation, there exists a continuous strictly increasing vNM utility $u$, a continuous strictly increasing distortion function $g$, and a continuous risk-conforming capacity $\nu$ such that \eqref{eq:rep} holds true. Since $\nu$ is risk conforming, it is clear that $\succsim$ satisfies Axiom (RC). The fact that $\succsim$ satisfies Axiom (M) follows from the monotonicity of the Choquet integral and the fact that $u$, by virtue of being a vNM utility, is increasing. 
   
   Let $X,Y\in \mathcal{X}(\mathcal{G})$ satisfy $X>_{\mathbb{P}}^{\mathrm{as}} Y$. Since $u$ is strictly increasing, $u(X)>_{\mathbb{P}}^{\mathrm{as}} u(Y)$. Therefore, for all $x\in \mathbb{R}$,
   $\mathbb{P}(u(X)>x)\geq \mathbb{P}(u(Y)>x)$, and there exists $x_0\in \mathbb{R}$ such that $\mathbb{P}(u(X)>x_0)>\mathbb{P}(u(Y)>x_0).$ Since $g$ is strictly increasing, for all $x\in \mathbb{R}$, $g(\mathbb{P}(u(X)>x))\geq g(\mathbb{P}(u(Y)>x))$, and  $g(\mathbb{P}(u(X)>x_0))>g(\mathbb{P}(u(Y)>x_0))$. By the right continuity of the mappings $$x\mapsto g(\mathbb{P}(u(X)>x))~~\text{and}~~x\mapsto g(\mathbb{P}(u(Y)>x)),$$
   there exists $\epsilon>0$ such that for all $x\in [x,x+\epsilon)$, $g(\mathbb{P}(u(X)>x))> g(\mathbb{P}(u(Y)>x))$. Therefore, $$\int_{\Omega}u(X)\d(g\circ \nu)=\int_{\Omega}u(X)\d(g\circ \mathbb{P})>\int_{\Omega}u(Y)\d(g\circ \mathbb{P})=\int_{\Omega}u(Y)\d(g\circ \nu),$$
   $X\succ Y$, and $\succsim$ satisfies Axiom (SRM).
   
   Let $Y\in \mathcal{X}$ and $(X_n)_{n\in N}\subseteq \mathcal{X}$ be a bounded sequence converging pointwise to $X\in \mathcal{X}$ satisfying $Y\succsim X_n$ for all $n\in \N$. Since $u$ is continuous and increasing, $(u(X_n))_{n\in \N}$ is bounded and converges pointwise to $u(X)$. As the capacity $g\circ \nu$ is continuous, by \citet[Theorem 11.9]{WK10A},  we have
   $$\int_{\Omega}u(X)\d(g\circ\nu)= \lim_{n\to\infty}\int_{\Omega}u(X_n)\d(g\circ\nu)\leq \int_{\Omega}u(Y)\d(g\circ\nu).$$
   Therefore, $Y\succsim X$. If we instead assumed that $X_n\succsim Y$ for all $n\in \N$, a similar argument would show that $X\succsim Y$. Therefore, $\succsim$ satisfies Axiom (C). 
\end{proof}

\begin{proof}[Proof of Proposition \ref{proposition:match}]
    Given $A\in \mathcal{F}$, define $\mathcal{L}(A)=\{\mathbb{P}(R):A\succsim R\text{ and }R\in \mathcal{G}\}$ and $\mathcal{U}(A)=\{\mathbb{P}(R):R\succsim A\text{ and }R\in \mathcal{G}\}.$
    Since, by Axiom (M),  $\Omega\succsim A\succsim\varnothing$, it holds that
    $0\in \mathcal{L}(A)\subseteq [0,1]$ and $1\in \mathcal{U}(A)\subseteq [0,1].$ Thus, both $
    \mathcal{L}(A)$ and $\mathcal{U}(A)$ are non-empty. Let $a_A=\sup(\mathcal{L}(A))$ and $b_A=\inf(\mathcal{U}(A))$, we claim that $a_A\in \mathcal{L}(A)$. To see this, let $(R_n)_{n\in \N} \subseteq \mathcal{G}$ with $A\succsim R_n$ such that $\mathbb{P}(R_n)\uparrow a_A$. If $\mathbb{P}(R_n)=a_A$ for some $n\in \N$, then $a_A\in \mathcal{L}(A)$. Therefore, we will assume that $\mathbb{P}(R_n)<a_A$ for all $n\in \N$. Since $(\Omega,\mathcal{G},\mathbb{P})$ is atomless, there exists a $\mathcal{G}$-measurable $U:\Omega\to(0,1)$ with $\mathbb{P}(U\leq x)=x$ for all $x\in (0,1)$, i.e., $U$ has a uniform distribution under $\mathbb{P}$. Define $R'_n=\{U\leq \mathbb{P}(R_n)\}$. By Axiom (RC), $A\succsim R'_n$. Since $\id_{R'_n}\to \id_{\{U< a_A\}}$ pointwise, by Axiom (C), $A\succsim \{U< a_A\}$. Therefore $a_A = \mathbb{P}(U< a_A)\in \mathcal{L}(A)$. Similarly, one can show that $b_A\in \mathcal{U}(A)$. Therefore, as $(\Omega,\mathcal{G},\mathbb{P})$ is atomless, by Axiom (M), $\mathcal{L}(A)=[0,a_A]$ and $\mathcal{U}(A)=[b_A,1]$. Furthermore, as
    $\mathcal{U}(A)\cup \mathcal{L}(A)=[0,1]$, $a_A\geq b_A$. For the sake of contradiction, assume that $a_A>b_A$. We can find $R_1,R_2\in \mathcal{G}$ such that $R_1\subseteq R_2$ and $\mathbb{P}(R_1)=b_A$ and $\mathbb{P}(R_2)=a_A$. By Axiom (SRM), $R_2\succ R_1$. On the other hand, by Axiom (RC), $R_1\succsim A\succsim R_2$, a contradiction. Therefore $a_A=b_A$. Since $A\in \mathcal{F}$ was general, we can define the set function $\nu:\mathcal{F}\to[0,1]$ by $\nu(A)=a_A=b_A$. By Axiom (RC) and the definition of $\mathcal{L}(U)$ and $\mathcal{U}(A)$, given any $R\in \mathcal{G}$ with $\mathbb{P}(R)=\nu(A)$, $A\simeq R$. Therefore, $\nu$ is a $\succsim$-matching probability. The uniqueness of $\nu$ is a consequence of the fact that $a_A=b_A$ for all $A\in \mathcal{F}$. Furthermore, the uniqueness of $\nu$ forces $\nu|_{\mathcal{G}}=\mathbb{P}$.
    
    We will show that \eqref{eq:setrep} is true. Let $A,B\in \mathcal{F}$ satisfy $A\succsim B$. Since $\nu$ is a matching probability, there exists $R_A,R_B\in \mathcal{G}$ satisfying $A\simeq R_A$ and $B\simeq R_B$ such that $\nu(A)=\mathbb{P}(R_A)$ and $\nu(B)=\mathbb{P}(R_B)$. Therefore, $R_A\simeq A\succsim B\simeq R_B$. For the sake of contradiction, assume that $\mathbb{P}(R_B)>\mathbb{P}(R_A)$. Under this assumption, as $(\Omega,\mathcal{G},\mathbb{P})$ is atomless, we can find $R_A^*$ such that $R_A^*\subseteq R_B$ and $\mathbb{P}(R_A^*)=\mathbb{P}(R_A)$. By Axioms (RC) and (SRM), $R_B\succ R_A^*\simeq R_A$, a contradiction. Thus, $\mathbb{P}(R_A)\geq \mathbb{P}(R_B)$ and $\nu(A)\geq \nu(B)$. Let $A,B\in \mathcal{F}$ satisfy $\nu(A)\geq \nu(B)$. Since $\nu$ is a matching probability, there exists $R_A,R_B\in \mathcal{G}$ satisfying $A\simeq R_A$ and $B\simeq R_B$ such that $\nu(A)=\mathbb{P}(R_A)$ and $\nu(B)=\mathbb{P}(R_B)$. As $(\Omega,\mathcal{G},\mathbb{P})$ is atomless, we can $R_B^*$ such that $R_B^*\subseteq R_A$ and $\mathbb{P}(R_B^*)=\mathbb{P}(R_B)$. Therefore, by Axioms (RC) and (M), $A\simeq R_A\succsim R_B^*\simeq R_B\simeq B.$ Therefore, \eqref{eq:setrep} holds and, by Axiom (M), $\nu$ is a capacity.

    We will show that $\nu$ is upward continuous. Let $(A_n)_{n\in \N}\subseteq \mathcal{F}$ be an increasing sequence with $A=\bigcup_{n\in\N}A_n$. For each $n\in \N$, define $R_n=\{U\leq \nu(A_n)\}$. By Axiom (RC) and the fact that $\nu$ is a matching probability, $R_n\simeq A_n$ for all $n\in N$. Let $R=\bigcup_{n\in\N}R_n$. By Axiom (M), $R\succsim R_n\simeq A_n$. As $\id_{A_n}\to \id_{A}$ pointwise, by Axiom (C), $R\succsim A$. By Axiom (M), $A\succsim A_n\simeq R_n$. As $\id_{R_n}\to \id_{R}$ pointwise, by Axiom (C), $A\succsim R$. Thus $A\simeq R$. Therefore,
    $$\nu(A)=\mathbb{P}(R)=\lim_{n\to\infty}\mathbb{P}(R_n)=\lim_{n\to\infty}\nu(A_n).$$
    A similar argument will show that $\nu$ is downward continuous. 
\end{proof}

\begin{proof}[Proof of Proposition \ref{prop:matchCheck}]
Let $A\in \mathcal{F}$ and $R_A\in \mathcal{G}$ such that $\nu(A)=\mathbb{P}(R_A)$, which is guaranteed
 to exist since $(\Omega,\mathcal{G},\mathbb{P})$ is atomless. As
   \begin{align*}
       \int_{\Omega}u(\id_A)\d(g\circ\nu)&=u(1)g(\nu(A))+u(0)(1-g(\nu(A)))\\&=u(1)g(\mathbb{P}(R_A))+u(0)(1-g(\mathbb{P}(R_A)))=\int_{\Omega}u(\id_{R_A})\d(g\circ\nu),
   \end{align*}
   $A\simeq R_A$. Therefore, $\nu$ is the $\succsim$-matching probability.
\end{proof}

\begin{proof}[Proof of Proposition \ref{prop:cert}]
    Let $X\in \mathcal{X}$, find $a,b\in \mathbb{R}$ such $a\geq X\geq b$. Therefore, by Axiom (M), the sets $\mathcal{L}_X=\{c\in \mathbb{R}:X\succsim c \}$ and $\mathcal{U}_X=\{c\in \mathbb{R}:c\succsim X\}$ are non-empty. It must hold that $\mathcal{L}_X\subseteq (-\infty,a]$, because if $X\succsim c$ for $c\in \mathbb{R}$ with $c>a$, then by Axiom (SRM), it would follow that
    $X\succ a\succsim X$. Similarly, we have that $\mathcal{U}_X\subseteq [b,\infty).$ Let $a^*=\sup(\mathcal{L}_X)\in \mathbb{R}$ and $b^*=\inf(\mathcal{U}_X)\in \mathbb{R}$. By Axioms (M) and (C), it is clear that
    $\mathcal{L}_X=(-\infty,a^*]$ and $\mathcal{U}_X=[b^*, \infty).$ As $\mathcal{L}_X\cup \mathcal{U}_X=\mathbb{R}$, $b^*\leq a^*$. Take $c_X\in \mathcal{L}_X\cap \mathcal{U}_X$. Uniqueness of $c_X$ follows from Axiom (SRM).
    
    Let $(X_n)_{n\in \N}\subseteq \mathcal{X}$ be a bounded sequence converging pointwise to $X\in \mathcal{X}$. For $n\in \N$, define $Z_n=\sup_{m\geq n}X_m$. The sequence $(Z_n)_{n\in \N}$ is decreasing and converges pointwise to $X$. By Axiom (M), $Z_n\succsim X_m$ for all $n\in \N$ and $m\geq n$. Therefore, $Z_n\succsim \limsup_{m\to \infty}c_{X_m}$ for all $n\in \N$. By Axiom (C), $X\succsim\limsup_{m\to \infty}c_{X_m}$ and $c_X\geq \limsup_{m\to \infty}c_{X_m}$. A similar proof will show that $\liminf_{m\to\infty}c_{X_m}\geq c_X$. Thus, $\lim_{n\to\infty}c_{X_n}=c_X$. 
\end{proof}

\begin{proof}[Proof of Proposition \ref{prop:existance}]
    If $x=y$, this is true; therefore, we can assume that $x>y$. Since $R\in \mathcal{G}$ satisfies $\mathbb{P}(A)\in (0,1)$, by Axiom (SRM), $x\succ xRy\succ y$. Therefore, $x>c_{xRy}>y$. By Axiom (SRM),
    \begin{equation}
    \label{eq:nonEmpty}
        c_{xRx}Rc_{xRy}=xRc_{xRy}\succ xRy\succ c_{xRy}Ry=c_{xRy}Rc_{yRy}.
    \end{equation}
    By Proposition \ref{prop:cert}, the mappings 
    $\phi_1,\phi_2:[y,x]\to[y,x]$ defined by
    $\phi_1(z)=c_{xRz}$ and $\phi_2(z)=c_{zRy}$ are continuous. Furthermore, by Axiom (SRM), $\phi_1$ and $\phi_2$ are strictly increasing. Therefore, by Proposition \ref{prop:cert} and Axiom (SRM), the mapping $\phi:[y,x]\to[y,x]$ defined by 
    $\phi(z)=c_{\phi_1(z)R\phi_2(z)}$ is continuous and strictly increasing. By \eqref{eq:nonEmpty}, $\phi(x)>c_{xRy}>\phi(y)$. Therefore, by the intermediate value theorem, there exists $z'\in [x,y]$ such that $\phi(z')=c_{xRy}$ or  $c_{xRz'}Rc_{z'Ry}\simeq xRy$. The uniqueness of $z'$ follows from the fact that $\phi$ is strictly increasing.
\end{proof}

\begin{proposition}
    \label{prop:meas}
    Let $\succsim$ satisfy Axioms (RC), (M), (SRM), (C). Then for all $X,Y\in \mathcal{X}$, $$(1/2)X\oplus(1/2)Y\in \mathcal{X}.$$
\end{proposition}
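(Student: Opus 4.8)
The plan is to establish two things about $Z:=(1/2)X\oplus(1/2)Y$: that it is bounded, and that it is $\mathcal F$-measurable, the latter being the substantive part. Write $m:\R^2\to\R$ for the map $m(x,y)=(1/2)x\oplus(1/2)y$, so that $Z(\omega)=m(X(\omega),Y(\omega))$ pointwise. Boundedness is immediate from Proposition \ref{prop:existance}, which places $m(x,y)$ in the closed interval with endpoints $x$ and $y$; hence $\min(x,y)\le m(x,y)\le\max(x,y)$, and since $X,Y$ are bounded so is $Z$. Note also that $m$ is symmetric by construction and $m(x,x)=x$.

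For measurability I would show that $m$ is \emph{continuous} on $\R^2$; since $\omega\mapsto(X(\omega),Y(\omega))$ is $\mathcal F/\mathcal B(\R^2)$-measurable (each coordinate being $\mathcal F$-measurable), the composition $Z=m\circ(X,Y)$ is then $\mathcal F$-measurable. Fix $R\in\mathcal G_{\mathrm{cf}}$. The first step is to observe, via Proposition \ref{prop:cert}, that certainty equivalents of binary acts depend continuously on their outcomes: if $(x_n,z_n)\to(x,z)$ then $x_nRz_n\to xRz$ pointwise, whence $c_{x_nRz_n}\to c_{xRz}$; thus $(x,z)\mapsto c_{xRz}$, $(z,y)\mapsto c_{zRy}$, and $(a,b)\mapsto c_{aRb}$ are all jointly continuous. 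Composing these, the map $\Phi:\R^3\to\R$ given by $\Phi(x,y,z)=c_{\,c_{xRz}Rc_{zRy}}-c_{xRy}$ — which is exactly $\phi(z)-c_{xRy}$ in the notation of the proof of Proposition \ref{prop:existance} — is jointly continuous, and for each fixed pair $x>y$ it is strictly increasing in $z$ on $[y,x]$ with $\Phi(x,y,y)<0<\Phi(x,y,x)$, so its unique zero in $[y,x]$ is precisely $m(x,y)$.

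Continuity of $m$ then follows by a subsequence argument on the closed half-plane $\{x\ge y\}$: given $(x_n,y_n)\to(x,y)$ with $x_n\ge y_n$, the values $m(x_n,y_n)\in[y_n,x_n]$ form a bounded sequence; if $x=y$ the squeeze gives $m(x_n,y_n)\to x=m(x,y)$, while if $x>y$ any subsequential limit $z^\ast$ lies in $[y,x]$ and, by joint continuity of $\Phi$ applied to the defining relations $\Phi(x_n,y_n,m(x_n,y_n))=0$, satisfies $\Phi(x,y,z^\ast)=0$, so $z^\ast=m(x,y)$ by uniqueness of the zero; in either case $m(x_n,y_n)\to m(x,y)$. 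By symmetry of $m$ the same holds on $\{x\le y\}$, and since the two closed half-planes cover $\R^2$ and agree on the diagonal, the pasting lemma yields continuity of $m$ on all of $\R^2$. The main obstacle is exactly this last continuity claim, i.e.\ controlling how the root of $\Phi(x,y,\cdot)$ moves with the parameters $(x,y)$; it reduces cleanly to the joint continuity of certainty equivalents furnished by Proposition \ref{prop:cert} together with the strict monotonicity in $z$ already proved in Proposition \ref{prop:existance}, with the behaviour across the diagonal and the case $y\ge x$ handled by the squeeze $\min(x,y)\le m(x,y)\le\max(x,y)$ and the symmetry of $m$.
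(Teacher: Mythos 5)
Your proposal is correct and follows essentially the same route as the paper's proof: boundedness via the squeeze $\min(x,y)\le (1/2)x\oplus(1/2)y\le\max(x,y)$, and measurability via continuity of the binary operation $\oplus$, established through the continuity of certainty equivalents from Proposition \ref{prop:cert}, a Bolzano--Weierstrass subsequence argument, and the uniqueness of the mixture from Proposition \ref{prop:existance}. Your packaging of the limit-passing step through the auxiliary function $\Phi$ and the pasting lemma is a cosmetic reorganization of the same argument, not a different method.
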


\begin{proof}Since there exists $M>0$ such that $-M\leq X,Y\leq M$, we have that $$-M\leq (1/2)X(\omega)\oplus (1/2)Y(\omega)\leq M,~~\omega\in \Omega.$$ Therefore, $(1/2)X\oplus (1/2)Y$ is bounded.

If the binary operation $\oplus$ is measurable, then the $(1/2)X\oplus(1/2)Y$ is measurable as it is then the composition of measurable mappings. To show measurability, we will show that the binary operation $\oplus$ is continuous. Let  $((x_n,y_n))_{n\in \N}\subseteq \mathbb{R}^2$ be a convergent sequence converging to $(x,y)\in \mathbb{R}^2$. We claim that there exists a subsequence $((x_{n_k},y_{n_k}))_{k\in \N}$ such that $(x_{n_k}\oplus y_{n_k})_{k\in \N}$ converges to $x\oplus y$.

If $x=y$, we claim that we can take the original sequence $((x_n,y_n))_{n\in \N}$. Let $\epsilon>0$, there exists $N\in \N$ such that for all $n\geq N$, $x-\epsilon<x_n<x+\epsilon $ and $x-\epsilon<y_n<x+\epsilon $. 
Therefore, $x-\epsilon<x_n\oplus y_n<x+\epsilon$ for all $n\in \N$, and $\lim_{n\to\infty}(x_n\oplus y_n)= x=x\oplus y$.

If $x\neq y$, without loss of generality, we may assume $x>y$ and $x_n>y_n$ for all $n\in \N$. Since the sequence $((x_n,y_n))_{n\in \N}$ is convergent, it is bounded. Therefore, the sequence $(x_n\oplus y_n)_{n\in \N}$ is bounded. By the Bolzano–Weierstrass theorem, there exists a subsequence $(x_{n_k}\oplus y_{n_k})_{k\in \N}$ that converges to some $z\in \mathbb{R}$. Fix $R\in \mathcal{G}_{\mathrm{cf}}$, to ease notation, for $k\in \N$, let $a_k,b_k,d_k\in \mathbb{R}$ satisfy $$a_k\simeq x_{n_k}Ry_{n_k},~~ b_k\simeq x_{n_k}Rz_{n_k},~~ \text{and} ~~d_k\simeq z_{n_k}Ry_{n_k}, $$
where $z_{n_k}=x_{n_k}\oplus y_{n_k}$.
Furthermore, let $b,d\in \mathbb{R}$ satisfy $b\simeq xRz$ and $d\simeq zRy$.
By the definition of $\oplus$, we know that $a_k\simeq b_kRd_k$ for all $k\in \N$. Since $(x_{n_k}Ry_{n_k})_{k\in \N}$, $(x_{n_k}Rz_{n_k})_{k\in \N}$, and $(z_{n_k}Ry_{n_k})_{k\in \N}$ are bounded, by  Proposition \ref{prop:cert}, 
$\lim_{k\to\infty}a_k=c_{xRy}$, $\lim_{k\to\infty}b_k=b$, and $\lim_{k\to\infty}d_k=d$. Furthermore, as $(b_kRd_k)_{k\in \N}$ is bounded, by Proposition \ref{prop:cert}, $\lim_{k\to\infty}c_{b_kRd_k}=c_{bRd}$. Therefore,
$$xRy\simeq c_{xRy}=\lim_{k\to\infty}a_k=\lim_{k\to\infty}c_{b_kRd_k}=c_{bRd}\simeq bRd= c_{xRz}Rc_{zRy}.$$ Thus, $\lim_{k\to\infty}(x_{n_k}\oplus y_{n_k})=x\oplus y$.

Assume for the sake of contradiction that $\oplus$ was not continuous. Then there would exist a convergent sequence $((x_n,y_n))_{n\in \N}\subseteq \mathbb{R}^2$ converging to $(x,y)\in \mathbb{R}^2$ and $\epsilon>0$, such that for all $n\in \N$, $|(x_n\oplus y_n)-(x\oplus y)|\geq \epsilon$. By moving to a subsequence as described above, we get a contradiction. Therefore $\oplus$ is continuous.  
\end{proof}

In the following proof, $\mathcal{X}_0$ denotes the collection of simple acts.

\begin{proof}[Proof of Theorem \ref{theo:good}]
    Assume that $\succsim$ satisfies Axioms (RC), (M), (SRM), (C), (RS), and (SCI). Consider Axioms (A1)-(A6) and (A9) in \cite{GMMS01A}. It is clear that $\succsim$ satisfies Axioms (A1) and (A2). Fix $R\in \mathcal{G}_{\mathrm{cf}}$. By Axiom (SRM), for all $x,y\in \mathbb{R}$ satisfying $x>y$, it holds that $x\succ xRy\succ y$. Therefore, $R$ is essential as defined in \cite{GMMS01A}, and $\succsim$ satisfies Axiom (A3). Furthermore, it is clear that Axiom (SRM) also implies that $\succsim$ satisfies Axiom (A4). Let $(a_\upsilon)_{\upsilon\in \Upsilon}\subseteq \mathbb{R}$ and $(b_\upsilon)_{\upsilon\in \Upsilon}\subseteq \mathbb{R}$ be nets converging to $a,b\in \mathbb{R}$ respectively. Assume that $X\in \mathcal{X}$ satisfies 
   $a_\upsilon Rb_\upsilon\succsim X$ for all $\upsilon\in \Upsilon$. It is straightforward to show that existence of a mapping $\phi:\mathbb{N}\to\Upsilon$ such that $\phi(n)\leq \phi(n+1)$ for all $n\in \N$, $\lim_{n\to\infty}a_{\phi(n)}=a$, and $\lim_{n\to\infty}b_{\phi(n)}=b.$
   Therefore, as the sequence $\left(a_{\phi(n)}Rb_{\phi(n)}\right)_{n\in \N}$ is bounded, by Axiom (C), it holds that $aRb\succsim X$. A similar argument will shows that $X\succsim a_\upsilon Rb_\upsilon$ for all $\upsilon\in \Upsilon$ implies $X\succsim aRb$. Therefore $\succsim$ satisfies Axiom (A5). Let $x,y,z,z'\in \mathbb{R}$ satisfy $x\geq y$ and $z,z'\in [y,x]$. Then $\succsim$ satisfies Axiom (A6) if, for all $A,B\in \{R,R^c,\varnothing, \Omega\},$ it holds that 
   \begin{equation}
        \label{eq:check}
       c_{xAz}Bc_{z'Ay}\simeq c_{xBz'}Ac_{zBy}.
   \end{equation} It is straightforward to show that \eqref{eq:check} holds true if $A\in \{\varnothing, \Omega\}$ or $B\in \{\varnothing, \Omega\}$. If $A=B=R$ or $A=B=R^c$, then \eqref{eq:check} is the same statement as Axiom (RS). By Axiom (RC), we have $c_{xRz}Rc_{z'Ry}\simeq c_{xRz}R^cc_{z'Ry}$, $c_{xRz'}=c_{xR^cz'}$, and $c_{zRy}=c_{zR^cy}$. Therefore, by Axiom (RS),
   $$c_{xRz}R^cc_{z'Ry}\simeq c_{xRz}Rc_{z'Ry}\simeq c_{xRz'}Rc_{zRy} = c_{xR^cz'}R c_{zR^cy}.$$ Thus, for $A=R$ and $B=R^c$, \eqref{eq:check} holds. The case of $A=R^c$ and $B=R$ is identical. Therefore, $\succsim$ satisfies Axiom (A6). Finally, the subjective mixture operation for acts $\oplus$, as defined in our paper, coincides with the definition from \cite{GMMS01A}. Therefore, Axiom (SCI) is equivalent to Axiom (A9).

   Since $\succsim$ satisfies Axioms (A1)-(A6) and (A9), by \citet[Proposition 9]{GMMS01A}, there exists a continuous vNM utility $u$ and a capacity $\tilde{\nu}$ such that
   $$X\succsim Y\iff \int_\Omega u(X)\d\tilde{\nu}\geq \int_\Omega u(Y)\d\tilde{\nu},~~X,Y\in \mathcal{X}_0.$$
   By Axiom (SRM), the function $u$ must be strictly increasing. Therefore, we may assume without loss of generality that $u(0)=0$ and $u(1)=1$. Fix the distortion function $\varphi$ such that $\varphi(u(x))=u(\varphi(x))=x$ for all $x\in [0,1]$. It is clear that $\varphi$ is continuous, strictly increasing, and that $c_A=\varphi(\tilde{\nu}(A))$ for all $A\in \mathcal{F}$, where $c_A=c_{\id_A}$. Let $(A_n)_{n\in \mathbb{N}}\subseteq\mathcal{F}$ be an increasing sequence and define $A=\bigcup_{n\in \N}A_n$. By Proposition \ref{prop:cert},
    $$\varphi(\tilde{\nu}(A))= c_{A}=\lim_{n\to\infty}c_{A_n}=\lim_{n\to\infty}\varphi(\tilde{\nu}(A_n))=\varphi\left(\lim_{n\to\infty}\tilde{\nu}(A_n)\right).$$
    Therefore, $\tilde{\nu}$ is upward continuous. A similar proof will show that $\tilde{\nu}$ is downward continuous and, therefore, $\tilde{\nu}$ is continuous.

   Let $R_1,R_2\in \mathcal{G}$ satisfy $\mathbb{P}(R_1)=\mathbb{P}(R_2)$. By Axiom (RC), $R_1\simeq R_2$ and
   $\tilde{\nu}(R_1)=\tilde{\nu}(R_2)$. Therefore, there exists a distortion function $g$ such that for all $R\in\mathcal{G}$, $\tilde{\nu}(R)=g(\mathbb{P}(R)).$ Since $\tilde{\nu}$ is continuous, $g$ is continuous. Furthermore, by Axiom (SRM), $g$ is strictly increasing. Denote by $h$ the continuous and strictly increasing distortion function satisfying $g(h(x))=h(g(x))=x$ for all $x\in [0,1]$. Define the capacity $\nu=h\circ \tilde{\nu}$. Clearly $\nu$ is a continuous risk-conforming capacity. Therefore \eqref{eq:rep} holds for $\mathcal{X}_0$.

   Finally, we need to show that \eqref{eq:rep} holds for $\mathcal{X}$. Given $X,Y\in \mathcal{X}$ satisfying $X\succsim Y$, fix the bounded sequences $(X_n)_{n\in \mathbb{N}}\subseteq \mathcal{X}_0$ and $(Y_n)_{n\in \mathbb{N}}\subseteq \mathcal{X}_0$ satisfying $X_n\geq X$ for all $n\in \N$, $Y\geq Y_n$ for all $n\in \N$, $X_n\to X$ pointwise, and $Y_n\to Y$ pointwise. Since $X$ and $Y$ are bounded, the sequences $(X_n)_{n\in \mathbb{N}}$ and $(Y_n)_{n\in \mathbb{N}}$ exist. By Axiom (M), it holds that for all $n\in \mathbb{N}$, $X_n\succsim Y_n$. Therefore, for all $n\in \mathbb{N}$,
   $$\int_{\Omega} u(X_n)\d(g\circ \nu)\geq \int_{\Omega} u(Y_n)\d(g\circ \nu).$$
   Because $u(X_n)\to u(X)$ pointwise, $u(Y_n)\to u(Y)$ pointwise, $(u(X_n))_{n\in \N}$ is bounded, $(u(Y_n))_{n\in \N}$ is bounded, and $g\circ\nu$ is continuous, by using \citet[Theorem 11.9]{WK10A},  we have 
   $$\int_{\Omega}u(X)\d(g\circ \nu)=\lim_{n\to\infty}\int_{\Omega} u(X_n)\d(g\circ \nu)\geq \lim_{n\to\infty}\int_{\Omega} u(Y_n)\d(g\circ \nu)=\int_{\Omega}u(Y)\d(g\circ\nu).$$ Given $X,Y\in \mathcal{X}$ satisfying 
   $$\int_{\Omega}u(X)\d(g\circ \nu)\geq \int_{\Omega}u(Y)\d(g\circ\nu),$$ fix the same bounded sequences $(X_n)_{n\in \mathbb{N}}\subseteq \mathcal{X}_0$ and $(Y_n)_{n\in \mathbb{N}}\subseteq \mathcal{X}_0$ as before. By the monotonicity of the Choquet integral, we have that for all $m,n\in\N$,
   $$\int_{\Omega} u(X_m)\d(g\circ \nu)\geq \int_{\Omega} u(Y_n)\d(g\circ \nu).$$
   Therefore, for all $m,n\in\N$, $X_m\succsim Y_n$. Since  $(Y_n)_{n\in \N}$ is bounded and $Y_n\to Y$ pointwise, by Axiom (C), $X_m\succsim Y$ for all $m\in \N$. Since  $(X_m)_{m\in \N}$ is bounded and $X_m\to X$ pointwise, by Axiom (C), $X\succsim Y$. Thus, \eqref{eq:rep} holds.

   Conversely, assume that $\succsim$ is a CRDU preference relation. The fact that $\succsim$ satisfies Axioms (RC), (M), (SRM), and (C) follows from Proposition \ref{prop:consis}.
   
   Let $R\in \mathcal{G}_{\mathrm{cf}}$ and $x,y,z,z'\in \mathbb{R}$ be chosen such that $x\geq y$ and $z,z'\in [y,x]$. We have 
   $$u(c_{xRz})=u(x)g(1/2)+u(z)(1-g(1/2))\geq u(z')g(1/2)+u(y)(1-g(1/2))=u(c_{z'Ay}).$$
   Therefore, $c_{xAz}\geq c_{z'Ay}$. Using a similar argument, one can show that $c_{xRz'}\geq c_{zRy}$. Thus, 
   \begin{align*}
       & \int_{\Omega}u(c_{xRz}Rc_{z'Ry})\d(g\circ\nu)\\&=u(c_{xRz})g(1/2)+u(c_{z'Ry})(1-g(1/2))\\&=\left[u(x)g(1/2)+u(z)(1-g(1/2))\right]g(1/2) +\left[u(z')g(1/2)+u(y)(1-g(1/2))\right](1-g(1/2))\\&=\left[u(x)g(1/2)+u(z')(1-g(1/2))\right]g(1/2) +\left[u(z)g(1/2)+u(y)(1-g(1/2))\right](1-g(1/2))\\&=u(c_{xRz'})g(1/2)+u(c_{zRy})(1-g(1/2))=\int_{\Omega}u(c_{xRz'}Rc_{zRy})\d(g\circ\nu),
   \end{align*}
   that is, 
   $c_{xRz}Rc_{z'Ry}\simeq c_{xRz'}Rc_{zRy}$, and hence $\succsim$ satisfies Axiom (RS).

   Let $R\in \mathcal{G}_{\mathrm{cf}}$, $x,y\in \mathbb{R}$ with $x\geq y$, and $z=x\oplus y$. It holds that
   \begin{align*}
    &\left[u(x)g(1/2)+u(z)(1-g(1/2))\right]g(1/2)+\left[u(z)g(1/2)+u(y)(1-g(1/2))\right](1-g(1/2))\\&=u(c_{xRz})g(1/2)+u(c_{zRy})(1-g(1/2))\\&=u(c_{c_{xRz}Rc_{zRy}})\\&=u(c_{xRy})= u(x)g(1/2)+u(y)(1-g(1/2)).  \end{align*} 
    Therefore,
    \begin{align*}
    u(x)g(1/2)(1-g(1/2))+u(y)g(1/2)(1-g(1/2)) =2u(z)g(1/2)(1-g(1/2)).
   \end{align*}
   As $1>g(1/2)>0$, it holds that $u(z)=(1/2)u(x)+(1/2)u(y).$ Therefore, $u((1/2)X\oplus_R(1/2)Y)=(1/2)u(X)+(1/2)u(Y)$ for all $X,Y\in \mathcal{X}$. Using the comonotonic additivity of the Choquet integral, it is clear that $\succsim$ satisfies Axiom (SCI). 
   % The uniqueness of $u$ up to positive affine transformations follows from \citet[Proposition 9]{GMMS01A}. Let $g_1$ and $g_2$ be strictly increasing continuous distortion functions, $\nu_1$ and $\nu_2$ be continuous risk-conforming capacities, and assume that $g_1\circ\nu_1$ and $g_2\circ\nu_2$ satisfy \eqref{eq:rep}. By \citet[Proposition 9]{GMMS01A}, $g_1\circ\nu_1=g_2\circ\nu_2$. Therefore, for all $x\in [0,1]$, as we can find an $R\in \mathcal{G}$ such that $\mathbb{P}(R)=x$, it follows that $$g_1(x)=g_1(\mathbb{P}(R))=g_1(\nu_1(R))=g_2(\nu_2(R))=g_2(x).$$
   % Furthermore, as $g_1=g_2$, $\nu_1=\nu_2$.
\end{proof}

\section{Proofs accompanying Section \ref{sec:prop}}
\label{App:prop}

\begin{proof}[Proof of Theorem \ref{prop:comAm}]
    Assume that $\succsim_2$ is more ambiguity averse than $\succsim_1$. Let $A\in \mathcal{F}$ and $R_1\in \mathcal{G}$ such that $A\simeq_1 R_1$. By the definition of comparative ambiguity attitudes, $R_1\succsim_2 A$. Find $R_2\in \mathcal{G}$ such that $A\simeq_2 R_2$. As $R_1 \succsim_2 R_2$, it holds that $\mathbb{P}(R_1)\geq \mathbb{P}(R_2)$ and $\nu_1(A)\geq \nu_2(A).$ To show the converse, let $R\in \mathcal{G}$ and $A\in \mathcal{F}$ satisfy $R\succsim_1 A$. Furthermore, let $R_1,R_2\in \mathcal{G}$ satisfy $A\simeq_1 R_1$ and $A\simeq_2 R_2$. Since $\nu_1(A)\geq \nu_2(A)$ and $R\succsim_1 A$, $\mathbb{P}(R)\geq\mathbb{P}(R_1)\geq \mathbb{P}(R_2)$. Therefore, $R\succsim_2 R_2\simeq_2 A$.
\end{proof}

\begin{proof}[Proof of Proposition \ref{prop:attitudes}]
    Assume that the $\succsim_1$-utility is equal to the $\succsim_2$-utility, the $\succsim_1$-distortion is equal to the $\succsim_2$-distortion, and $\succsim_2$ is more ambiguity averse than $\succsim_1$. Denote the shared utility by $u$ and the shared distortion by $g$. Denote the $\succsim_1$-matching probability by $\nu_1$ and the $\succsim_2$-matching probability by $\nu_2$. By Theorem \ref{prop:comAm}, for all $A\in \mathcal{F}$, $g(\nu_1(A))\geq g(\nu_2(A))$. Let $X\in \mathcal{X}(\mathcal{G})$ and $Y\in \mathcal{X}$ satisfy $X\succsim_1Y~(X\succ_1Y)$. We have 
    $$\int_\Omega u(X)\d(g\circ\nu_2)=\int_\Omega u(X)\d(g\circ\mathbb{P})=\int_\Omega u(X)\d(g\circ\nu_1)\overset{(>)}{\geq} \int_{\Omega}Y\d(g\circ\nu_1)\geq \int_{\Omega}Y\d(g\circ\nu_2).$$
    Therefore, $X\succsim_2 Y~(X\succ_2Y)$.

    For the converse, assume that \eqref{eq:ep} holds true. Since \eqref{eq:ep} is stronger than the definition of comparative ambiguity aversion, $\succsim_2$ is more ambiguity averse than $\succsim_1$. Let $X,Y\in \mathcal{X}(\mathcal{G})$. If $X\succsim_1 Y$, then $X\succsim_2 Y$ by \eqref{eq:ep}.
    If $X\succsim_2 Y$, then $X\succsim_1 Y$ or else there would be a contradiction with \eqref{eq:ep}. Thus,
    $$X\succsim_1 Y\iff X\succsim_2 Y,~~X,Y\in \mathcal{X}(\mathcal{G}).$$
    Denote the $\succsim_1$-utility by $u_1$, the $\succsim_1$-distortion by $g_1$, the $\succsim_2$-utility by $u_2$, and the $\succsim_2$-distortion by $g_2$. Therefore, the functions
    $$I_1(X)=\int_\Omega u_1(X)\d(g_1\circ\mathbb{P})\text{ and }I_2(X)=\int_\Omega u_2(X)\d(g_2\circ\mathbb{P}),~~X\in \mathcal{X}(\mathcal{G}),$$
    both represent the preference relation on $\mathcal{X}(\mathcal{G})$. By \citet[Theorem 11]{GM01A}, as $\succsim$ is a biseperable preference relation, 
$u_1=au_2+b$ for some $a>0$ and $b\in \mathbb{R}$, and $g_1\circ\mathbb{P}=g_2\circ\mathbb{P}.$ Clearly, this means that $g_1=g_2$. As $u_1(0)=u_2(0)=0$ and $u_1(1)=u_2(1)=1$, we have that $u_1=u_2$.
\end{proof}

The next lemma is a variant of a well-known result; see Lemma 6 of \cite{AW20A}. Although their result assumes some additional properties, Property (AN) is sufficient for the proof of their part (ii) to go through, which is what the following lemma states. 
\begin{lemma} \label{lem:need}
    Let $\succsim$ satisfy Property (AN). Then for all $A_1,A_2,B_1,B_2\in \mathcal{F}$ satisfying $A_1\cap A_2=\varnothing$, $B_1\cap B_2=\varnothing$, and $A_1\succsim B_1$, it holds that $$A_2\succsim B_2~(\mbox{resp.~} A_2\succ B_2)\implies A_1\cup A_2\succsim B_1\cup B_2~(\mbox{resp.~} A_1\cup A_2\succ B_1\cup B_2).$$ 
\end{lemma} 

\begin{proof}[Proof of Theorem \ref{th:AN}]
    Let $\nu$ denote the $\succsim$-matching probability, whose existence is given by Proposition \ref{proposition:match}. Assume that $\succsim$ satisfies Property (AN). Let $A,B\in \mathcal{F}$ be disjoint. For the sake of contradiction, assume that  $\nu(A)+\nu(B)> 1$. Let $R_A\in \mathcal{G}$ satisfy $A\sim R_A$. Therefore, $\nu(A)=\mathbb{P}(R_A)$. Since $\nu(B)>1-\nu(A)=\mathbb{P}(R^c_A)$, we have, by Proposition \ref{proposition:match}, $B\succ R^c_A$. Therefore, by Lemma \ref{lem:need} and Axiom (MO), 
    $\Omega \succsim A\cup B\succ R_A\cup R^c_A=\Omega,$ which is a contradiction.
    
    Since $\nu(A)+\nu(B)\leq 1$, we can find disjoint $R_A,R_B\in \mathcal{G}$ such that $A\sim R_A$ and $B\sim R_B$.
    Therefore, by Lemma \ref{lem:need}, we have $A\cup B\sim R_A\cup R_B$. Thus,
    $$\nu(A\cup B)=\mathbb{P}(R_A\cup R_B)=\mathbb{P}(R_A)+\mathbb{P}(R_B)=\nu(A)+\nu(B).$$
    Since $\nu$ is continuous (Proposition \ref{proposition:match}), we have that $\nu\in \mathcal{M}_1$.

    Conversely, assume that $\nu\in \mathcal{M}_1$. Let $A,B,C\in \mathcal{F}$ satisfy $(A\cup B)\cap C=\varnothing.$ By Proposition \ref{proposition:match},
    $$A\succsim B\iff \nu(A)\geq \nu(B)\iff \nu(A\cup C)\geq \nu(B\cup C)\iff A\cup C\succsim B\cup C.$$
    Thus, $\succsim$ satisfies (AN).
\end{proof}

In the following proof, $\mathcal{X}_0(\mathcal{G}) = \mathcal{X}_0 \cap \mathcal{X}(\mathcal{G})$, where $\mathcal{X}_0$ denotes the collection of simple acts. 

\begin{proof}[Proof of Proposition \ref{prop:FSD}]
    Let $\nu$ denote the $\succsim$-matching probability. Assume that Property (AN) holds. Therefore, by Theorem \ref{th:AN}, $\nu\in \mathcal{M}_1$. Let $X,Y\in \mathcal{X}$ satisfy $X\geq_{\nu}^{\mathrm{sd}}Y$. We have that $$g(\nu(u(X)>x))\geq g(\nu(u(Y)>x)),~~x\in \mathbb{R},$$ where $u$ is the $\succsim$-utility and $g$ is the $\succsim$-distortion. Therefore, $\int_{\Omega}u(X)\d(g\circ \nu)\geq \int_{\Omega}u(Y)\d(g\circ \nu)$ and $X\succsim Y$. Thus $\succsim$ satisfies \eqref{eq:FSD} with $\mu=\nu$.

    Assume that \eqref{eq:FSD} holds with $\mu\in \mathcal{M}_1$. We claim that 
    $$\mathbb{P}(A)=0\iff \mu(A)=0,~~A\in \mathcal{G}.$$
    Define $\epsilon_{\mu}=\inf\left\{\mu(B):B\in \mathcal{G}\text{ and }B\succ \varnothing\right\}.$ If $B\in \mathcal{G}$ satisfies $\mu(B)>\epsilon_{\mu}$, then there exists $B'\in \mathcal{G}$ such that $\mu(B)>\mu(B')$ and $B\succ\varnothing$. As $\succsim$ satisfies Property \eqref{eq:FSD}, this implies that $B\succsim B'\succ\varnothing$. Furthermore, if $B\in \mathcal{G}$ satisfies $B\succ\varnothing$, then $\mu(B)\geq \epsilon_{\mu}$. For the sake of contradiction, assume that $\epsilon_{\mu}>0$ and fix $n\in\N$ satisfying $n>1/\epsilon_{\mu}$. Since $(\Omega,\mathcal{G},\mathbb{P})$ is atomless, we can find a partition of $\Omega$, $\{A_1,\dots,A_n\}\subseteq \mathcal{G}$, such that $\mathbb{P}(A_k)=1/n$ for each $k\in \{1,\dots,n\}.$ By Axiom (SRM), $A_k\succ\varnothing$ for all $k\in \{1,\dots,n\}$.
    Therefore $\mu(A_k)\geq \epsilon_{\mu}$ for all $k\in \{1,\dots,n\}$. Thus 
    $$1=\mu(\Omega)=\sum_{k=1}^n \mu(A_k)\geq n\epsilon_{\mu}>1,$$
    a contradiction. Therefore, $\epsilon_{\mu}=0$.
    For all $A\in \mathcal{G}$, by Axiom (SRM), we have
    $$\mathbb{P}(A)=0\iff A\simeq \varnothing\iff \mu(A)\leq \epsilon_{\mu}\iff \mu(A)=0.$$
    
    Since $(\Omega,\mathcal{G},\mathbb{P})$ is atomless, $(\Omega,\mathcal{G},\mu|_{\mathcal{G}})$ is atomless. Define the function $\psi:\mathcal{X}_0(\mathcal{G})\to \mathbb{R}$ by
    $$\psi(X)=\int_{\Omega} u(X)d(g\circ \mathbb{P}),$$
    where $u$ is the $\succsim$-utility and $g$ is the $\succsim$-distortion. For the sake of contradiction, assume that $\mathbb{P}\neq \mu|_{\mathcal{G}}$. Then, by \cite[Theorem 6]{L24A}, $\psi$ must satisfy
    $$\psi(X)=T\left(\max\{X(\omega):\omega\in \Omega\},\min\{X(\omega):\omega\in \Omega\}\right),$$
    where $T:\mathbb{R}^2\to\mathbb{R}$. Let $\{R_1,R_2,R_3\}\subseteq \mathcal{G}$ be a partition of $\Omega$ satisfying $\mathbb{P}(R_1)=\mathbb{P}(R_2)=\mathbb{P}(R_3)=1/3$ and $x_1,x_2,x_3,x_4\in \mathbb{R}$ satisfy $x_1>x_2>x_3>x_4$. Therefore, we have
    \begin{align*}
        T(x_1,x_4)=\psi(x_1R_1x_2R_2x_4)&=\int_{\Omega}u(x_1R_1x_2R_2x_4)\d(g\circ \mathbb{P})\\&>\int_{\Omega}u(x_1R_1x_3R_2x_4)\d(g\circ \mathbb{P})=\psi(x_1R_1x_3R_2x_4)=T(x_1,x_4),
    \end{align*}
    which is a contradiction. Thus, $\mathbb{P}= \mu|_{\mathcal{G}}$. Let $A\in \mathcal{F}$ and $R_A\in \mathcal{G}$ such that $\mu(A)=\mathbb{P}(R_A)$. Since $\succsim$ satisfies \eqref{eq:FSD}, we have $A\simeq R_A$. Therefore $\nu(A)=\mathbb{P}(R_A)=\mu(A)$. Thus $\nu=\mu$ and $\nu\in \mathcal{M}_1$. By Theorem \ref{th:AN}, $\succsim$ satisfies Property (AN).
\end{proof}

\begin{proof}[Proof of Proposition \ref{prop:UA}]
    Assume that $\succsim$ satisfies Property (DS). Let $X,Y\in \mathcal{X}$ satisfy $X\succsim Y$. We claim that there exists $c_0\geq 0$ such that $X-c_0\simeq Y.$ Define $\phi:[0,\infty)\to \mathbb{R}$ by 
    $$\phi(c)=\int_{\Omega}u(X-c)\d(g\circ\nu).$$
     By \citet[Theorem 11.9]{WK10A}, the function $\phi$ is continuous. Define the constant $c_1=\|X\|+\|Y\|.$ It holds that
    $X-c_1\leq -\|Y\|\leq Y.$ By Axiom (M), $Y\succsim X-c_1$ and $$\phi(0)=\int_{\Omega}u(X)\d(g\circ\nu)=\int_{\Omega}u(Y)\d(g\circ\nu)\geq \int_{\Omega}u(X-c_1)\d(g\circ\nu)=\phi(c_1).$$
    Therefore, by the intermediate value theorem, there exists $c_0\geq 0$ such that $X-c_0\simeq Y$.
    By Property (DS) and Axiom (M), for all $\lambda\in [0,1]$,
    $$\lambda X+(1-\lambda)Y\succsim \lambda (X-c_0)+(1-\lambda)Y\succsim Y.$$
    Therefore, $\succsim$ is convex as defined in \citet[Definition 1]{WY19A}. By \citet[Corollary 7]{WY19A}, as $u$ is strictly increasing, $u$ is concave and $g\circ \nu$ is supermodular. The proof of the converse is straightforward.
\end{proof}

\begin{proof}[Proof of Theorem \ref{prop:main}]
    This is a consequence of Proposition \ref{prop:latt} in Appendix \ref{ap:mod}, Proposition \ref{prop:UA}, and the remark regarding the work of \cite{SZ08A}. 
\end{proof}

In the following proof, given $A\in \mathcal{F}$, we use $\mathbb{Q}(A|\mathcal{H})$ to denote the conditional probability of $A$ given the sub-$\sigma$-algebra $\mathcal{H}$. 

\begin{proof}[Proof of the claim in Example \ref{ex:counter}]
    Define the capacity 
    $$\tilde{\nu}(A)=\int_{\Omega} g(\mathbb{Q}(A|\mathcal{H}))\d\mathbb{Q},~~A\in \mathcal{F}.$$
    An application of both the monotone convergence theorem and the conditional monotone convergence theorem will verify that $\tilde{\nu}$ is continuous. As $\mathcal{G}$ and $\mathcal{H}$ are independent, it is straightforward to show that for all $A\in \mathcal{G}$, $\tilde{\nu}(A)=g(\mathbb{Q}(A))=g(\mathbb{P}(A))$. Also, for all $A\in \mathcal{H}$, $\tilde{\nu}(A)=\mathbb{Q}(A)$. Since $g$ is convex (as $g$ is the inverse of $h$), $g(x)\leq x$ for all $x\in [0,1]$. Therefore, $\tilde{\nu}(A)\leq \mathbb{Q}(A)$ for all $A\in \mathcal{F}$. Finally, by Jensen's inequality, it holds that 
    $$\tilde{\nu}(A)=\int_{\Omega} g(\mathbb{Q}(A|\mathcal{H}))\d\mathbb{Q}\geq g\left(\int_{\Omega}\mathbb{Q}(A|\mathcal{H})\d\mathbb{Q}\right)=g(\mathbb{Q}(A)),~~A\in \mathcal{F}.$$
    Given $A,B\in \mathcal{F}$, as $\id_{A\cup B}+\id_{A\cap B}=\id_A+\id_B$,
    $\mathbb{Q}(A\cup B|\mathcal{H})+\mathbb{Q}(A\cap B|\mathcal{H})=_{\mathbb{Q}}^{\mathrm{as}}\mathbb{Q}(A|\mathcal{H})+\mathbb{Q}(B|\mathcal{H}).$ Using a similar argument as \citet[Proposition 4.75]{FS16A}, it holds that
    \begin{equation}
        \label{eq:sup1}
        g(\mathbb{Q}(A\cup B|\mathcal{H}))+g(\mathbb{Q}(A\cap B|\mathcal{H}))\geq_{\mathbb{Q}}^{\mathrm{as}} g(\mathbb{Q}(A|\mathcal{H}))+g(\mathbb{Q}(B|\mathcal{H})).
    \end{equation}
    Applying the Choquet integral with respect to $\mathbb{Q}$ to \eqref{eq:sup1} shows that $\tilde{\nu}$ is supermodular. 
    
    Define the capacity $\nu$ by $\nu(A)=h(\tilde{\nu}(A))$ for all $A\in \mathcal{F}.$ It is clear that $\nu$ is risk conforming, $\nu(A)=h(\mathbb{Q}(A))$ for all $A\in \mathcal{H}$, $h(\mathbb{Q}(A))\geq \nu(A)\geq \mathbb{Q}(A)$ for all $A\in \mathcal{F}$. Since $g\circ\nu=\tilde{\nu}$, $g\circ\nu$ is supermodular. As $h$ is continuous, $\nu$ is continuous.
\end{proof}

\begin{proof}[Proof of Theorem \ref{prop:maxmin}]
    Assume that $\nu$ is supermodular. Let $X\in \mathcal{X}$, by \citet[Theorem 4.7]{MM04A}, as $\{u(X)>x\}_{x\in \mathbb{R}}$ is a chain, there exists $\tilde\mu\in \mathfrak{C}(\nu)$ such that 
    $\tilde\mu(u(X)>x)=\nu(u(X)>x)\leq\mu(u(X)>x)$
    for all $x\in \mathbb{R}$ and $\mu\in \mathfrak{C}(\nu)$. Therefore, as $g$ is increasing, $$\int_{\Omega}u(X)\d(g\circ\tilde\mu)=\int_{\Omega}u(X)\d(g\circ\nu)\leq \int_{\Omega}u(X)\d(g\circ\mu),~~\mu\in \mathfrak{C}(\nu).$$
    Thus, $\min_{\mu\in \mathfrak{C}(\nu)}\int_{\Omega}u(X)\d(g\circ\mu)=\int_{\Omega}u(X)\d(g\circ\tilde\mu)=\int_{\Omega}u(X)\d(g\circ\nu)$. Therefore, for all $X,Y\in \mathcal{X}$,
    $$X\succsim Y\iff \int_{\Omega}u(X)\d(g\circ\nu) \geq \int_{\Omega}u(Y)\d(g\circ\nu)  \iff \min_{\mu\in \mathfrak{C}(\nu)}\int_{\Omega}u(X)\d(g\circ\mu)\geq \min_{\mu\in \mathfrak{C}(\nu)}\int_{\Omega}u(Y)\d(g\circ\mu).$$
    
    Conversely, assume that \eqref{eq:MM} holds. Let $X\in \mathcal{X}$, we know that $u(c_X)=\int_{\Omega}u(X)\d(g\circ \nu)$. Therefore, by \eqref{eq:MM}, $$\int_{\Omega}u(X)\d(g\circ\nu)=\min_{\mu\in \mathfrak{C}(\nu)}\int_{\Omega}u(X)\d(g\circ\mu).$$ Let $A,B\in \mathcal{F}$ satisfy $A\subseteq B$ and $c\in \mathbb{R}$ satisfy $u(c)=2$, by the above result, there exists $\tilde{\mu}\in \mathfrak{C}(\nu)$ such that
    \begin{align*}
        g(\nu(A))+g(\nu(B))&=\int_{\Omega}u(\id_A)\d(g\circ \nu)+\int_{\Omega}u(\id_B)\d(g\circ \nu)\\&=\int_{\Omega}u(\id_A)+u(\id_B)\d(g\circ \nu)\\&=\int_{\Omega}u(c\id_A+\id_{B\setminus A})\d(g\circ \nu)\\&=\int_{\Omega}u(c\id_A+\id_{B\setminus A})\d(g\circ \tilde{\mu})=g(\tilde\mu(A))+g(\tilde\mu(B)).
    \end{align*}
    As $g(\nu(A))\leq g(\tilde\mu(A))$ and $g(\nu(B))\leq g(\tilde\mu(B))$, $g(\nu(A))= g(\tilde\mu(A))$ and $g(\nu(B))= g(\tilde\mu(B))$. Since $g$ is bijective, we have that $\nu(A)=\tilde{\mu}(A)$ and $\nu(B)=\tilde{\mu}(B)$. Therefore, by \citet[Theorem 4.7]{MM04A}, $\nu$ is supermodular.
\end{proof}

\section{Results and proofs accompanying Section \ref{sec:alt}}
\label{app:alt}

\begin{proof}[Proof of Proposition \ref{prop:RAA}]
    Let $\nu$ denote the $\succsim$-matching probability. Assume that $\succsim$ satisfies Property (RAA). Let $A\in \mathcal{F}$ and $R_A\in \mathcal{G}$ such that $A\simeq R_A$. Therefore, $\nu(A)=\mathbb{P}(R_A)$. Furthermore, let $R\in \mathcal{G}$ satisfy $\mathbb{Q}(R)=\mathbb{Q}(A)$. By Property (RAA), $R\succsim A\simeq R_A$. Therefore, by Proposition \ref{proposition:match} and the fact that $\mathbb{Q}$ is risk conforming, $$\mathbb{Q}(A)=\mathbb{Q}(R)=\mathbb{P}(R)\geq \mathbb{P}(R_A)=\nu(A).$$
    Since $A\in \mathcal{F}$ was arbitrary, $\mathbb{Q}\in \mathfrak{C}(\nu).$
    
    Conversely, assume that $\mathbb{Q}\in \mathfrak{C}(\nu)$. Denote by $u$ the $\succsim$-utility and by $g$ the $\succsim$-distortion. Let $X\in \mathcal{X}(\mathcal{G})$ and $Y\in \mathcal{X}$ satisfy $X=_{\mathbb{Q}}^{\mathrm{sd}} Y$. Since $\nu$ and $\mathbb{Q}$ are risk conforming, for all $x\in \mathbb{R}$, we have
    $$g(\nu(u(X)>x))=g(\mathbb{P}(u(X)>x))=g(\mathbb{Q}(u(X)>x))=g(\mathbb{Q}(u(Y)>x))\geq g(\nu(u(Y)>x)).$$
    Therefore, $\int_{\Omega} u(X)\d(g\circ\nu)\geq \int_{\Omega} u(Y)\d(g\circ\nu)$ and $X\succsim Y$. Therefore, $\succsim$ satisfies Property (RAA).
\end{proof}

\begin{proof}[Proof of Proposition \ref{prop:nullity}]
    Let $\nu$ denote the $\succsim$-matching probability. Assume that $\succsim$ satisfies Property (NSC). If $A,B\in \mathcal{F}$ satisfy $\mathbb{Q}(A\triangle B)=0$, then $\id_A =^{\mathrm{as}}_{\mathbb{Q}} \id_B$. By Property (NSC), $A\simeq B$. Therefore, $\nu(A)=\nu(B)$ and $\nu$ is $\mathbb{Q}$-consistent. 
    
    For the converse, assume that $\nu$ is $\mathbb{Q}$-consistent. Denote by $u$ the $\succsim$-utility and by $g$ the $\succsim$-distortion. Fix $X,Y\in \mathcal{X}$ satisfying $X=^{\mathrm{as}}_{\mathbb{Q}}Y$. As $\mathbb{Q}(u(X)\neq u(Y))=0$ and
    $$\{u(X)>x\}\triangle \{u(Y)>x\}\subseteq \{u(X)\neq u(Y)\},$$
    we obtain $\mathbb{Q}(\{u(X)>x\}\triangle \{u(Y)>x\})=0.$ Therefore, since $\nu$ is $\mathbb{Q}$-consistent, we have that $g(\nu(u(X)>x))=g(\nu(u(Y)>x))$ for all $x\in \mathbb{R}$. Thus, 
    $$\int_{\Omega}u(X)\d(g\circ \nu)=\int_{\Omega}u(Y)\d(g\circ \nu),$$
    and we can conclude $X\simeq Y$.
\end{proof}

To ease notation, given a distortion family $(g_X)_{X\in \mathcal{X}}$ and $A\in \mathcal{F}$, we use $g_A$ to denote $g_{\id_A}$.  

\begin{proof}[Proof of Theorem \ref{theo:2}]
    Assume that $\succsim$ is a CRDU preference relation satisfying \eqref{eq:rep} and Property (NSC). Denote by $\nu$ the $\succsim$-matching probability and by $g$ the $\succsim$-distortion. Fix $X\in \mathcal{X}$ and define
    $\tilde{g}_X:\mathcal{D}(X)\to[0,1]$ by
\begin{equation}
\label{eq:def1}\tilde{g}_X(\alpha)=g(\nu(X>x)),~~\text{where}~x\in \mathbb{R}~\text{satisfies}~\alpha=\mathbb{Q}(X>x).
\end{equation}
    To see that $\tilde{g}_X$ is well-defined, let $x,y\in\mathbb{R}$ satisfy $\mathbb{Q}(X>x)=\mathbb{Q}(X>y)$. It is straightforward to verify that
    $\mathbb{Q}(\{X>x\}\triangle\{X>y\})=0$. Therefore, by Proposition \ref{prop:nullity}, we have that $\nu(X>x)=\nu(X>y)$ and $\tilde{g}_X$ is well-defined. Furthermore, if $X\in \mathcal{X}(\mathcal{G})$, as $\nu$ and $\mathbb{Q}$ are risk conforming, it holds that \begin{equation}
        \label{eq:gd}\tilde{g}_X(\mathbb{Q}(X>x))=g(\nu(X>x))=g(\mathbb{P}(X>x))=g(\mathbb{Q}(X>x)), ~~x\in \mathbb{R}.
    \end{equation}
    Thus, if $X\in \mathcal{X}(\mathcal{G})$, $\tilde{g}_X(\alpha)=g(\alpha)$ for all $\alpha\in \mathcal{D}(X)$.
    
    We claim that $\tilde{g}_X$ is increasing. Let $\alpha,\beta\in \mathcal{D}(X)$ with $\alpha< \beta$. By the definition of $\mathcal{D}(X)$, there exists $x,y\in \mathbb{R}$ such that $\alpha=\mathbb{Q}(X>x)$ and $\beta=\mathbb{Q}(X>y)$. As $\alpha<\beta$, it must hold that $y<x$. Therefore, 
    $\tilde{g}_X(\alpha)=g(\nu(X>x))\leq  g(\nu(X>y))=\tilde{g}_X(\beta).$ 
    
    Next, we claim that $\tilde{g}_X$ is continuous. Let $(\alpha_n)_{n\in \N}\subseteq \mathcal{D}(X)$ be a strictly increasing sequence converging to $\alpha\in \mathcal{D}(X)$. There exists a strictly decreasing sequence $(x_n)_{n\in \N}\subseteq\mathbb{R}$ such that $\alpha_n=\mathbb{Q}(X>x_n)$. Define $x=\lim_{n\to\infty}x_n=\inf_{n\in \N}x_n$. It is clear that $\alpha=\mathbb{Q}(X>x)$. We have that 
    $$\lim_{n\to \infty}\tilde{g}_X(\alpha_n)=\lim_{n\to\infty}g(\nu(X>x_n))=g(\nu(X>x))=\tilde{g}_X(\alpha).$$
    Let $(\alpha_n)_{n\in \N}\subseteq \mathcal{D}(X)$ be a strictly decreasing sequence converging to $\alpha\in \mathcal{D}(X)$. There exists a strictly increasing sequence $(x_n)_{n\in \N}\subseteq\mathbb{R}$ such that $\alpha_n=\mathbb{Q}(X>x_n)$. As $\alpha\in \mathcal{D}(X)$, there exists $z\in \mathbb{R}$ such that $\mathbb{Q}(X>z)=\alpha$. Define $x=\lim_{n\to\infty}x_n=\sup_{n\in \N}x_n$, as $z>x_n$ for all $n\in\N$, $z\geq x$. As 
    $$\alpha=\lim_{n\to\infty}\mathbb{Q}(X>x_n)=\mathbb{Q}(X\geq x)\geq \mathbb{Q}(X>x)\geq \mathbb{Q}(X>z)=\alpha,$$
    we have $\mathbb{Q}(X=x)=0$. Therefore, $\mathbb{Q}(\{X>x\}\triangle \{X\geq x\})=0$. By Proposition \ref{prop:nullity}, it holds that $\nu(X>x)=\nu(X\geq x).$ Since $\mathbb{Q}(X>x)=\alpha$, we have
    $$\lim_{n\to \infty}\tilde{g}_X(\alpha_n)=\lim_{n\to\infty}g(\nu(X>x_n))=g(\nu(X\geq x))=g(\nu(X> x))=\tilde{g}_X(\alpha).$$
    Therefore, $\tilde{g}_X$ is continuous.

    If $X=\id_A$ for some $A\in \mathcal{F}$, define $g_A:[0,1]\to [0,1]$ by $g_A(\alpha)= \tilde{g}_A(\mathbb{Q}(A))$ for $\alpha\in (0,1)$, $g_A(0)=0$, and $g_A(1)=1$. It is clear that $g_A$ is a distortion function. Furthermore, as $\mathcal{D}(\id_A)=\{0,\mathbb{Q}(A),1\}$, $g_A(\alpha)=\tilde{g}_A(\alpha)$ for all $\alpha\in \mathcal{D}(\id_A)$. If $X\neq \id_A$ for some $A\in \mathcal{F}$, define $g_X:[0,1]\to[0,1]$ by $$g_X(\alpha)=\sup\{\tilde{g}_X(\beta):\beta\in \mathcal{D}(X)\text{ and }\beta\leq \alpha\}.$$ It is straightforward to verify that $g_X$ is a distortion function. Since $\tilde{g}_X$ is increasing, $g_X(\alpha)=\tilde{g}_X(\alpha)$ for all $\alpha\in \mathcal{D}(X)$.

    Since for all $X\in \mathcal{X}$, $g_X(\alpha)=\tilde{g}_X(\alpha)$ for all $\alpha\in \mathcal{D}(X)$, $(g_X)_{X\in \mathcal{X}}$ satisfies (b). For $A,B\in \mathcal{F}$ satisfying $A\subseteq B$ and $\alpha \in (0,1)$, we have
    $$g_A(\alpha)=\tilde{g}_A(\mathbb{Q}(A))=g(\nu(A))\leq g(\nu(B))=g_B(\alpha).$$ Therefore, $(g_X)_{X\in \mathcal{X}}$ is satisfies (a).  Let $X,Y\in \mathcal{X}$ be comonotonic and $\alpha\in \mathcal{D}(X)\cap \mathcal{D}(Y)$. By definition, there exists $x,y\in \mathbb{R}$ such that 
    $\alpha=\mathbb{Q}(X>x)=\mathbb{Q}(Y>y).$
    As $X$ and $Y$ are comonotonic, by \citet[Proposition 4.5]{D94A}, either $\{X>x\}\subseteq\{Y>y\}$ or $\{Y>y\}\subseteq \{X>x\}$. Since $\mathbb{Q}(X>x)=\mathbb{Q}(Y>y)$, it must hold that $\mathbb{Q}(\{Y>y\}\triangle \{X>x\})=0$. Therefore, by Proposition \ref{prop:nullity}, $$g_X(\alpha)=\tilde{g}_X(\alpha)=g(\nu(X>x))=g(\nu(Y>y))=\tilde{g}_Y(\alpha)=g_Y(\alpha),$$
    and $(g_X)_{X\in \mathcal{X}}$ satisfies (c). Thus $(g_X)_{X\in \mathcal{X}}$ is a distortion family. Furthermore, since $g_X(\alpha)=\tilde{g}_X(\alpha)$ holds for all $\alpha\in \mathcal{D}(X)$ and all $X\in \mathcal{X}$, it follows from \eqref{eq:gd} that for all $X\in \mathcal{X}(\mathcal{G})$, $g_X(\alpha)=g(\alpha)$ for all $\alpha\in \mathcal{D}(X)$. Therefore, $(g_X)_{X\in \mathcal{X}}$ is a $g$-distortion family.
    
    Denote by $u$ the $\succsim$-utility. Let $u':u(\mathbb{R})\to \mathbb{R}$ be the inverse of $u$. Given $X\in \mathcal{X}$ and $x\in u(\mathbb{R})$, by the definition of $\tilde{g}_X$, 
    $$g_X(\mathbb{Q}(u(X)>x))=g_X(\mathbb{Q}(X>u'(x)))=\tilde{g}_{X}(\mathbb{Q}(X>u'(x)))=g(\nu(X>u'(x)))=g(\nu(u(X)>x)).$$ 
    Therefore, $\int_{\Omega} u(X)\d(g\circ\nu)=\int_{\Omega} u(X)\d(g_X\circ\mathbb{Q})$ and $\succsim$ satisfies \eqref{eq:secondRep}.

    Let $(g_X)_{X\in \mathcal{X}}$ and $(\tilde{g}_X)_{X\in \mathcal{X}}$ be $g$-distortion families satisfying \eqref{eq:secondRep}. Fix $X\in \mathcal{X}$ and $\alpha\in \mathcal{D}(X)$. By definition, there exists $x\in \mathbb{R}$ such that $\alpha=\mathbb{Q}(X>x).$ Denote by $A=\{X>x\}$. Since $X$ and $\id_A$ are comonotonic and $\alpha\in \mathcal{D}(X)\cap \mathcal{D}(\id_A)$, $g_X(\alpha)=g_{A}(\alpha)$ and $\tilde{g}_X(\alpha)=\tilde{g}_{A}(\alpha).$ Let $c_A\in [0,1]$ satisfy $A\simeq c_A$,  by \eqref{eq:secondRep}, $$g_{A}(\alpha)=\int_\Omega u(\id_A) \d(g_{A}\circ \mathbb{Q})=u(c_A)=\int_\Omega u(\id_A) \d(\tilde{g}_{A}\circ \mathbb{Q})=\tilde{g}_{A}(\alpha).$$
Therefore, $g_{A}(\alpha)=\tilde{g}_{A}(\alpha)$ and $g_X(\alpha)=\tilde{g}_X(\alpha)$.

Assume that $\succsim$ satisfies Property (RAA). By Proposition \ref{prop:RAA}, $\mathbb{Q}\in \mathfrak{C}(\nu)$. Let $X\in \mathcal{X}$ and $\alpha\in \mathcal{D}(X)$. Find $x\in \mathbb{R}$ such that $\alpha=\mathbb{Q}(X>x)$. By \eqref{eq:def1}, we have
    $$g_X(\alpha)=\tilde{g}_X(\alpha)=g_X(\mathbb{Q}(X>x))=g(\nu(X>x))\leq g(\mathbb{Q}(X>x))=g(\alpha).$$
    Therefore, $(g_X)_{X\in \mathcal{X}}$ is ambiguity averse.

    For the converse, assume that $(g_X)_{X\in \mathcal{X}}$ is ambiguity averse. Let $X\in \mathcal{X}(\mathcal{G})$ and $Y\in \mathcal{X}$ satisfy $X=_{\mathbb{Q}}^{\mathrm{sd}}Y$, meaning, $\mathbb{Q}(X>x)=\mathbb{Q}(Y>x)$ for all $x\in \mathbb{R}$. Since $g_X(\alpha)=g(\alpha)$ for all $\alpha \in \mathcal{D}(X)$, and $g_Y(\alpha)\leq g(\alpha)$ for all $\alpha \in \mathcal{D}(Y)$, we have, for $x\in u(\mathbb{R})$,
    \begin{align*}
        g_X(\mathbb{Q}(u(X)>x))&=g_X(\mathbb{Q}(X>u'(x)))\\&=g(\mathbb{Q}(X>u'(x)))\\&=g(\mathbb{Q}(Y>u'(x)))\\&\geq g_Y(\mathbb{Q}(Y>u'(x)))=g_Y(\mathbb{Q}(u(Y)>x)),
    \end{align*}
    where $u':u(\mathbb{R})\to\mathbb{R}$ is the inverse of $u$. Thus, $\int_{\Omega}u(X)\d(g_X\circ \mathbb{Q})\geq \int_{\Omega}u(Y)\d(g_X\circ \mathbb{Q})$, and $X\succsim Y$. Therefore, $\succsim$ satisfies (RAA). 
\end{proof}

The following theorem confirms that \eqref{eq:secondRep} is also sufficient for $\succsim$ to be a CRDU preference relation.

\begin{theorem}
    \label{theo:3}
    Let $\succsim$ be a preference relation. Then $\succsim$ is a CRDU preference relation satisfying Property (NSC) if and only if there exist a continuous strictly increasing vNM utility function $u$, a continuous strictly increasing distortion function $g$, and a $g$-distortion family $(g_X)_{X\in \mathcal{X}}$ such that 
    \begin{equation}
    \label{eq:secondRep2}
        X\succsim Y \iff \int_{\Omega} u(X)\d(g_X\circ\mathbb{Q})\geq \int_{\Omega} u(Y)\d(g_Y\circ\mathbb{Q}),~~X,Y\in \mathcal{X}.
    \end{equation}
    Moreover, $u$ is unique up to positive affine transformations and can be taken as the $\succsim$-utility, $g$ is unique and given by the $\succsim$-risk distortion, and for each $X\in \mathcal{X}$, $g_X$ is uniquely defined on $\mathcal{D}(X)$. 
\end{theorem}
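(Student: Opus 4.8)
The ``only if'' direction is immediate from what precedes it: if $\succsim$ is a CRDU preference relation satisfying Property (NSC), then Theorem \ref{theo:2} supplies the representation \eqref{eq:secondRep} with a $g$-distortion family, and the $u$ and $g$ appearing there are exactly a strictly increasing vNM utility and a strictly increasing continuous distortion function, so \eqref{eq:secondRep2} holds. The substance is the converse: assuming the representation \eqref{eq:secondRep2}, show that $\succsim$ is a CRDU preference relation with Property (NSC). The plan is to reduce \eqref{eq:secondRep2} to the canonical representation \eqref{eq:rep} by assembling a single capacity out of the family, and then invoke the ``if'' direction of Theorem \ref{theo:good}.

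Since $g$ is strictly increasing and continuous with $g(0)=0$ and $g(1)=1$, it is a homeomorphism of $[0,1]$; writing $g^{-1}$ for its (continuous, strictly increasing) inverse, define $\nu\colon\mathcal F\to[0,1]$ by $\nu(A)=g^{-1}\bigl(g_A(\mathbb P(A))\bigr)$. Normalization $\nu(\varnothing)=0$, $\nu(\Omega)=1$ is clear; monotonicity follows from property (a) of a distortion family together with monotonicity of each $g_B$ and of $\mathbb P$; and $\nu|_{\mathcal G}=\mathbb P|_{\mathcal G}$ follows because $(g_X)_{X\in\mathcal X}$ is a $g$-distortion family, so $g_A=g$ on $\mathcal D(\id_A)$ for $A\in\mathcal G$. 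The key observation is that, for every $X\in\mathcal X$ and $x\in\mathbb R$, the acts $X$ and $\id_{\{X>x\}}$ are comonotonic and $\mathbb P(X>x)\in\mathcal D(X)\cap\mathcal D(\id_{\{X>x\}})$, so property (c) gives $g_X(\mathbb P(X>x))=g_{\{X>x\}}(\mathbb P(X>x))=g(\nu(X>x))$. Hence $g_X\circ\mathbb P$ and $g\circ\nu$ agree on all superlevel sets of $u(X)$, whence $\int_\Omega u(X)\,\d(g_X\circ\mathbb P)=\int_\Omega u(X)\,\d(g\circ\nu)$ for every $X$, and \eqref{eq:secondRep2} becomes exactly \eqref{eq:rep} for this $\nu$.

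The main obstacle is showing $\nu$ is continuous, since the definition of a $g$-distortion family says nothing directly about how $g_{A_n}$ varies along a sequence. For $A_n\uparrow A$, the trick is to build a single act $X\in\mathcal X$ whose superlevel sets recover the $A_n$ and $A$: put $X=2$ on $A_1$, $X=1+2^{-n}$ on $A_{n+1}\setminus A_n$, and $X=0$ on $A^c$, so that $\{X>1+2^{-n}\}=A_n$ and $\{X>1\}=A$, giving $\mathbb P(A_n),\mathbb P(A)\in\mathcal D(X)$ with $\mathbb P(A_n)\uparrow\mathbb P(A)$. Property (b) makes $g_X$ continuous on $\mathcal D(X)$, while property (c) (as above) identifies $g_X(\mathbb P(A_n))=g(\nu(A_n))$ and $g_X(\mathbb P(A))=g(\nu(A))$; passing to the limit and applying $g^{-1}$ yields $\nu(A_n)\to\nu(A)$. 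A symmetric construction (placing the large value on $A=\bigcap_n A_n$) handles decreasing sequences. Thus $\nu$ is a continuous risk-conforming capacity and, by the ``if'' direction of Theorem \ref{theo:good}, $\succsim$ is a CRDU preference relation, with $u$, $g$, and $\nu$ its $\succsim$-utility (up to positive affine transformation), $\succsim$-distortion, and $\succsim$-matching probability.

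It remains to get Property (NSC) and the uniqueness claims. By Proposition \ref{prop:nullity} it suffices to verify that $\nu$ is $\mathbb P$-consistent: if $\mathbb P(A\triangle B)=0$ then $\mathbb P(A\cap B)=\mathbb P(A)=\mathbb P(B)=\mathbb P(A\cup B)=:p$, and sandwiching $A\cap B\subseteq A\subseteq A\cup B$ via property (a), combined with property (c) applied to the comonotonic pair $\id_{A\cap B},\id_{A\cup B}$ at their common domain point $p$, forces $g_A(p)=g_B(p)$, hence $\nu(A)=\nu(B)$ (the cases $p\in\{0,1\}$ being trivial). Finally, any two representations of $\succsim$ of the form \eqref{eq:secondRep2} both reduce, by the construction above, to canonical representations \eqref{eq:rep}; the uniqueness in Theorem \ref{theo:good} then shows $u$ is unique up to positive affine transformations and is an affine transformation of the $\succsim$-utility, $g$ is unique and equals the $\succsim$-distortion (the $\succsim$-risk distortion), and $\nu$ is unique; and since $g_X(\alpha)=g(\nu(X>x))$ whenever $\alpha=\mathbb P(X>x)\in\mathcal D(X)$, each $g_X$ is uniquely determined on $\mathcal D(X)$.
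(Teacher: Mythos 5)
Your proof is correct and follows essentially the same route as the paper's: both directions reduce to Theorems \ref{theo:2} and \ref{theo:good} by assembling the capacity $\nu(A)=g^{-1}\bigl(g_A(\mathbb{P}(A))\bigr)$, proving its continuity via an auxiliary act whose superlevel sets realize the sequence $(A_n)$ inside a single $\mathcal{D}(X)$, collapsing \eqref{eq:secondRep2} to \eqref{eq:rep} through property (c) applied to $X$ and $\id_{\{X>x\}}$, and checking $\mathbb{P}$-consistency for Property (NSC). Your minor variations (folding the paper's $\tilde\nu$ and $h\circ\tilde\nu$ into one definition, a differently scaled auxiliary act, and routing the (NSC) step through $\id_{A\cup B}$ rather than $\id_A$) are all sound and immaterial.
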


\begin{proof}
    The forward direction follows from Theorem \ref{theo:2}.
     Conversely, assume that there exist a continuous strictly increasing vNM utility $u$, a continuous strictly increasing distortion function $g$, and a $g$-distortion family $(g_X)_{X\in \mathcal{X}}$ such that 
    $$X\succsim Y \iff \int_{\Omega} u(X)\d(g_X\circ\mathbb{Q})\geq \int_{\Omega} u(Y)\d(g_Y\circ\mathbb{Q}),~~X,Y\in \mathcal{X}.$$
    
    Given $A\in \mathcal{F}$, define 
    $\tilde{\nu}(A)=g_A(\mathbb{Q}(A))$. Clearly, $\tilde{\nu}(\varnothing)=0$ and $\tilde{\nu}(\Omega)=1$. Let $A,B\in\mathcal{F}$ satisfy $A\subseteq B$, as $(g_X)_{X\in \mathcal{X}}$ satisfies (a), $$\tilde{\nu}(A)=g_A(\mathbb{Q}(A))\leq g_A(\mathbb{Q}(B))\leq g_B(\mathbb{Q}(B))=\tilde{\nu}(B).$$
    Therefore, $\tilde{\nu}(A)\leq \tilde{\nu}(B)$. We claim that $\tilde{\nu}$ is continuous. Let $(A_n)_{n\in\N}\subseteq\mathcal{F}$ be increasing and define $A=\bigcup_{n=1}^\infty A_n$ and $A_0=\varnothing$.
    Define $X=\sum_{n=1}^\infty 2^{-(n-1)}\id_{A_n\setminus A_{n-1}}$, it is straightforward to show  $$A_n=\left\{X>\frac{3}{4}\times 2^{-(n-1)} \right\},~~n\in \N,~~\text{and}~~A=\{X>0\}.$$
     Since $X$ and $\id_{A_n}$ are comonotonic for all $n\in \mathbb{N}$, $(g_X)_{X\in \mathcal{X}}$ satisfies (c), and $\mathbb{Q}(A_n)\in \mathcal{D}(X)\cap \mathcal{D}(\id_{A_n})$ for all $n\in \N$, it follows that $g_X(\mathbb{Q}(A_n))=g_{A_n}(\mathbb{Q}(A_n))=\tilde{\nu}(A_n)$ for all $n\in \N$. A similar argument will show that $g_X(\mathbb{Q}(A))=\tilde{\nu}(A)$. Therefore, as
     $(\mathbb{Q}(A_n))_{n\in \N}\subseteq \mathcal{D}(X)$ satisfies $\lim_{n\to\infty}\mathbb{Q}(A_n)=\mathbb{Q}(A)\in \mathcal{D}(X)$ and $(g_X)_{X\in \mathcal{X}}$ satisfies (b),    $$\lim_{n\to\infty}\tilde{\nu}(A_n)=\lim_{n\to\infty}g_X(\mathbb{Q}(A_n))=g_X(\mathbb{Q}(A))=\tilde{\nu}(A).$$
     Therefore, $\tilde{\nu}$ is upward continuous. A similar argument will show that $\tilde{\nu}$ is downward continuous. Therefore, $\tilde{\nu}$ is continuous.

    Since $g$ is continuous and strictly increasing, there exists a continuous strictly increasing distortion function $h$ such that $h(g(x))=g(h(x))=x$ for all $x\in [0,1]$. Define the capacity $\nu=h\circ \tilde{\nu}$. Clearly, $\nu$ is a continuous capacity. For all $A\in \mathcal{G}$, $g_A(\mathbb{Q}(A))=g(\mathbb{Q}(A))$ as $\mathbb{Q}(A)\in \mathcal{D}(\id_A)$. Therefore, as $\mathbb{Q}$ is risk conforming, we have 
    $$\nu(A)=h(\tilde{\nu}(A))=h(g_A(\mathbb{Q}(A)))=h(g(\mathbb{Q}(A)))=\mathbb{Q}(A)=\mathbb{P}(A),~~A\in \mathcal{G}.$$
    Thus, $\nu$ is risk conforming.
    
    Let $u':u(\mathbb{R})\to \mathbb{R}$ be the inverse of $u$. Given $X\in \mathcal{X}$ and $x\in u(\mathbb{R})$, it holds that $X$ and $\id_{\{X>u'(x)\}}$ are comonotonic and $\mathbb{Q}(X>u'(x))\in \mathcal{D}(X)\cap \mathcal{D}(\id_{\{X>u'(x)\}})$. Since $(g_X)_{X\in \mathcal{X}}$ satisfies (c), we have  
    \begin{align*}
        g_X(\mathbb{Q}(u(X)>x))=g_X(\mathbb{Q}(X>u'(x)))&=g_{\{X>u'(x)\}}(\mathbb{Q}(X>u'(x)))\\&=\tilde{\nu}(X>u'(x))=g(\nu(X>u'(x)))=g(\nu(u(X)>x)).
    \end{align*}
    Therefore,
    $$X\succsim Y\iff \int_{\Omega}u(X)\d(g\circ \nu)\geq \int_{\Omega}u(Y)\d(g\circ \nu),~~X,Y\in \mathcal{X}.$$
    Thus, $\succsim$ is a CRDU preference relation. 
    
    To show that $\succsim$ satisfies Property (NSC),  fix $A,B\in \mathcal{F}$ satisfying $\mathbb{Q}(A\triangle B)=0$. Therefore, $\mathbb{Q}(A\cap B)=\mathbb{Q}(A)$.
    Furthermore, we have that $\id_{A\cap B}$ and $\id_{A}$ are comonotonic and $\mathcal{D}(\id_{A\cap B})=\mathcal{D}(\id_A)=\{0,\alpha,1\}$, where $\alpha=\mathbb{Q}(A\cap B)=\mathbb{Q}(A)$. Therefore, since the distortion family $(g_X)_{X\in \mathcal{X}}$ satisfies (c),
    $$\nu(A\cap B)=g_{A\cap B}(\mathbb{Q}(A\cap B))=g_{A\cap B}(\alpha)=g_A(\alpha)=g_A(\mathbb{Q}(A))=\nu(A).$$
    Similary, one can show that $\nu(A\cap B)=\nu(B)$. Therefore, $\nu$ is $\mathbb{Q}$-consistent and, by Proposition \ref{prop:nullity}, $\succsim$ satisfies Property (NSC).

    The uniqueness claims are straightforward to show. 
\end{proof}

It is not immediately clear why \eqref{eq:secondRep2} implies that $\succsim$ satisfies Property (NSC). Therefore, we have the following proposition to assist in the sanity check of Theorem \ref{theo:3}.

\begin{proposition}
    \label{prop:sanity}
    Assume that $(g_X)_{X\in \mathcal{X}}$ is a distortion family. For all $X,Y\in \mathcal{X}$ satisfying $X=^{\mathrm{as}}_{\mathbb{Q}}Y$, it holds that $\mathcal{D}(X)=\mathcal{D}(Y)$ and $g_X(\alpha)=g_Y(\alpha)$ for all $\alpha\in \mathcal{D}(X)$.
\end{proposition}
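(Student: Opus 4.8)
The statement has two parts: the equality of value sets $\mathcal{D}(X)=\mathcal{D}(Y)$, and the pointwise agreement $g_X(\alpha)=g_Y(\alpha)$ for $\alpha\in\mathcal{D}(X)$. The first part is quick. Since $X=^{\mathrm{as}}_{\mathbb{P}}Y$ means $\mathbb{P}(\{X\neq Y\})=0$, and for every $x\in\mathbb{R}$ we have $\{X>x\}\triangle\{Y>x\}\subseteq\{X\neq Y\}$, it follows that $\mathbb{P}(X>x)=\mathbb{P}(Y>x)$ for all $x$; taking the set of these values over $x\in\mathbb{R}$ gives $\mathcal{D}(X)=\mathcal{D}(Y)$.

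For the second part, fix $\alpha\in\mathcal{D}(X)$. The cases $\alpha\in\{0,1\}$ are trivial since every distortion function satisfies $g(0)=0$ and $g(1)=1$, so assume $\alpha\in(0,1)$. One cannot apply property (c) of a distortion family directly to $X$ and $Y$: comonotonicity is a pointwise condition, while $X=^{\mathrm{as}}_{\mathbb{P}}Y$ only controls the two acts off a $\mathbb{P}$-null set, so $X$ and $Y$ need not be comonotonic. This is the main (indeed essentially the only) obstacle, and the plan is to circumvent it by routing through indicator acts. Choose $x\in\mathbb{R}$ with $\mathbb{P}(X>x)=\alpha$, possible by the definition of $\mathcal{D}(X)$; by the first part, $\mathbb{P}(Y>x)=\alpha$ as well. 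Put $A=\{X>x\}$ and $B=\{Y>x\}$, so $A\triangle B\subseteq\{X\neq Y\}$ gives $\mathbb{P}(A\triangle B)=0$ and hence $\mathbb{P}(A)=\mathbb{P}(B)=\mathbb{P}(A\cap B)=\alpha$.

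The transfer then proceeds by three uses of property (c). First, $X$ is comonotonic with $\id_A=\id_{\{X>x\}}$, the indicator of its own strict upper level set, and $\alpha\in\mathcal{D}(X)\cap\mathcal{D}(\id_A)$, so $g_X(\alpha)=g_A(\alpha)$; symmetrically $g_Y(\alpha)=g_B(\alpha)$. Second, since $A\cap B\subseteq A$, the indicators $\id_{A\cap B}$ and $\id_A$ are comonotonic, and $\alpha=\mathbb{P}(A\cap B)=\mathbb{P}(A)$ belongs to $\mathcal{D}(\id_{A\cap B})\cap\mathcal{D}(\id_A)$, so $g_{A\cap B}(\alpha)=g_A(\alpha)$; using $A\cap B\subseteq B$ in the same way gives $g_{A\cap B}(\alpha)=g_B(\alpha)$. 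Chaining the four equalities, $g_X(\alpha)=g_A(\alpha)=g_{A\cap B}(\alpha)=g_B(\alpha)=g_Y(\alpha)$. The only points needing a line of justification in the final write-up are the two elementary comonotonicity facts just invoked (a bounded measurable function is comonotonic with the indicator of any of its own strict upper level sets, and indicators of nested events are comonotonic); properties (a) and (b) of the distortion family are not used, only (c).
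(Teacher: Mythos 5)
Your proof is correct and follows essentially the same route as the paper's: both pass from $X$ and $Y$ to the indicators of the level sets $A=\{X>x\}$ and $B=\{Y>x\}$, bridge them via $\id_{A\cap B}$ using property (c) on nested indicators, and chain the resulting equalities. The only (harmless) cosmetic differences are your explicit handling of $\alpha\in\{0,1\}$ and your remark explaining why (c) cannot be applied to $X$ and $Y$ directly.
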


\begin{proof}
    It is clear that $\mathcal{D}(X)=\mathcal{D}(Y)$. Let $\alpha\in \mathcal{D}(X)$, there exists $x\in \mathbb{R}$ such that $\alpha=\mathbb{Q}(X>x)$. Since $X=^{\mathrm{as}}_{\mathbb{Q}}Y$, it holds that $\alpha=\mathbb{Q}(Y>x).$ Let $A=\{X>x\}$ and $B=\{Y>x\}$. As $A\triangle B\subseteq \{X\neq Y\}$, $\mathbb{Q}(A\triangle B)=0$. Therefore, $\mathbb{Q}(A)=\mathbb{Q}(B)=\mathbb{Q}(A\cap B)=\alpha$. As $\id_{A\cap B}$ and $\id_{A}$ are comonotonic and $\mathcal{D}(\id_{A\cap B})=\mathcal{D}(\id_A)=\{0,\alpha,1\}$, by (c), we have
    $g_{A\cap B}(\alpha)=g_A(\alpha).$ Similarly, one can show that $g_{A\cap B}(\alpha)=g_B(\alpha)$. As $\id_A$ and $X$ are comonotonic and $\alpha\in \mathcal{D}(X)\cap \mathcal{D}(\id_A)$, by (c), we have $g_A(\alpha)=g_X(\alpha)$. Similarly, one can show that $g_B(\alpha)=g_Y(\alpha)$. Thus, by a chain of the above equalities, $g_X(\alpha)=g_Y(\alpha)$.
\end{proof}

\section{Proofs accompanying Section \ref{sec:dual}}
\label{app:dual}

\begin{proof}[Proof of Theorem \ref{theo:dual}]
    Let $\succsim$ be a preference relation satisfying Axioms (RC), (M), (SRM), (C), and (CI). By Proposition \ref{prop:cert}, for each $X\in \mathcal{X}$, there exists a unique $c_X\in \mathbb{R}$ satisfying $X\simeq c_X$. Define $V:\mathcal{X}\to \mathbb{R}$ by $V(X)=c_X$. Therefore,
    $$X\succsim Y\iff V(X)\geq V(Y), ~~X,Y\in \mathcal{X}.$$
    
    Let $X,Y\in \mathcal{X}$ be comonotonic. As $X,Y,V(X)$ are pairwise comonotonic, $Y,V(Y),V(X)$ are pairwise comonotonic, $X\simeq V(X)$, and $Y\simeq V(Y)$, by Axiom (CI), we have
    $$(1/2)X+(1/2)Y\simeq (1/2)V(X)+(1/2)Y\simeq (1/2)V(X)+(1/2)V(Y).$$
    Therefore, $V((1/2)X+(1/2)Y)=(1/2)V(X)+(1/2)V(Y)$. Taking $X=2Z$ (for $Z\in \mathcal{X}$) and $Y=0$, we see that $V(2Z)=2V(Z)$. Putting this all together, we get 
    $$V(X+Y)=V((1/2)2X+(1/2)2Y)=(1/2)V(2X)+(1/2)V(2Y)=V(X)+V(Y).$$
    Moreover, by Axiom (M), for all $X,Y\in \mathcal{X}$ satisfying $X\geq Y$, $V(X)\geq V(Y)$. As discussed in Section \ref{sec:def}, there exists a capacity $\nu$ such that $V(X)=\int_\Omega X\d\tilde{\nu}$ for all $X\in \mathcal{X}.$

    We claim that $\tilde{\nu}$ is continuous. Let $(A_n)_{n\in \N}\subseteq \mathcal{F}$ be an increasing sequence and define $A=\bigcup_{n\in \N}A_n$. Since $\id_{A_n}\to \id_A$ pointwise, by Proposition \ref{prop:cert},$$\lim_{n\to\infty}\nu(A_n)=\lim_{n\to\infty}\int_{\Omega}\id_{A_n}\d\tilde{\nu}=\lim_{n\to\infty}c_{\id_{A_n}}=c_{\id_A}=\nu(A).$$
    Therefore, $\tilde{\nu}$ is upward continuous. A similar argument will show that $\tilde{\nu}$ is downward continuous. Therefore $\tilde{\nu}$ is continuous. 

    Since $\succsim$ satisfies Axiom (RC), if $A,B\in \mathcal{G}$ satisfy $\mathbb{P}(A)=\mathbb{P}(B)$, then
    $$A\simeq B\implies V(\id_A)=V(\id_B)\implies \tilde{\nu}(A)=\tilde{\nu}(B).$$
    Thus, there exists a distortion function $g$ such that 
    $\tilde{\nu}(A)=g(\mathbb{P}(A))$ for all $A\in \mathcal{G}$. Since $\succsim$ satisfies Axiom (SRM), $g$ is strictly increasing. Since $\tilde{\nu}$ is continuous and $(\Omega,\mathcal{G},\mathbb{P})$ is atomless, $g$ is continuous. Let $h$ be the strictly increasing continuous distortion function that is the inverse of $g$. Define the capacity $\nu=h\circ \tilde{\nu}$. It is straightforward to show that $\nu$ is a continuous risk-conforming capacity satisfying $g\circ \nu=\tilde{\nu}$. 

    The converse is straightforward, and the uniqueness claims follow from Theorem \ref{theo:good}. 
\end{proof}

\section{Modularity results for lattices}
\label{ap:mod}

This appendix presents a technical result showing that on a general lattice, a distorted supermodular function is supermodular when the distortion function is increasing and convex. This result is used in the proof of Theorem \ref{prop:main}. 

A lattice is a partially ordered set $(\mathfrak{X},\leq)$ in which every pair $x,y\in\mathfrak{X}$ admits a least upper bound $x\vee y$ and a greatest lower bound $x\wedge y$. A function $\varphi:\mathfrak{X}\to\mathbb{R}$
is called \emph{supermodular} if for all $x,y\in\mathfrak{X}$, $\varphi(x\vee y)+\varphi(x\wedge y)\geq \varphi(x)+\varphi(y),$  \emph{submodular} if for all $x,y\in\mathfrak{X}$, $\varphi(x\vee y)+\varphi(x\wedge y)\leq \varphi(x)+\varphi(y),$ \emph{increasing} if for all $x,y\in\mathfrak{X}$ with $x\geq y$,
$\varphi(x)\geq \varphi(y),$ and \emph{decreasing} if for all $x,y\in\mathfrak{X}$ with $x\geq y$,
$\varphi(x)\leq \varphi(y).$

\begin{proposition}
    \label{prop:latt}
  Let $(\mathfrak{X},\leq)$ be a lattice, $a,b\in \mathbb{R}$ with $a<b$, $\varphi:\mathfrak{X}\to[a,b]$ be monotonic and supermodular (submodular), and $f:[a,b]\to\mathbb{R}$ be increasing and convex (concave). It holds that $g\circ\nu$ is supermodular (submodular).  
\end{proposition}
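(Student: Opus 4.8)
The plan is to reduce the assertion to an elementary one-variable inequality and then feed in the lattice structure. The key lemma I would establish first is the following: if $p\le q\le r\le s$ are real numbers with $s-r\ge q-p$ and $f$ is increasing and convex on an interval containing them, then $f(s)-f(r)\ge f(q)-f(p)$; dually, if instead $s-r\le q-p$ and $f$ is increasing and concave, then $f(s)-f(r)\le f(q)-f(p)$.

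For the convex case of the lemma I would invoke the standard fact that, for a convex $f$ and a fixed $h\ge 0$, the map $t\mapsto f(t+h)-f(t)$ is nondecreasing. Taking $h=q-p\ge 0$ and using $p\le r$ gives $f\big(r+(q-p)\big)-f(r)\ge f(q)-f(p)$; since $r+(q-p)\le s$ and $f$ is increasing, $f(s)-f(r)\ge f\big(r+(q-p)\big)-f(r)$, and chaining the two inequalities proves the claim. The concave case is the mirror image: now $t\mapsto f(t+h)-f(t)$ is nonincreasing, so with $h=s-r\ge 0$ and $p\le r$ we get $f(s)-f(r)\le f\big(p+(s-r)\big)-f(p)$, and since $p+(s-r)\le q$ and $f$ is increasing, this is at most $f(q)-f(p)$.

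Next I would carry out the reduction. Fix $x,y\in\mathfrak{X}$ and consider the four numbers $\varphi(x\wedge y)$, $\varphi(x)$, $\varphi(y)$, $\varphi(x\vee y)$, all in $[a,b]$. Since $x\wedge y\le x,y\le x\vee y$ and $\varphi$ is monotonic (whether increasing or decreasing), $\varphi(x)$ and $\varphi(y)$ both lie between $\varphi(x\wedge y)$ and $\varphi(x\vee y)$; hence, writing $p\le q\le r\le s$ for the four values sorted into increasing order, we may label so that $\{p,s\}=\{\varphi(x\wedge y),\varphi(x\vee y)\}$ and $\{q,r\}=\{\varphi(x),\varphi(y)\}$. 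Supermodularity of $\varphi$ says $\varphi(x\vee y)+\varphi(x\wedge y)\ge\varphi(x)+\varphi(y)$, i.e. $s+p\ge q+r$, i.e. $s-r\ge q-p$; the lemma then gives $f(s)-f(r)\ge f(q)-f(p)$, which rearranges to $f(\varphi(x\vee y))+f(\varphi(x\wedge y))\ge f(\varphi(x))+f(\varphi(y))$, so $f\circ\varphi$ is supermodular. In the submodular case the reversed inequality $s+p\le q+r$ holds, $f$ is concave, and the concave half of the lemma yields submodularity of $f\circ\varphi$ in the same way. (I read the conclusion as ``$f\circ\varphi$'', which is how the result is applied in the proof of Theorem \ref{prop:main} with $f=g$ the distortion and $\varphi=\nu$ the matching probability.)

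There is no substantial obstacle: essentially all the content lives in the one-variable increment lemma, which is routine. The only point needing care is the role of monotonicity of $\varphi$ — it is precisely what guarantees that $\varphi(x\wedge y)$ and $\varphi(x\vee y)$ occupy the two extreme positions among the four values, so that the super/submodularity hypothesis translates into the correct sign of $(s-r)-(q-p)$ required to invoke the lemma.
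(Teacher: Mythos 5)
Your proof is correct and is essentially the paper's argument in different packaging: both reduce to the scalar fact that if $p\le q,r\le s$ lie in $[a,b]$ with $p+s\ge q+r$ and $f$ is increasing and convex, then $f(p)+f(s)\ge f(q)+f(r)$, with monotonicity of $\varphi$ serving only to place $\varphi(x\wedge y)$ and $\varphi(x\vee y)$ at the extremes. The paper proves that scalar inequality by writing $\varphi(u)$ as a convex combination $\lambda\varphi(u\wedge v)+(1-\lambda)\varphi(u\vee v)$ and using monotonicity of $f$ to absorb the supermodularity slack, whereas you use the equivalent increasing-increments characterization of convexity; both are fine.
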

 
\begin{proof}
    Assume that $\varphi$ is supermodular and $f$ is convex. For $u,v\in \mathfrak{X}$, define
    $z=\varphi(u\wedge v)+\varphi(u\vee v)-\varphi(u)-\varphi(v)\geq 0.$
    Note that
    $$a\leq \varphi(v)+z=\varphi(u\wedge v)+\varphi(u\vee v)-\varphi(u)\leq \begin{cases}
        \varphi(u\vee v)&\text{if }\varphi\text{ is increasing}\\ \varphi(u\wedge v)&\text{if }\varphi\text{ is decreasing}
    \end{cases}\leq b.$$
    As $\varphi$ is monotonic, there exists $\lambda\in [0,1]$ such that
    $$\varphi(u)=\lambda \varphi(u\wedge v)+(1-\lambda)\varphi(u\vee v).$$
    Therefore,
    \begin{align*}
        f(\varphi(u\wedge v))+f(\varphi(u\vee v))&\geq f(\lambda\varphi(u\wedge v)+(1-\lambda)\varphi(u\vee v))+f((1-\lambda)\varphi(u\wedge v)+\lambda\varphi(u\vee v))\\&=f(\varphi(u))+f(\varphi(v)+z)\\&\geq f(\varphi(u))+f(\varphi(v)).
    \end{align*}
    A similar proof will show the claim when $\varphi$ is submodular and $f$ is concave.
\end{proof}

{\small

}

\end{document}